\documentclass{article}
\usepackage[margin=1in]{geometry}
\usepackage{subfiles}
\usepackage[nobysame]{amsrefs}
\usepackage{amssymb}
\usepackage{amsthm}
\usepackage{amsmath}
\usepackage{amsfonts}
\usepackage[dvipsnames]{xcolor}
\usepackage{enumerate}
\usepackage{graphicx}
\usepackage{multicol}
\usepackage{verbatim}
\usepackage{hyperref}
\usepackage{amstext}
\usepackage{comment}
\usepackage{algpseudocode}
\usepackage{algorithm}
\usepackage{tikz}
\usetikzlibrary{arrows}
\usetikzlibrary{positioning}
\tikzset{main node/.style={circle,fill=white,draw,minimum size=0.5cm,inner sep=0pt},
}

\bibliographystyle{plain} 

\newtheorem{theorem}{Theorem}[section] 
\newtheorem{proposition}[theorem]{Proposition}
\newtheorem{lemma}[theorem]{Lemma}

\theoremstyle{definition}
\newtheorem{definition}[theorem]{Definition}

\theoremstyle{remark}
\newtheorem{remark}[theorem]{Remark}
\newtheorem{example}[theorem]{Example}

\title{Computation of Minimum Numbers of Tile and Bond-Edge Types for DNA Self-Assembly of Select Archimedean Graphs}
\author{Tabitha Merrithew, Jessica Sorrells \\ \small Converse University}

\begin{document}

\maketitle

\begin{abstract} This project mathematically models the self-assembly of DNA nanostructures in the shape of select Archimedean graphs using the flexible tile model. Under three different sets of restrictions called scenarios, we employ principles of linear algebra and graph theory to determine the minimum number of different DNA branched molecules and bond types needed to construct the desired shapes, theoretically reducing laboratory costs and the waste of biomaterials. We determine exact values for $T_3(G)$, the minimum number of molecule (or ``tile") types needed for all six order 12 and 24 Archimedean graphs. We also determine exact values for $B_3(G)$, the minimum number of strand (or ``bond-edge") types, for three of the six graphs and establish bounds for the remaining three. Two algorithms, implemented as Python scripts, are used to analyze proposed design strategies for the graphs. 
\end{abstract}

\section{Introduction}  

DNA nanostructures were first introduced in 1991 in the laboratory of Nadrian Seeman, where branched DNA molecules with sticky ends were ``ligated together" to construct a DNA nanocube \cite{nanocube}. Such nanostructures are now used in a variety of niche biological fields. For example, multidimensional DNA probes are used as biosensors and in bioimaging \cite{probes}. Other DNA nanostructures have been used to boost the immune system's ability to target and degrade cancer-related factors in cells \cite{cancer}. Tetrahedral DNA nanostructures, in particular, have versatile applications, including disease diagnosis through microRNA detection and aiding in the movement of stem cells to promote tissue regeneration \cite{microRNA, stemcell}. Given the scale, it is difficult and costly to manually direct the construction of these DNA objects. As a result, laboratories have turned to a self-assembly process paired with strategic design of component molecules. Self-assembled DNA structures resembling octahedra \cite{rothemund2006folding, zhang1994construction}, tetrahedra, and dodecahedra \cite{he2008hierarchical} have been achieved with these methods. Indeed, the structures of interest to laboratories are often skeletons of polyhedra, meshes, or other formations that can naturally be represented as discrete graphs. 

In the early 2000s, Jonoska et al. developed a graph theoretical formalism for describing DNA self-assembly, thus allowing for mathematical design of the idealized molecular building blocks (called ``tiles") capable of self-assembling into various geometric structures \cite{jonoska, jonoska2006spectrum}. These molecules are assumed to have flexible sticky ends that can bond to any other sticky end with a complementary sequence of base pairs (forming a ``bond-edge"). Ellis Monaghan et al. expanded upon this ``flexible tile model" and contextualized the framework into three different sets of constraints called ``scenarios" \cite{mintiles, ellis2019tile}.  Establishing optimal design strategies within each scenario can theoretically reduce waste of biomaterials. Discussions of computational complexity when computing within this graph theoretical model can be found in \cite{almodovar2021complexity, jonoska, JM09}. A broader overview of the collection of problems that have been generated by DNA self-assembly, referred to as ``DNA Mathematics," is given in \cite{Ellis_Next40}.

Here, we work within the flexible tile model, finding optimal designs for the skeletons of the six smallest Archimedean solids under the most strict set of conditions (``Scenario 3"). Hence, we seek to determine values of the parameters called $B_3$ and $T_3$, the minimum numbers of bond-edge and tile types in this scenario. We reproduce details of the flexible tile model and these parameters in the remainder of this section. This study of Archimedean graphs contributes to the knowledge of $k$-regular graphs in the flexible tile model. The polyhedral structure of these graphs makes them likely candidates for biological applications, such as drug delivery, as they can serve as a nanocage for pharmaceutical molecule cargo \cite{applications}. In particular, the truncated octahedron has been shown to have unique cell-targeting abilities, as it is geometrically similar to some molecules that bond to cell receptors \cite{oxLDL}. In Section \ref{sec:methods} we give an overview of methods on which we rely to devise design strategies, including two Python scripts and a characteristic, referred to as ``unswappable," of some graphs within this model. In Sections \ref{unswap} and \ref{swap} we compute exact values for $T_3$ for all six graphs, exact values for $B_3$ for three of the graphs, and ranges for $B_3$ for the remaining three graphs. Our findings are summarized in Table 1. 
\begin{table}[hbt!]
    \centering
    \begin{tabular}{| l || c | c | c |}
    \hline
    \multicolumn{4}{|c|}{Unswappable} \\ \hline 
     & Cuboctahedron & Sm. Rhombicuboctahedron & Snub Cube \\ \hline 
    $B_3(G)$ & 8 & $16 \leq B_3(G) \leq 17$ & $16$  \\ \hline
    $T_3(G)$ & 12 & $24 = \#V(G)$ & $24 = \#V(G)$ \\ \hline \hline
     \multicolumn{4}{|c|}{Swappable} \\ \hline 
     & Trunc. Tetrahedron & Trunc. Cube & Trunc. Octahedron \\ \hline
    $B_3(G)$ & 8 & $18 \leq B_3(G) \leq 21$ & $13 \leq B_3(G) \leq 18$  \\ \hline
    $T_3(G)$ & $12 = \#V(G)$ & $24=\#V(G)$ & 18 \\ \hline
    \end{tabular}
    \label{table1}
    \caption{Results for select Archimedean graphs}
    \end{table}

\subsection{The Flexible Tile Model} 
We work within the flexible tile model as described by Ellis-Monaghan et al. \cite{mintiles} and use the graph theoretical formalism of \cite{ellis2019tile}. For the convenience of the reader, we reproduce many of the essential definitions and results found in \cite{ellis2019tile} (see also \cite{jonoska2006spectrum}). At the basis of this model are $k$-armed branched junction molecules. A \emph{k-armed branched junction molecule} is a star-shaped molecule whose arms are formed from DNA strands.  At the end of each of these arms is what biologists often call a \emph{sticky end}, which is a set of unpaired nucleotide bases that can then bond to complementary ends on another molecule. For our purposes, we think of branched junction molecules as the simplest case, in which the arms are double stranded DNA with one strand extending beyond the other, as shown in Figure \ref{introfig1} (see \cite{SK94}).  This results in a relatively flexible molecule, and thus we assume no geometric restrictions with regard to angles or arm lengths. For a model that considers such restraints, see \cite{ferrari2018}. 

    \begin{figure}[hbt!]
        \centering
        \includegraphics[width=0.2\linewidth]{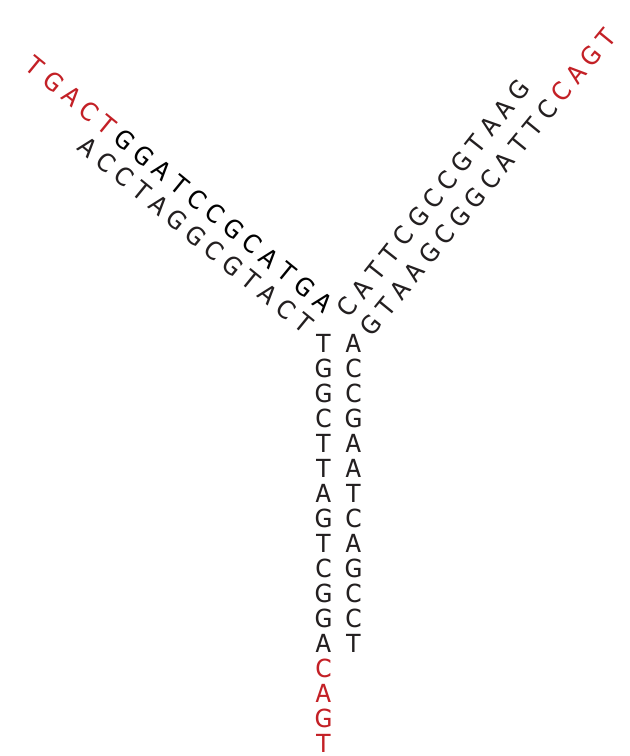}
        \caption{$k$-armed branched junction molecule}
        \label{introfig1}
    \end{figure}

The flexible tile model represents $k$-armed branched junction molecules as building blocks of discrete graphs. A discrete graph $G$ consists of a set $V=V(G)$ of vertices and a set $E=E(G)$ of edges together with a map $\mu:E \rightarrow V^{(2)}$ where $V^{(2)}$ is the set of unordered pairs of elements of $V$. In the context of this model, we allow for consideration of loop and multiple-edges. Throughout we use the notation $\#$ to denote cardinality, e.g. $\#V(G)$ will denote the order of $G$.

The region of unpaired bases at the end of an arm of a branched junction molecule is referred to as a \emph{cohesive-end}, and these various sequences of unpaired bases are represented with letter labels corresponding to half-edges incident with a vertex.

\begin{definition}
A \emph{cohesive-end type} is an element of a finite set $S=\Sigma \cup \hat{\Sigma} $, where $\Sigma$ is the set of hatted symbols and $\hat{\Sigma}$ is the set of un-hatted symbols.  Each cohesive-end type corresponds to a distinct arrangement of bases forming a cohesive-end on the end of a branched junction molecule arm, such that a hatted and an un-hatted symbol, say $a$ and $\hat{a}$, correspond to complementary cohesive-ends.   
\end{definition}

\begin{definition}
A cohesive-end type joined to its complement forms a \emph{bond-edge type}, which we identify by the un-hatted symbol, so for example, cohesive-end types $a$ and $\hat{a}$ will join to form a bond-edge of type $a$.
\end{definition}

The centers of $k$-armed branched junction molecules correspond to vertices, creating a representation of a molecule as a unit referred to as a \textit{tile}.

\begin{definition} A \emph{tile} is a graph-theoretical representation of a $k$-armed branched junction molecule as a vertex incident with $k$ half-edges (see Figure \ref{pot}). Tiles are denoted as multi-sets of cohesive-end types, with exponents indicating the number of that particular cohesive-end type.
\end{definition}
    
\begin{definition} A \emph{pot} $P$ is a set of tile types such that for each cohesive-end of type $h$ that appears in any tile $t_i \in P$, there exists a cohesive-end of type $\hat{h}$ (its complement) in some tile $t_j \in P$ (possibly $i = j$) and vice versa.
\end{definition}

    \begin{figure}[hbt!]
        \centering
        \includegraphics[width=0.7\linewidth]{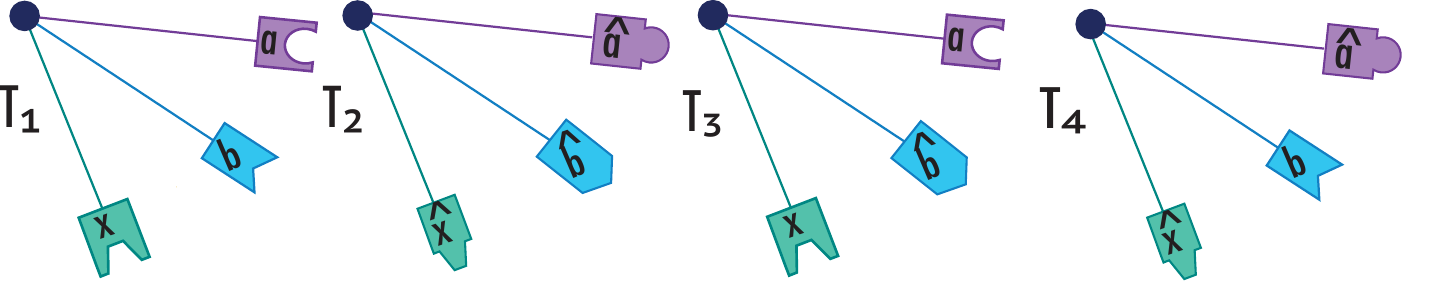}
        \caption{A pot of four tile types; $P = \{\{a,b,x\},\{\hat{a},\hat{b},\hat{x}\},\{a,\hat{b},x\},\{\hat{a},b,\hat{x}\}\}$.}
        \label{pot}
    \end{figure}

 We write $\#P$ to denote the number of distinct tile types in $P$. We use $\Sigma(P)$ to denote the set of bond-edge types that appear in tile types of $P$ and $\#\Sigma(P)$ to denote the number of such bond-edge types. We will often visually represent distinct bond-edge types with different colors; indeed, the theory of edge-coloring has been previously applied to this model (see \cite{BF2020}). If all vertices and edges of $G$ can be labeled using a subset of $P$, it is said that $P$ \emph{realizes} $G$. To configure $G$ from $P$, we utilize a method called \emph{assembly design}.

\begin{definition} An \emph{assembly design} of a graph $G$ is a labeling $\lambda$: $H \to \Sigma \cup \hat{\Sigma}$ of the half-edges of $G$ with the elements of $\Sigma$ and $\hat{\Sigma}$ such that if $e \in E(G)$ and $\mu(e) = \{u,v\}$, then $\widehat{\lambda(v,e)} = \lambda(u,e)$. This means that each edge of $G$ receives both a hatted and an un-hatted version of the same symbol, one on each of its half-edges. We use the convention that $\lambda$ provides each edge with an orientation that starts from the un-hatted half-edge to the hatted half-edge \cite{assemblydesign}.
\end{definition}

\begin{definition} The \emph{output of a pot $P$}, denoted $\mathcal{O}(P)$, is the set of all graphs realized by $P$. \end{definition}

While the graphs we study are undirected, an assembly design of a graph $G$ naturally creates an orientation for each edge of $G$. For brevity, we will write $\lambda(v,w)=a$ to denote that the edge $\{v,w\}$ is labeled using bond-edge type $a$, with cohesive-end type $a$ on the half-edge emanating from $v$ and $\hat{a}$ on the half-edge emanating from $w$.

For a given graph $G$, we consider three levels of restriction on a pot $P$ such that $G \in \mathcal{O}(P)$ \cite{mintiles}:
\begin{itemize}
    \item \emph{Scenario 1.} No restriction. Note that this allows the possibility that there exists $H \in \mathcal{O}(P)$ such that $\#V(H) < \#V(G)$. 
    \item \emph{Scenario 2.} The pot $P$ is such that for all $H \in \mathcal{O}(P)$, $\#V(H)\geq \#V(G)$. That is, no smaller order graphs may be realized by the pot, but graphs of equal order are allowable. 
    \item \emph{Scenario 3.} The pot $P$ is such that for all $H \in \mathcal{O}(P)$, $\#V(H)\geq\#V(G)$, and if $\#V(H)=\#V(G)$ then $H \cong G$. That is, no smaller-order or nonisomorphic graphs of equal order may be realized by the pot.
\end{itemize} 

We denote the \emph{minimum number of bond-edge types} and the \emph{minimum number of tile types} needed to realize $G$ in Scenario $i$ by $B_i(G)$ and $T_i(G)$, respectively.

\begin{example} The pot $P = \{\{a^3\},\{\hat{a}^3\},\{a^2,\hat{a}\},\{\hat{a}^2,a\}\}$ realizes $K_4$ as shown in Figure \ref{k4fig}. There exist graphs $H \in \mathcal{O}(P)$ such $\#V(H) < \#V(K_4)$ or such that $\#V(H)=\#V(K_4)$ but $H \not \cong K_4$. Examples are shown in Figure \ref{k4fig}. Thus, $P$ does not satisfy Scenario 2 (nor, necessarily, Scenario 3).

\begin{figure}[hbt!]
    \begin{minipage}{0.33\linewidth}
    \centering
    \begin{tikzpicture}
        \node[main node] (1) at (0,0) {$v_1$};
        \node[main node] (2) at (0,-1.7) {$v_2$};
        \node[main node] (3) at (-2,-3) {$v_3$};
        \node[main node] (4) at (2,-3) {$v_4$};
            \path[draw,thick]
            (1) edge node [near start,left]{$a$} (2)
            (2) edge node [near start,left]{$\hat{a}$} (1)
            (1) edge node [near start,left]{$a$} (3)
            (3) edge node [near start,left]{$\hat{a}$} (1)
            (1) edge node [near start,right]{$a$} (4)
            (4) edge node [near start,right]{$\hat{a}$} (1)
            (4) edge node [near start,below]{$\hat{a}$} (2)
            (2) edge node [near start,below]{$a$} (4)
            (4) edge node [near start,below]{$\hat{a}$} (3)
            (3) edge node [near start,below]{$a$} (4)
            (3) edge node [near start,below]{$a$} (2)
            (2) edge node [near start,below]{$\hat{a}$} (3);
    \end{tikzpicture}
\end{minipage}%
\begin{minipage}{0.33\linewidth}
    \centering
    \begin{tikzpicture}
        \node[main node] (a) at (-1.5,0) {$v_1$};
        \node[main node] (b) at (1.5,0) {$v_2$};
            \path[draw,thick]
            (a) edge node [near start,above]{$a$} (b)
            (b) edge node [near start,above]{$\hat{a}$} (a)
            [bend left=40] (a) edge node [near start,above]{$a$} (b)
            [bend right=40] (b) edge node [near start,above]{$\hat{a}$} (a)
            [bend right=40] (a) edge node [near start,above]{$a$} (b)
            [bend left=40] (b) edge node [near start,above]{$\hat{a}$} (a);
    \end{tikzpicture}
\end{minipage}%
\begin{minipage}{0.33\linewidth}
\centering
    \begin{tikzpicture}
        \node[main node] (1) at (-1,1) {$v_1$};
        \node[main node] (2) at (1,1) {$v_2$};
        \node[main node] (3) at (-1,-1) {$v_3$};
        \node[main node] (4) at (1,-1) {$v_4$};
            \path[draw,thick]
            (1) edge node [near start,below]{$a$} (2)
            (2) edge node [near start,below]{$\hat{a}$} (1)
            (1) edge node [near start,left]{$a$} (3)
            (3) edge node [near start,left]{$\hat{a}$} (1)
            (3) edge node [near start,above]{$a$} (4)
            (4) edge node [near start,above]{$\hat{a}$} (3)
            (4) edge node [near start,right]{$\hat{a}$} (2)
            (2) edge node [near start,right]{$a$} (4)
            [bend left=30] (1) edge node [near start,above]{$a$} (2)
            [bend right=30] (2) edge node [near start,above]{$\hat{a}$} (1)
            [bend right=30] (3) edge node [near start,below]{$a$} (4)
            [bend left=30] (4) edge node [near start,below]{$\hat{a}$} (3);
    \end{tikzpicture}
\end{minipage}
\caption{$K_4$ (left), smaller order graph (center), and nonisomorphic graph of equal order (right) as realized by $P = \{\{a^3\},\{\hat{a}^3\},\{a^2,\hat{a}\},\{\hat{a}^2,a\}\}$.}
\label{k4fig}
\end{figure}
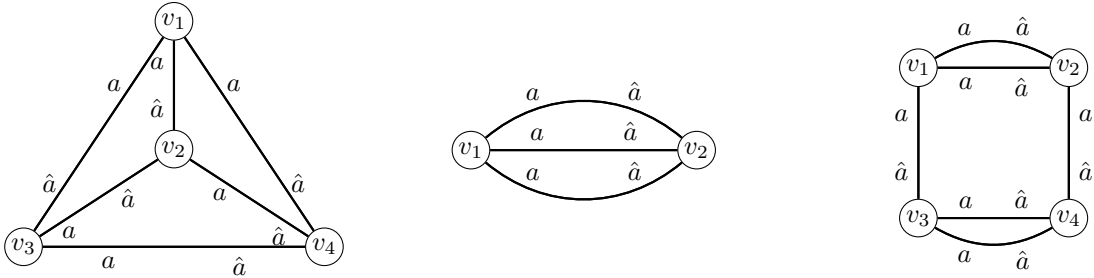
\end{example}

Bounds or exact values of $B_i(G)$ and $T_i(G)$ have already been determined for a variety of graph families, such as cycle graphs, complete graphs, trees, bipartite graphs, wheel graphs, gear graphs, and select sizes of lattice graphs \cite{almodovar2021complexity, mintiles, wheelgraphs, geargraphs}. Values for the skeletons of the Platonic solids, another group of $k$-regular polyhedral graphs, are computed in \cite{latticeandplatonic}. The skeletons of Arichmedean solids, referred to as the Archimedean graphs, are all 3, 4, or 5-regular. Hence, values for Scenario 1 are established in \cite{mintiles}. Specifically, $B_1(G)=1$ for all $k$-regular graphs, $T_1(G)=1$ for even values of $k$, and $T_1(G)=2$ for odd values of $k$. Little is known about the Archimedean graph family in Scenario 2. An upper bound for $B_2$ can be established for all Archimedean graphs by way of the result in \cite{BF2020} that $B_2(G) \leq \lceil n/2 \rceil + 1$ for all Hamiltonian graphs $G$ of order $n$. A pot besting this upper bound is given in \cite{ashworth} for the truncated tetrahedron, as it is shown that $B_2(G) \leq 3$ and $T_2(G)=4$ for this graph.

An example is given in \cite{almodovar2021complexity} to illustrate that there exist graphs $G$ for which the values of $B_3(G)$ and $T_3(G)$ cannot be achieved simultaneously by a pot, and in \cite{ferrari2022non} a pot that achieves both minimums is referred to as a \emph{biminimal} pot. In Sections \ref{unswap} and \ref{swap} we find bounds or exact values for $B_3$ and $T_3$ for the six Archimedean graphs of order 12 or 24. We provide exact values that correspond to biminimal pots for the cuboctahedron, snub cube, and truncated tetrahedron. For the small rhombicuboctahedron, truncated cube, and truncated octahedron, we only determine bounds for $B_3$. However, we note for the small rhombicuboctahedron and the truncated cube that any pot $P$ such that $\#\Sigma(P)=B_3$ will necessarily be a biminimal pot. 

\section{Methods}\label{sec:methods}
\subsection{Substructure Realization Problem}
Determining whether a pot that realizes a given graph $G$ realizes any graphs of order less than $\#V(G)$ has been referred to as the ``Substructure Realization Problem" (SRP) \cite{almodovar2021complexity}. To aid in addressing this problem, a system of equations and corresponding matrix, referred to as the \emph{construction matrix}, is defined in \cite{mintiles}; we reproduce those defintions here.
\begin{definition} Let $P$ be a pot with $p$ tiles labeled $t_1,$...$t_p$, and let $z_{i,j}$ be the net number of cohesive-ends of type $a_i$ on the tile $t_j$, with unhatted letters counting positively and hatted letters counting negatively. Define $r_i$ to be the proportion of tile type $t_i$ used in the assembly process. The following augmented matrix, referred to as the \emph{construction matrix} of $P$, encodes a system of equations that captures the necessary conditions for realization of a graph from $P$.

$$M(P) = \begin{bmatrix}
    z_{1,1} & z_{1,2} & \dots & z_{1,p} & 0\\
    \vdots & \vdots &   & \vdots\\
    z_{m,1} & z_{m,2} & \dots & z_{m,p} & 0\\
    1 & 1 & \dots & 1 & 1
    \end{bmatrix}$$
\end{definition}

\begin{definition} The \emph{spectrum of a pot $P$} is denoted $\mathcal{S}(P)$ and represents the intersection of the solution space of $M(P)$ with $\mathbb{Q}_{\geq 0}^p$, that is, the set of all $p-$tuples representing valid ratios of the $p$ tile types. \end{definition}

When $\mathcal{S}(P)$ consists of a unique solution $\langle r_1,...,r_p \rangle$, it is easy to determine the smallest order of a graph in $\mathcal{O}(P)$, given by the least common denominator of the $r_i$'s.  When there are dimensions of freedom in $\mathcal{S}(P)$, this problem can become more challenging.

\begin{example} The construction matrix and spectrum of the pot $P = \{\{a,b\},\{a,\hat{b}\},\{\hat{a}^2,b\},\{\hat{a}^2,\hat{b}\}\}$ are given below.
$$M(P) = \begin{bmatrix}
    1 & 1 & -2 & -2 & 0\\
    1 & -1 & 1 & -1 & 0\\
    1 & 1 & 1 & 1 & 1
    \end{bmatrix}$$
$$\mathcal{S}(P) = \left\{ \left\langle 1/6 + r_4, 1/2 -r_4, 1/3 - r_4, r_4 \right\rangle \mid r_4 \in \mathbb{Q}_{\geq 0} \right\}$$
For example, suppose $r_4 = 1/6$. Then the resulting tile proportions, $\langle 1/3, 1/3, 1/6, 1/6 \rangle$ demonstrate that there exists $H \in \mathcal{O}(P)$ with $\#V(H)=6$. Specifically, these tile proportions can be used to construct a $2 \times 3$ lattice graph.
\end{example}

 It is shown in \cite{almodovar2021complexity} that the SRP is, in general, NP-hard. In \cite{ashworth}, the authors reframe the SRP as an integer program, giving an efficient algorithm implemented in Python called the ``Substructure Realization Problem Solver" (SRPS), for determining the smallest order graph realized by a given pot regardless of the number of dimensions of freedom in $\mathcal{S}(P).$ The script in \cite{ashworth} is integral to our work with Archimedean graphs, as the optimal pots we find have as many as seven degrees of freedom in their spectrums. Rather than check the requirements of Scenario 2 by hand, we utilize the SRPS script to compute the minimum order of graphs realized by each proposed pot.

\subsection{Bounds for Unswappable and $k$-Regular Graphs}
Some general theory for $k$-regular graphs in Scenario 3 is provided in \cite{baek}; two of the primary results hinge on a property, coined therein, of some DNA self-assembly graphs called ``unswappable." In any scenario of the flexible tile model, incident edges $\{v,w\}, \{v,u\}$ may always be labeled with the same bond-edge type $b$ as long as the orientations of those edges are the same with regard to $v$ (that is, $\lambda(v,w)=\lambda(v,u)=b$ or $\lambda(v,w)=\lambda(v,u)=\hat{b}$). In this case, if the two edges break and re-bond in the alternate fashion, the graph is obviously unchanged. If $\lambda(v,w)=\lambda(u,v)$, then a loop edge may be realized, constituting an immediate violation of Scenario 3. Essentially, a graph $G$ is unswappable in the flexible tile model if any exchange of two \textit{non}-incident edges of $G$ (a breaking and re-joining of two bond-edges with the same bond-edge type such that the four half-edges bond in the alternate way) results in a graph nonisomorphic to $G$. The following formal definition and results are reproduced from \cite{baek}.

\begin{definition} We say that a graph $G$ is \emph{unswappable} if, for all pairs of disjoint edges $\{u, v\}$ and $\{s, t\}$, $G$ is not isomorphic to $G' = G - \{\{u, v\},\{s, t\}\} + \{\{u, t\}, \{s, v\}\}$. Otherwise, the graph is called \emph{swappable}.
\end{definition}

\begin{theorem}\label{thm:B3lower} Let $G$ be unswappable, and let $K$ be a minimum vertex cover of $G$. Then, $B_3(G) \geq |K|$. \end{theorem}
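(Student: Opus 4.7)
The plan is to turn a Scenario~3 pot into a vertex cover of $G$. Fix a pot $P$ that realizes $G$ in Scenario~3 with $\#\Sigma(P) = B_3(G)$, and fix an assembly design $\lambda$ on $G$ produced by $P$. My goal is to exhibit a vertex cover of $G$ of size at most $\#\Sigma(P)$, which then forces $|K| \leq B_3(G)$.

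The first step is to show that any two \emph{disjoint} edges of $G$ must carry distinct bond-edge types under $\lambda$. Suppose for contradiction that $\{u,v\}$ and $\{s,t\}$ are disjoint and both of type $a$. The four cohesive ends at $u,v,s,t$ consist of two $a$'s and two $\hat{a}$'s, and, depending on which endpoints are hatted, exactly one of the two alternative re-pairings of these cohesive ends is valid. That re-pairing produces an assembly design of a graph $G' = G - \{\{u,v\},\{s,t\}\} + \{\{u,t\},\{s,v\}\}$ (or its alternate $G - \{\{u,v\},\{s,t\}\} + \{\{u,s\},\{v,t\}\}$, which is covered by the definition via the unordered pair $\{s,t\}$). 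The underlying tile multiset is unchanged, so $G' \in \mathcal{O}(P)$; unswappability forces $G' \not\cong G$, while $\#V(G')=\#V(G)$, contradicting Scenario~3.

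Consequently, for each bond-edge type $a \in \Sigma(P)$, the set $E_a \subseteq E(G)$ of edges of type $a$ is pairwise incident. In a simple graph, a pairwise-incident edge set is either contained in a \emph{star} (all edges sharing a common vertex) or, when $|E_a| = 3$, forms the edges of a \emph{triangle}. The second step is to rule out the triangle possibility. If $E_a$ were a triangle on $\{v_1,v_2,v_3\}$, then in either the cyclic orientation (each tile receives one $a$ and one $\hat{a}$ from the triangle) or any non-cyclic orientation (one tile receives $a,\hat{a}$, one receives $a,a$, one receives $\hat{a},\hat{a}$), at least one tile carries both an $a$ and an $\hat{a}$ half-edge from the triangle. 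Re-pairing the six cohesive ends within the triangle then produces, using the same tile multiset, a graph $H \in \mathcal{O}(P)$ on the same vertex set as $G$ but containing a loop (and possibly a multi-edge) at one of $v_1,v_2,v_3$. Since $G$ is simple, $H \not\cong G$ while $\#V(H) = \#V(G)$, again contradicting Scenario~3.

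Therefore every $E_a$ is a star, and we may choose a common vertex $v_a$ incident to every edge of $E_a$. The set $C = \{v_a : a \in \Sigma(P)\}$ is then a vertex cover of $G$ with $|C| \leq \#\Sigma(P) = B_3(G)$; since $K$ is a minimum vertex cover, $|K| \leq |C| \leq B_3(G)$. The main obstacle is the triangle-exclusion step: one must check, in each orientation pattern at the triangle, that the modified cohesive-end pairing really does yield a valid assembly design on the same tile multiset and that the resulting non-simple graph is genuinely not isomorphic to the simple graph $G$, thus witnessing a Scenario~3 violation.
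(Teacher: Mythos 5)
Your proof is correct and takes essentially the approach the paper intends: note that the paper does not prove this theorem itself but reproduces it from \cite{baek}, and its surrounding discussion in Section 2 sketches exactly your ingredients (two non-incident edges of the same bond-edge type permit the swap $G - \{\{u,v\},\{s,t\}\} + \{\{u,t\},\{s,v\}\}$, violating Scenario 3 by unswappability, while incident same-type edges with $\lambda(v,w)=\lambda(u,v)$ permit a realizable loop). Your star-versus-triangle analysis, with the triangle ruled out because any orientation of a triangle places an $a$ and an $\hat{a}$ at a common vertex and hence realizes a loop, correctly yields the ``source vertex'' structure the paper alludes to, and choosing one center per bond-edge type gives a vertex cover of size at most $\#\Sigma(P)=B_3(G)$, as required.
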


\begin{proposition}\label{thm:T3lower} Let $G$ be an unswappable graph and let $X$ be the number of distinct neighbor sets among all vertices (i.e. $X = |\{N(v): v\in V(G)\}|$). Then, $T_3(G) \geq X$. \end{proposition}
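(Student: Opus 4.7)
The plan is to reduce the proposition to the claim that in any pot $P$ realizing $G$ under Scenario 3, two vertices assigned the same tile type must have identical open neighborhoods. Given this claim, the lower bound $T_3(G) \geq X$ is immediate: if $\#P < X$, then since the $X$ distinct neighbor sets partition $V(G)$ into $X$ classes, some pair of vertices in different neighbor-set classes would be forced to share a tile type, contradicting the claim.

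To prove the claim, I will argue by contradiction. Suppose $u, v$ receive the same tile type yet $N(u) \neq N(v)$. Because they have the same degree, I can pick $w \in N(u) \setminus N(v)$; set $a = \lambda(u, \{u, w\})$, assuming WLOG (by replacing each cohesive-end type with its complement throughout if needed) that $a$ is unhatted. Since the multi-sets of cohesive-end labels at $u$ and $v$ coincide, there is a half-edge at $v$ labeled $a$, coming from some edge $\{v, w'\}$ with $\hat{a}$ at $w'$. I will then split into cases based on whether $u, v, w, w'$ are pairwise distinct.

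In the main case all four vertices are distinct, so $\{u, w\}$ and $\{v, w'\}$ are disjoint edges of bond-edge type $a$ with unhatted ends at $u$ and $v$ respectively. Re-bonding yields the graph $G' = G - \{\{u, w\}, \{v, w'\}\} + \{\{u, w'\}, \{v, w\}\}$, which is still realized by $P$ and satisfies $\#V(G') = \#V(G)$, so unswappability forces $G' \not\cong G$ and hence violates Scenario 3. The remaining cases $w = v$ and $u = w'$ cannot hold simultaneously, since together they would force a multi-edge between $u$ and $v$ in the simple graph $G$. If $w = v$, then the edges $\{u, v\}$ and $\{v, w'\}$ are incident at $v$ with complementary labels $\hat{a}$ (from $\{u,v\}$) and $a$ (from $\{v, w'\}$) at $v$; by the remark earlier in the paper, this configuration immediately permits a self-loop to form at $v$, violating Scenario 3. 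The case $u = w'$ is symmetric and produces a loop at $u$.

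I expect the main subtlety to lie in these coincidence cases, where the standard edge swap degenerates to an incident pair and one must invoke the paper's loop-formation observation rather than the unswappable hypothesis directly. Verifying exhaustiveness of the case split and that the ``simple graph'' hypothesis rules out the multi-edge configuration are the routine but essential bookkeeping steps.
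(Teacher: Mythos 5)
Your argument is correct: if two vertices receive the same tile type but have different neighborhoods, you either obtain two disjoint edges of the same bond-edge type whose re-bonding is excluded by unswappability under Scenario 3, or the incident-edge configuration with $a$ and $\hat{a}$ at a common vertex that the paper notes permits a loop, and the pigeonhole step then yields $T_3(G) \geq X$; the only coincidence you leave implicit, $w = w'$, is impossible since $w \notin N(v)$ while $w' \in N(v)$. The paper itself reproduces this proposition from \cite{baek} without giving a proof, and your swap/loop reasoning is exactly the argument sketched in the discussion surrounding the definition of unswappable, so this matches the intended approach.
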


As noted in \cite{baek}, when labeling an unswappable graph in Scenario 3, each new bond-edge type introduced must stem from the same vertex, which we will refer to as a \emph{source vertex}. The following from \cite{baek} provide an upper bound for $B_3(G)$. Note that the following do not rely on a graph having the ``unswappable" property, but they do require the graph to be $k$-regular.

\begin{definition}\label{def:neighbind} Let $G$ be a graph and $M \subseteq V(G)$. We consider the vector space $\mathbb{R}^K$ represented as maps $K \to \mathbb{R}$. For each vertex $v \in V(G) - K$, we construct the vector $\rho_v$ by taking, for every $w \in K$:
$$ \rho_v(w)=\left\{
\begin{array}{cc}
        1 & \text{if } \{v,w\} \in E(G) \\
        0 & \text{else }
\end{array}
\right.$$
We say that $M$ is \emph{neighborhood independent} if the set of vectors $\{\rho_v\}_{v\in V(G)-K}$ is linearly independent.
\end{definition}

\begin{theorem}\label{thm:B3upper} Suppose that $G$ is a $k$-regular graph and $K \subseteq V(G)$ such that: $K$ is a vertex cover for $G$, $K$ is neighborhood independent, and the induced subgraph on $K$ is 2-edge-connected. Then $B_3(G) \le |K|$.   
\end{theorem}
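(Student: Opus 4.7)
The plan is to exhibit an explicit assembly design of $G$ that uses exactly $|K|$ distinct bond-edge types and then verify that the resulting pot $P$ satisfies the Scenario~3 requirements. For each $v\in K$ introduce a bond-edge type $b_v$; since $K$ is a vertex cover, every edge has at least one endpoint in $K$, and for an edge with a single endpoint $v\in K$ I set $\lambda(v,e)=b_v$. For an edge with both endpoints in $K$ I invoke Robbins' theorem on the 2-edge-connected induced subgraph $G[K]$ to pick a strong orientation, and if this orientation directs $u\to v$ I set $\lambda(u,e)=b_u$. Every edge now has a unique source vertex in $K$, so $\#\Sigma(P)\le|K|$, and the work is to check Scenario~3.

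Fix $H\in\mathcal{O}(P)$ and let $n_v$ (resp.\ $m_w$) be the multiplicity of the tile type at $v\in K$ (resp.\ $w\in V(G)-K$). Only the tile $t_v$ carries un-hatted $b_v$ ends, so if $n_v=0$ then every tile containing $\hat{b}_v$---namely $t_u$ for $v\to u$ in the orientation and $s_w$ for $v\in N_G(w)$---must also have multiplicity $0$; strong connectivity of the orientation propagates this around $K$ and then to all of $V(G)-K$, forcing $H$ to be empty. For nonempty $H$ we therefore have $n_v\ge 1$ and may write $n_v=1+\delta_v$, $m_w=1+\gamma_w$ with $\delta_v\ge 0$ integers. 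Balancing $b_v$-ends against $\hat{b}_v$-ends across $H$ and subtracting the contribution of the distinguished solution $\vec n=\vec m=\vec 1$ yields, for each $v\in K$,
\begin{equation*}
\sum_{w\notin K}\gamma_w\,\rho_w(v)\;=\;\delta_v\,c_v\;-\!\sum_{u\in K,\,v\to u}\!\delta_u,
\end{equation*}
where $c_v$ is the number of $b_v$ cohesive-ends in $t_v$ and $\rho_w$ is the vector of Definition~\ref{def:neighbind}. Neighborhood independence makes the $\rho_w$ linearly independent, so $\vec\gamma$ is determined by $\vec\delta$; summing the equations over $v$, using $\sum_v\rho_w(v)=|N_G(w)|=k$ and the identity $c_v=k-d_{\mathrm{in}}^K(v)$ forced by $k$-regularity, one obtains
\begin{equation*}
\sum_v\delta_v+\sum_w\gamma_w\;=\;2\sum_{v\in K}\delta_v\left(1-\tfrac{d_{\mathrm{in}}^K(v)}{k}\right)\;\ge\;0,
\end{equation*}
which is precisely $\#V(H)\ge\#V(G)$. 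Since $G[K]$ is 2-edge-connected its strong orientation has $d_{\mathrm{out}}^K(v)\ge 1$, hence $d_{\mathrm{in}}^K(v)<k$ for every $v\in K$; equality therefore forces $\delta_v=0$ for every $v$, and the uniqueness of $\vec\gamma$ then gives $\gamma_w=0$ as well.

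In the equality case every tile type appears in $H$ with multiplicity exactly one, and each tile carries at most one $\hat{b}_v$ end, so the $c_v$ edges of type $b_v$ leaving the unique $t_v$-tile are forced to land on the $c_v$ distinct tiles that correspond under this identification to the $G$-neighbors of $v$. Identifying tiles with vertices of $G$ is then a graph isomorphism $H\cong G$, completing the verification of Scenario~3. I expect the main obstacle to be the algebraic bookkeeping in the balance equations together with justifying how the hypothesis of 2-edge-connectedness enters twice: once via Robbins' theorem to produce the source-assigning orientation, and once through $d_{\mathrm{out}}^K(v)\ge 1$ to keep the right-hand side of the key inequality strictly positive whenever any $\delta_v>0$.
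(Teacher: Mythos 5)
The paper does not actually prove this theorem: it is imported from \cite{baek}, with only Remark \ref{labelremark} indicating the construction (the pot derived from $(K,O)$). So there is no in-paper proof to compare against; judged on its own, your argument looks correct and is essentially a full verification of that cited construction. Your pot is exactly the one of Remark \ref{labelremark}, except that you insist on a \emph{strong} orientation of $G[K]$ (via Robbins' theorem), and your proof genuinely uses strongness twice — once to propagate $n_v=0$ around $K$ and force all tile multiplicities to be at least one, and once to get $d^K_{\mathrm{out}}(v)\ge 1$, hence $c_v=k-d^K_{\mathrm{in}}(v)>0$ and strictly positive coefficients in the summed identity — whereas the paper's remark asserts that any orientation works; your version is the safe reading of the hypothesis. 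The bookkeeping checks out: the balance equations give $\sum_w\gamma_w\rho_w(v)=\delta_vc_v-\sum_{v\to u}\delta_u$, summing over $v\in K$ with $\sum_v\rho_w(v)=k$ (all neighbors of $w\notin K$ lie in $K$ since $K$ is a vertex cover) yields your identity $\sum_v\delta_v+\sum_w\gamma_w=2\sum_v\delta_v\bigl(1-d^K_{\mathrm{in}}(v)/k\bigr)\ge 0$, and in the equality case linear independence of the $\rho_w$ (Definition \ref{def:neighbind}) forces $\gamma=0$, after which the one-copy-per-tile rigidity argument does give $H\cong G$. Two small points you leave implicit and should state: (i) you tacitly identify tile types with vertices, which requires all $\#V(G)$ tile types to be pairwise distinct — this is true, but only because the strong orientation puts at least one unhatted $b_v$ on each $K$-tile and neighborhood independence forces distinct non-$K$ tiles; and (ii) the claim that each tile carries at most one $\hat{b}_v$ end uses that $G$ is simple, which holds for the graphs considered but is not in the theorem statement.
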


To connect these bounds with a strategy for labeling such a $k$-regular graph, the following remark is also given in \cite{baek}.

\begin{remark}\label{labelremark} Suppose that $G$ is a $k$-regular graph. If we are given a neighborhood independent vertex cover $K \subseteq V(G)$ such that the induced subgraph on $K$ is 2-edge-connected, then we may construct a valid pot for $G$ using $|K|$ bond-edge types. In particular, this pot is the one derived from $(K, O)$, where $O$ is chosen to be any orientation on the induced subgraph of $K$.
\end{remark}

In Section \ref{unswap} we find that three of the six Archimedean graphs we study are unswappable and therefore apply Theorem \ref{thm:B3lower} and Proposition \ref{thm:T3lower}. In Section \ref{swap} we apply Theorem \ref{thm:B3upper} to the truncated cube, a graph we determine to be swappable.

\subsection{Computing Numbers of $n$-Cycles}
The structure of Archimedean graphs are such that each graph contains many $n$-cycles for various values of $n$. Thus, it is often the case that a labeling of these graphs results in a pot that realizes a graph with a different number of $n$-cycles for some small $n$.  There are well-known formulae for these values, given below, based on the adjacency matrix of a graph \cite{hararycycles}. The formula for the number of 5-cycles is simplified when the graph is $k$-regular, as are all Archimedean graphs. The formula for the number of 6-cycles in a graph is significantly more cumbersome to compute (see \cite{changcyles}), and so we did not incorporate this value in our methods.

\begin{theorem}\label{thm:3cycles} If $G$ is a simple graph with adjacency matrix $A$, then the number of 3-cycles in $G$ is $\frac{1}{6}tr(A^3).$ \end{theorem}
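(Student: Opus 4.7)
The plan is to use the standard combinatorial interpretation of powers of the adjacency matrix: for any nonnegative integer $n$, the entry $(A^n)_{ij}$ counts the number of walks of length $n$ from vertex $i$ to vertex $j$. I would first establish this by induction on $n$, using the fact that a walk of length $n$ from $i$ to $j$ is uniquely determined by its penultimate vertex $k$, so $(A^n)_{ij} = \sum_k (A^{n-1})_{ik} A_{kj}$, which matches matrix multiplication. The base case $n=1$ is just the definition of $A$.

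Applied to $n=3$, this tells me that $(A^3)_{ii}$ counts the number of closed walks $i \to j \to k \to i$ of length $3$ based at $i$, and hence $tr(A^3) = \sum_{i \in V(G)} (A^3)_{ii}$ counts the total number of such closed walks over all starting vertices. The next step is to identify exactly which closed walks of length $3$ can occur in a simple graph. Since $G$ has no loops, we cannot have $j = i$, $k = j$, or $k = i$ (any of those would require a loop edge somewhere along the walk). Thus every closed walk of length $3$ consists of three distinct vertices $i, j, k$ with $\{i,j\}, \{j,k\}, \{k,i\} \in E(G)$, i.e.\ it traces out a triangle of $G$.

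Finally, I count the multiplicity with which each triangle is produced. Given a triangle on vertex set $\{i,j,k\}$, a closed walk of length $3$ realizing this triangle is determined by a choice of starting vertex ($3$ options) and a direction of traversal ($2$ options), giving exactly $6$ closed walks per triangle. Dividing gives the number of triangles as $\tfrac{1}{6} tr(A^3)$, as claimed.

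The result is classical and there is no serious obstacle; the only point that requires care is verifying that simplicity of $G$ rules out degenerate closed walks of length $3$ (those using a loop or repeating an edge), which is why the hypothesis that $G$ be simple is needed for the clean factor of $\tfrac{1}{6}$.
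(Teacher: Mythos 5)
Your proof is correct and is the standard classical argument (closed walks of length $3$ counted by $tr(A^3)$, with each triangle counted $3 \cdot 2 = 6$ times); the care you take in using simplicity to exclude degenerate walks is exactly the right point. Note that the paper does not prove this statement at all --- it quotes the formula from the cited literature --- so your write-up simply supplies the standard proof of that cited fact.
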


\begin{theorem}\label{thm:4cycles} If $G$ is a simple graph with adjacency matrix $A$, then the number of 4-cycles in $G$ is $\frac{1}{8}\left[ tr(A^4)-tr(A^2)-2\Sigma_{j \neq i} a_{ij}^{(2)}\right]$. \end{theorem}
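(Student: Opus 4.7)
The plan is to interpret $tr(A^4)$ as a count of closed walks of length 4 and then organize those walks into a small number of shapes, exactly one of which corresponds to genuine 4-cycles. Recall that $(A^n)_{vv}$ counts length-$n$ walks from $v$ back to $v$, so $tr(A^4)$ equals the total number of closed length-4 walks in $G$.

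Classifying a closed walk $v_0 v_1 v_2 v_3 v_0$ by whether $v_2 = v_0$ and whether $v_3 = v_1$, I would split into three cases: (a) $v_2 = v_0$, a ``go out, come back, go out, come back'' walk counted per base vertex $v$ by $\deg(v)^2$; (b) $v_2 \neq v_0$ but $v_3 = v_1$, a ``go out two steps and retrace'' walk counted per $v$ by $\sum_{u \sim v}(\deg(u)-1)$; and (c) $v_2 \neq v_0$ and $v_3 \neq v_1$, in which simpleness of $G$ forces $v_0, v_1, v_2, v_3$ to be all distinct and trace a 4-cycle. Each 4-cycle contributes exactly $8$ walks of type (c) in total (four choices of base vertex times two orientations), so summing the three cases over all $v$ yields the identity
\begin{equation*}
tr(A^4) \;=\; 2\sum_{v} \deg(v)^2 \;-\; \sum_{v} \deg(v) \;+\; 8 c_4,
\end{equation*}
where $c_4$ denotes the number of 4-cycles.

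The remaining step is to recognize the two degree sums as adjacency-matrix data. Since $(A^2)_{vv} = \deg(v)$, the diagonal contribution is $\sum_v \deg(v) = tr(A^2)$. For $\sum_v \deg(v)^2$, I would observe that the total number of length-2 walks starting at $i$ equals $\sum_{k \sim i}\deg(k)$ while $a_{ii}^{(2)} = \deg(i)$, so $\sum_{i \neq j} a_{ij}^{(2)} = \sum_v \deg(v)^2 - tr(A^2)$. Substituting these two expressions into the closed-walk identity and solving for $c_4$ produces the claimed formula.

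The delicate point is the case analysis: I must confirm that cases (a), (b), (c) are disjoint and exhaustive, that the ``distinct vertices'' conclusion in case (c) really uses only the absence of loops and multi-edges, and that every 4-cycle is counted exactly $8$ times across starting vertex and direction. Once that bookkeeping is in place, the conversion to $tr(A^2)$ and $\sum_{i \neq j} a_{ij}^{(2)}$ is routine.
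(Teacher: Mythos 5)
Your argument is correct: the case analysis (a)/(b)/(c) is disjoint and exhaustive, simpleness of $G$ does force the four vertices in case (c) to be distinct, each 4-cycle is counted $8$ times, and the conversions $\sum_v \deg(v) = tr(A^2)$ and $\sum_{i \neq j} a_{ij}^{(2)} = \sum_v \deg(v)^2 - tr(A^2)$ are right, so solving the closed-walk identity for $c_4$ gives exactly the stated formula. Note, however, that the paper does not prove this theorem at all; it simply quotes the formula from the cited reference on cycle counting via adjacency matrices, so there is no in-paper proof to compare against. Your closed-walk decomposition is essentially the classical derivation found in that literature, and it is a perfectly good self-contained justification: interpreting $tr(A^4)$ as closed 4-walks, subtracting the degenerate ``out-and-back'' walks expressed through $tr(A^2)$ and the off-diagonal entries of $A^2$, and dividing by the $8$-fold overcount of each genuine 4-cycle.
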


\begin{theorem}\label{thm:5cycles} If $G$ is a simple graph with adjacency matrix $A$, then the number of 5-cycles in $G$ is $\frac{1}{10}\left[ tr(A^5)+5tr(A^3)-5\Sigma_{i=1}^n d_i a_{ij}^{(3)}\right]$, where $d_i$ is the degree of vertex $v_i$. \end{theorem}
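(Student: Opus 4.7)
The plan is to use the classical identity that $tr(A^5)$ counts closed walks of length $5$ in $G$ (summed over starting vertices). Each $5$-cycle in $G$ contributes exactly $10$ such walks---five choices of starting vertex times two orientations---so if $B$ denotes the number of ``bad'' closed $5$-walks (those that are not $5$-cycles), then the number of $5$-cycles equals $(tr(A^5) - B)/10$. It thus suffices to show that $B = 5\sum_{i=1}^n d_i a_{ii}^{(3)} - 5\,tr(A^3)$, noting that the index $j$ in the theorem statement is a typo for $i$.

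First I would argue that every bad closed $5$-walk must traverse some edge at least twice. If instead all five edges were distinct, their union would form a connected subgraph in which every vertex has even degree (each intermediate visit of a vertex contributes $2$ to its walk-degree), i.e., an Eulerian subgraph on five edges. The only simple graph with five edges and all vertex degrees even is $C_5$ itself, making the walk a $5$-cycle. Next, a case analysis on the partitions of $5$ describing possible edge multiplicities, combined with the parity constraint that the walk-degree at each vertex is even, shows that only two edge-multiplicity structures can arise: (i) a triangle together with a pendant edge traversed twice (multiplicities $2,1,1,1$); and (ii) a triangle in which one distinguished edge is traversed three times and the other two once (multiplicities $3,1,1$). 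Direct enumeration over starting vertex and sequence shows that each such structure yields exactly $10$ closed $5$-walks.

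Summing over all structures of each type gives $B = 30 N + 10 \sum_{\triangle} \sum_{v \in \triangle}(d_v - 2)$, where $N$ is the total number of triangles in $G$ and $\triangle$ ranges over triangles. Using $a_{ii}^{(3)} = 2 t_i$ with $t_i$ the number of triangles through vertex $i$, one has $tr(A^3) = 6N$ and $\sum_i d_i a_{ii}^{(3)} = 2 \sum_v d_v t_v$, so routine algebra converts the bad-walk count into $B = 5 \sum_i d_i a_{ii}^{(3)} - 5 \, tr(A^3)$, as required. The main obstacle lies in the case analysis of the middle paragraph: one must carefully rule out the spurious multiplicity patterns ($3+2$, $2+2+1$, $4+1$, and those with a disconnected Eulerian subgraph) via parity, and then verify by hand that each of the two admissible structures contributes exactly $10$ walks. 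This uniform count of $10$ per structure---matching the $10$ walks per $5$-cycle---is precisely what forces the final trace expression to collapse so cleanly, and would serve as a sanity check via small examples such as $K_4$ (which must yield zero $5$-cycles).
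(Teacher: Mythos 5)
The paper itself offers no proof of this theorem: the formula is quoted as a known result from the cited reference on counting cycles via powers of the adjacency matrix, so there is no in-paper argument to compare against. Your proposal is correct and is essentially the standard derivation underlying that citation: $tr(A^5)$ counts closed $5$-walks, each $5$-cycle contributes exactly $10$ of them, and the parity-plus-connectivity analysis does isolate exactly the two defective structures you name (a triangle with a pendant edge doubled, multiplicities $2,1,1,1$, and a triangle with one edge tripled, multiplicities $3,1,1$), each of which can be checked by direct enumeration to yield exactly $10$ closed $5$-walks. Your bookkeeping is also right: there are $3N$ tripled-edge structures and $\sum_{\triangle}\sum_{v\in\triangle}(d_v-2)$ pendant structures, so $B=30N+10\sum_{\triangle}\sum_{v\in\triangle}(d_v-2)=5\sum_i d_i a_{ii}^{(3)}-5\,tr(A^3)$, which gives the stated expression (and you are correct that the subscript $a_{ij}^{(3)}$ in the statement should read $a_{ii}^{(3)}$). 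The only small omission is that your list of excluded multiplicity patterns leaves out the pattern consisting of a single edge traversed five times, but the same parity argument you invoke rules it out, so the gap is cosmetic.
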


In order to greatly expedite our work in Scenario 3, we composed a short script in Python \cite{cyclecode} to compute the number of 3-, 4-, and 5-cycles in a $k$-regular graph (substituting $d_i=k$ for all $i$ in Theorem \ref{thm:5cycles}) using the above formulae. Given an initial oriented edge ``labeled" with a bond-edge type, this script computes, for every other oriented edge in the graph, the adjacency matrix of the graph formed by breaking and re-bonding the initial edge with a second edge. The number of $n$-cycles for $n=3,4,5$ is computed for both the original graph and the graph formed by the change in edge connections. The script outputs a list of oriented edges for which the number of $n$-cycles differ. This process is outlined in Algorithm \ref{alg:cycles}. 

\begin{algorithm} 
\caption{Number of cycles algorithm}
\label{alg:cycles}
    \textbf{Input:} Adjacency matrix $A$ and $k$ for $k$-regular graph $G$ and initial edge $(v_i,v_j)$ such that $\lambda(v_i,v_j)=b$ \\
    \textbf{Output:} For $n=3,4,5$, a list and count of edges $(v_l,v_m)$ such that if $(v_i,v_j)$ and $(v_l,v_m)$ are swapped, then the number of $n$-cycles in $G$ has changed. \\
    \textbf{Algorithm:}
    \begin{algorithmic}[1]
        \State Using $A$, compute the number of $n$-cycles for $n=3,4,5$ for $G$ with the formulae from Theorems \ref{thm:3cycles}, \ref{thm:4cycles}, \ref{thm:5cycles}.
        \State For each $(l,m)$ such that $l, m \neq i, j$:
            \begin{enumerate} 
                \item Make a copy $A'$ of $A$ and alter $A'$ such that $A'$ is the adjacency matrix of the graph $G'$ formed after a swap of edges $(v_i,v_j)$ and $(v_l,v_m)$.
                \item Using $A'$, compute the number of $n$-cycles for $n=3,4,5$ for $G'$. If the number of $n$-cycles of $G'$ differs from that of $G$, append $(v_l,v_m)$ to the corresponding list.
            \end{enumerate}
        \State Return the lists, and their lengths, of edges altering the numbers of $n$-cycles.
    \end{algorithmic}
\end{algorithm}

In many cases, one can quickly verify by hand that an $n$-cycle is created where one did not previously exist. In other cases, it is quite difficult to see that the number of $n$-cycles for a given $n$ has changed, in which case the Python script is quite helpful. The number of edges in the Archimedean graphs can also make manual checks unpalatable; for example, the snub cube has 60 edges, with each having two potential labeling orientations. It should be noted that there are many other readily available scripts, often using a depth-first search method, that compute the number of cycles in a graph. Our script is specifically tailored to $k$-regular graphs in the flexible tile model. Throughout Sections \ref{unswap} and \ref{swap}, this script is used to determine which non-incident edges in a graph may not be labeled with the same bond-edge type.

\section{Unswappable Archimedean graphs}\label{unswap}
In this section, we determine that the cuboctahedron, the small rhombicubocathedron, and the snub cube are unswappable. We apply Theorem \ref{thm:B3lower} and Proposition \ref{thm:T3lower} to these graphs, obtaining the minimum value of $B_3$ for the cuboctahedron and snub cube, the minimum value of $T_3$ for all three graphs, and a close range of values of $B_3$ for the small rhombicuboctahedron. The graphs we show to be unswappable are 4- or 5-regular, and we hypothesize that the resulting increased connectivity is the driving factor behind the unswappability of the graphs.

Throughout, we refer to the vertex labeling of the cuboctahedron as shown in Figure \ref{fig:cuboct}.

\begin{lemma}\label{cuboctunswap} The cuboctahedron is unswappable.
\end{lemma}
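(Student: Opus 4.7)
The plan is to show that for every pair of disjoint edges $\{u,v\}$ and $\{s,t\}$ of the cuboctahedron $G$, the swapped graph $G' = G - \{u,v\} - \{s,t\} + \{u,t\} + \{s,v\}$ is non-isomorphic to $G$. I would do this by producing an isomorphism invariant---the number of 3-, 4-, or 5-cycles, as furnished by Algorithm \ref{alg:cycles}---that distinguishes $G'$ from $G$ in every case. Since cycle counts are preserved by graph isomorphism, any discrepancy in these counts will suffice to rule out $G \cong G'$.

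The first step is to reduce the number of cases using symmetry. Because the cuboctahedron is edge-transitive, I can fix a representative edge $\{u,v\}$ once and for all; using the vertex labels of Figure \ref{fig:cuboct}, take $\{u,v\} = \{v_1,v_2\}$. Of the remaining 23 edges, the 6 that share an endpoint with $\{v_1,v_2\}$ are excluded, leaving 17 edges $\{s,t\}$ disjoint from it. Each such choice admits two distinct swaps, corresponding to the two possible pairings of endpoints, so up to 34 cases remain. The stabilizer of $\{v_1,v_2\}$ in $\mathrm{Aut}(G)$ then collapses these into a handful of orbits, and I would select one representative per orbit.

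For each representative, I would compute the 3-, 4-, and 5-cycle counts of both $G$ and $G'$ via Theorems \ref{thm:3cycles}--\ref{thm:5cycles} and verify that at least one count differs. Many swaps disrupt the face structure visibly: each edge of the cuboctahedron lies on exactly one triangular face and one square face, so swaps in which $\{s,t\}$ is close to $\{v_1,v_2\}$ usually destroy a triangle or square without generating a replacement, changing the 3- or 4-cycle count in an easily verified way. The main obstacle will be the ``distant'' swaps, where $\{s,t\}$ is far from $\{v_1,v_2\}$ in the graph-distance sense and no incident face is obviously affected; here the change in cycle counts arises from longer cycles being rerouted through the new edges, and a purely manual audit becomes error-prone across a 24-edge graph. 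For these cases the Python implementation of Algorithm \ref{alg:cycles} is indispensable, systematically returning the set of disjoint edges whose swap with $\{v_1,v_2\}$ leaves all three cycle counts invariant, which I expect to be empty. Should any swap survive all three invariants, a finer test---such as a 6-cycle count or a direct isomorphism check---would be needed, but the high regularity and connectivity of the cuboctahedron make this unlikely.
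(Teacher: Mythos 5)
Your proposal is correct and takes essentially the same route as the paper: fix $\{v_1,v_2\}$ by edge-transitivity and verify, using Algorithm \ref{alg:cycles}, that swapping it with any disjoint edge changes a small-cycle count (the paper finds 3- and 4-cycle counts already suffice). The only notable difference is that the paper first sets aside the swaps that would recreate an existing adjacency—these produce a multiple edge and are therefore immediately nonisomorphic to $G$—before running the cycle-count check, a separation you would also need since the simple-graph formulas of Theorems \ref{thm:3cycles}--\ref{thm:5cycles} do not apply to the resulting multigraphs.
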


\begin{proof} Note that the cuboctahedron is edge-transitive. Without loss of generality, suppose $\lambda(v_1,v_2)=b$. If any edge $(v_6,w)$, $(v_{10},w)$, $(v_3,w)$, $(w,v_3)$, $(w,v_4)$, or $(w,v_5)$, where $w$ is any other vertex, is labeled using bond-edge type $b$, then the resulting pot $P$ is such that there exists $H \in \mathcal{O}(P)$ with a multiple-edge (for example, see Figure \ref{fig:cuboct}). Together with the edges incident with $\{v_1,v_2\}$, this accounts for 26 of the 46 unlabeled oriented edges in $G$. For the remaining unlabeled oriented edges, it can be computationally verified that an edge ``swap" with $(v_1,v_2)$ results in a graph with a different number of $3$-cycles or a different number of $4$-cycles. 

Since the choice of oriented edge $(v_1,v_2)$ was arbitrary, we have shown that if $\lambda(v,w)=\lambda(v',w')=b$ with $v \neq v'$ and $w \neq w'$, then the resulting pot is such that a nonisomorphic graph can be realized.
\end{proof}

\begin{proposition}\label{cuboctB3} Let $G$ be the cuboctahedron. Then $B_3(G) = 8$. \end{proposition}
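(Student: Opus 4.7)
The plan is to sandwich $B_3(G)$ between matching bounds of $8$. For the lower bound, I would apply Theorem \ref{thm:B3lower}, which is available because Lemma \ref{cuboctunswap} establishes that the cuboctahedron is unswappable. This reduces the task to showing that every vertex cover has size at least $8$, or equivalently that the independence number satisfies $\alpha(G) \leq 4$. I would argue this by exploiting the eight triangular faces of $G$: each vertex lies in exactly two triangles, and any independent set contains at most one vertex from each triangle. A short counting argument, combined with the observation that the four triangular faces incident to a common square face are pairwise vertex-disjoint, should yield $\alpha(G) \leq 4$. Exhibiting a concrete independent set of size $4$ (one vertex from each of four mutually disjoint triangular faces) shows the bound is attained, so any minimum vertex cover has size exactly $8$.

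For the upper bound, I would invoke Theorem \ref{thm:B3upper} by exhibiting a set $K \subseteq V(G)$ with $|K|=8$ satisfying its three hypotheses: $K$ is a vertex cover, $K$ is neighborhood independent in the sense of Definition \ref{def:neighbind}, and the induced subgraph on $K$ is 2-edge-connected. Using the labeling of Figure \ref{fig:cuboct}, I would propose a specific $K$ taken as the complement of a chosen independent set of size $4$, and verify the three properties directly: that each of the four vertices in $V(G) \setminus K$ has all four of its neighbors in $K$; that the corresponding four indicator vectors $\rho_v \in \mathbb{R}^K$ are linearly independent; and that the induced subgraph on $K$ has no bridges. By Remark \ref{labelremark}, this yields an explicit pot using exactly $8$ bond-edge types that realizes $G$ in Scenario $3$, giving $B_3(G) \leq 8$.

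The main obstacle is that not every minimum vertex cover satisfies the neighborhood-independence and 2-edge-connectivity conditions simultaneously, so the choice of $K$ matters. Because the four excluded vertices each have degree $4$, their neighborhood vectors in $\mathbb{R}^K$ have Hamming weight $4$, and linear dependence can occur if two of them share too many neighbors in $K$; similarly, a poor choice of excluded vertices could leave $K$ disconnected or produce a cut edge. I expect that choosing the four excluded vertices to be ``spread out,'' e.g.\ one vertex from each of four pairwise disjoint triangular faces, makes all three verifications succeed, after which the remaining checks reduce to short linear-algebraic and graph-theoretic computations on the explicit adjacency structure of the cuboctahedron.
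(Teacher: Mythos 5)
Your plan is correct, but it follows a genuinely different route from the paper on the upper bound. For the lower bound both arguments run through Lemma \ref{cuboctunswap} and Theorem \ref{thm:B3lower}; the paper bounds the minimum vertex cover using two disjoint $4$-cycles plus an edge-counting step, while you bound the independence number by counting triangle incidences ($2\alpha(G)\le 8$ since each vertex lies in two of the eight triangles and an independent set meets each triangle at most once), which is arguably cleaner. One slip: your claim that the four triangles incident to a common square face are pairwise vertex-disjoint is false (consecutive such triangles share a square vertex), but it is not needed — the counting alone gives $\alpha(G)\le 4$, and a size-$4$ independent set exists (e.g.\ viewing the cuboctahedron as the line graph of the cube, take a perfect matching of the cube). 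For the upper bound, the paper does not invoke Theorem \ref{thm:B3upper} at all: it labels $G$ from a chosen minimum vertex cover of source vertices, writes down the explicit $8$-bond-edge-type pot of Figure \ref{fig:cuboct}, and then verifies Scenario 3 computationally (the SRPS script confirms no smaller-order graph is realized, with unswappability handling equal-order graphs). You instead verify the hypotheses of Theorem \ref{thm:B3upper} — vertex cover, neighborhood independence in the sense of Definition \ref{def:neighbind}, and 2-edge-connectivity of the induced subgraph — for a well-chosen $K$ of size $8$, and obtain the pot via Remark \ref{labelremark}. This works: choosing the four excluded vertices to correspond to a perfect matching of the cube whose complement is a Hamiltonian cycle makes the induced subgraph on $K$ an $8$-cycle (2-edge-connected) and the four neighborhood vectors linearly independent (a short rank check), so $B_3(G)\le 8$ follows without any computational verification of the spectrum. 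What each approach buys: the paper's route yields a concrete biminimal pot (also used for $T_3$) at the cost of relying on the SRPS computation, while yours trades the explicit pot for a purely theorem-based bound, provided you do carry out the linear-independence and connectivity checks for your specific $K$, since (as you note) not every minimum vertex cover satisfies them.
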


\begin{proof}
Note that the 4-cycles $\{v_2, v_3, v_{10}, v_{11}\}$ and $\{v_4, v_5, v_8, v_9\}$ are disjoint (share no common vertices or edges), so at least two non-adjacent vertices from each must be included in any minimum vertex cover of $G$. Without loss of generality, select $v_2$, $v_4$, $v_9$, and $v_{11}$. These four vertices cover 16 of the 24 edges of the graph. All remaining vertices can cover at most two remaining edges, so the order of a minimum vertex cover of $G$ is eight. By Theorem \ref{thm:B3lower}, $B_3(G) \ge 8$. Consider the minimum vertex cover $S = \{v_2, v_3, v_5, v_4, v_8, v_9, v_8, v_{10}\}$. Using $S$ as the set of source vertices for a labeling of $G$, we obtain the following pot $P$ such that $\#\Sigma(P)=8$ and $P$ realizes $G$ (see Figure \ref{fig:cuboct}).
\begin{align*} 
P = \{\{a^4\},\{b^4\},\{c^4\},\{d^4\},\{\hat{a},\hat{d},e^2\},\{\hat{a},\hat{b},\hat{e},\hat{f}\},\{\hat{c},\hat{d},\hat{e},\hat{f}\},\{\hat{b},\hat{c},f^2\}, \{\hat{a},\hat{d},g^2\}, \{\hat{a},\hat{b},\hat{g},\hat{h}\},\\\{\hat{c},\hat{d},\hat{g},\hat{h}\}, \{\hat{b},\hat{c},h^2\}\}
\end{align*}

It can be verified with the SRPS script from \cite{ashworth} that for all $H \in \mathcal{O}(P)$, $\#V(H) \geq 24 = \#V(G)$. Thus, $B_3(G)=8.$
\end{proof}

        \begin{figure}[hbt!]
        \centering
        \begin{tikzpicture}[transform shape, scale = 0.9]
            \node[main node] (4) at (0,0) {$v_4$};
            \node[main node] (5) at (2,0) {$v_5$};
            \node[main node] (8) at (0,-2) {$v_8$};
            \node[main node] (9) at (2,-2) {$v_9$};
            \node[main node] (1) at (1,2) {$v_1$};
            \node[main node] (6) at (-2,-1) {$v_6$};
            \node[main node] (7) at (4,-1) {$v_7$};
            \node[main node] (11) at (1,-4) {$v_{11}$};
            \node[main node] (2) at (-1,1) {$v_2$};
            \node[main node] (3) at (3,1) {$v_3$};
            \node[main node] (10) at (-1,-3) {$v_{10}$};
            \node[main node] (12) at (3,-3) {$v_{12}$};

            \path[draw,thick,color=red,->]
                (6) edge [bend left] node [near start, left]{$b$} (2)
                (1) edge [bend right] node [very near start, above]{$b$} (10);
             \path[draw,thick,color=black]
                (1) edge node []{} (3)
                (1) edge node []{} (4)
                (1) edge node []{} (5)
                (2) edge node []{} (6)
                (7) edge node []{} (3)
                (7) edge node []{} (5)
                (7) edge node []{} (9)
                (7) edge node []{} (12)
                (11) edge node []{} (10)
                (11) edge node []{} (8)
                (11) edge node []{} (9)
                (11) edge node []{} (12)
                (6) edge node []{} (4)
                (6) edge node []{} (8)
                (2) edge node []{} (3)
                (2) edge node []{} (10)
                (12) edge node []{} (3)
                (12) edge node []{} (10)
                (4) edge node []{} (8)
                (4) edge node []{} (5)    
                (9) edge node []{} (5)
                (9) edge node []{} (8);  
        \end{tikzpicture} \hspace{1cm}
        \begin{tikzpicture}[transform shape, scale = 0.9]
            \node[main node] (4) at (0,0) {$v_4$};
            \node[main node] (5) at (2,0) {$v_5$};
            \node[main node] (8) at (0,-2) {$v_8$};
            \node[main node] (9) at (2,-2) {$v_9$};
            \node[main node] (1) at (1,2) {$v_1$};
            \node[main node] (6) at (-2,-1) {$v_6$};
            \node[main node] (7) at (4,-1) {$v_7$};
            \node[main node] (11) at (1,-4) {$v_{11}$};
            \node[main node] (2) at (-1,1) {$v_2$};
            \node[main node] (3) at (3,1) {$v_3$};
            \node[main node] (10) at (-1,-3) {$v_{10}$};
            \node[main node] (12) at (3,-3) {$v_{12}$};

            \path[draw,thick,color=red,->]
                (1) edge node [near start, above]{$a$} (2)
                (1) edge node []{} (3)
                (1) edge node []{} (4)
                (1) edge node []{} (5);
             \path[draw,thick,color=blue,->]
                (7) edge node [near start, right]{$b$} (3)
                (7) edge node []{} (5)
                (7) edge node []{} (9)
                (7) edge node []{} (12);
             \path[draw,thick,color=green,->]
                (11) edge node [near start, below]{$c$} (10)
                (11) edge node []{} (8)
                (11) edge node []{} (9)
                (11) edge node []{} (12);
             \path[draw,thick,color=orange,->]
                (6) edge node [near start, left]{$d$} (2)
                (6) edge node []{} (4)
                (6) edge node []{} (8)
                (6) edge node []{} (10);
            \path[draw,thick,color=purple,->]
                (2) edge node [very near start, below]{$e$} (3)
                (2) edge node []{} (10);
            \path[draw,thick,color=teal,->]
                (12) edge node [near start, right]{$f$} (3)
                (12) edge node []{} (10);
            \path[draw,thick,color=pink,->]
                (4) edge node [near start, left]{$g$} (8)
                (4) edge node []{} (5);      
            \path[draw,thick,color=brown,->]
                (9) edge node [near start, left]{$h$} (5)
                (9) edge node []{} (8);  
        \end{tikzpicture}
            \caption{Nonisomorphic graph realized by pot when $\lambda(v_1,v_2)=\lambda(v_6,v_{10})$ (left); Scenario 3 labeling of cuboctahedron (right).}
            \label{fig:cuboct}
    \end{figure}
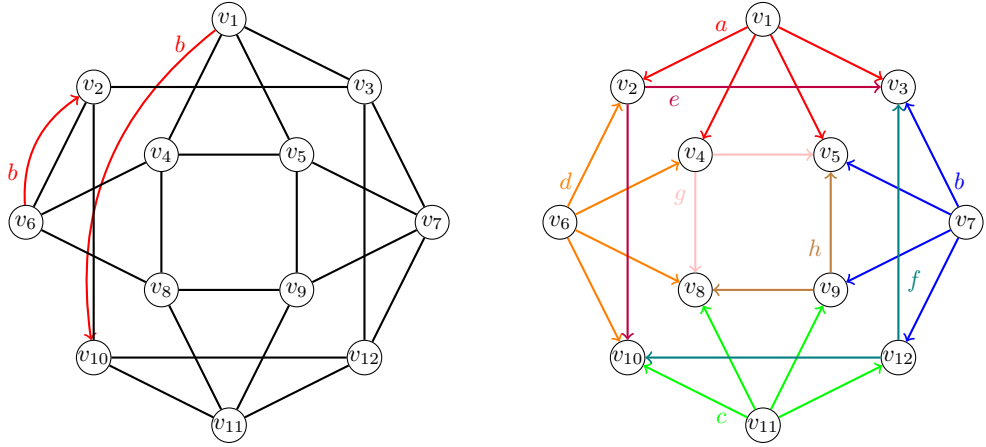

Throughout, we will consider the vertex numbering of the the small rhombicuboctahedron as shown in Figure \ref{fig:rhombi}.

\begin{lemma}\label{rhombiunswap} The small rhombicuboctahedron is unswappable. \end{lemma}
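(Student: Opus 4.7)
The plan is to follow the template established in the proof of Lemma~\ref{cuboctunswap}. I would first fix the vertex labeling of the small rhombicuboctahedron (as displayed in Figure~\ref{fig:rhombi}) and identify the orbits of oriented edges under the graph's automorphism group. Unlike the cuboctahedron, the small rhombicuboctahedron is vertex-transitive but not edge-transitive, so one cannot reduce to a single edge; instead I would choose one oriented-edge representative from each orbit (the square-square edges and the square-triangle edges form distinct orbits) and argue unswappability separately for each.

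For each representative $(v_i, v_j)$, assume $\lambda(v_i, v_j) = b$ and partition the remaining $2 \cdot 48 - 2 = 94$ oriented edges of $G$ into two classes. The first class consists of every oriented edge $(v_l, v_m)$ sharing a vertex with $\{v_i, v_j\}$, together with every $(v_l, v_m)$ whose swap with $(v_i, v_j)$ produces an edge $\{v_i, v_m\}$ or $\{v_l, v_j\}$ already present in $G$; in both subcases the resulting pot realizes a graph containing a multiple edge, so Scenario 3 is violated. These edges can be enumerated directly from the local neighborhoods of $v_i$ and $v_j$, just as in the cuboctahedron proof. For the remaining oriented edges, I would run Algorithm~\ref{alg:cycles} on the adjacency matrix of the small rhombicuboctahedron: for each candidate partner $(v_l, v_m)$, the script compares the numbers of 3-, 4-, and 5-cycles in $G$ and in the swapped graph $G'$, and any discrepancy certifies $G \not\cong G'$ and hence forbids the common labeling.

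If the two classes together exhaust the $94$ oriented edges for each orbit representative, then no pair of disjoint edges of $G$ can carry the same bond-edge label in a Scenario 3 pot, proving unswappability. The main obstacle I anticipate is the scale of the graph and its substantial local symmetry: with $48$ edges and many short cycles shared between the square and triangular faces, it is plausible that some swaps preserve the counts of 3-, 4-, and 5-cycles simultaneously, causing Algorithm~\ref{alg:cycles} to fail to flag a genuinely non-isomorphic graph. In that event I would supplement the analysis with either 6-cycle counts (via the formula of \cite{changcyles}) or a direct isomorphism check on the handful of surviving candidates. A secondary, more mundane obstacle is bookkeeping: ensuring that the incidence-based exclusions and the script-detected exclusions together cover every oriented edge in each orbit without omission, and that the orbit representatives chosen really do witness every disjoint pair up to symmetry.
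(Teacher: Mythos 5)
Your proposal matches the paper's proof in essentially every respect: it reduces to one representative oriented edge from each of the two edge orbits (an edge lying in a 3-cycle and an edge lying only in 4-cycles), discards incident edges and swaps that would create multiple edges, and certifies all remaining swaps nonisomorphic via the 3-, 4-, and 5-cycle counts computed by Algorithm~\ref{alg:cycles}. The paper finds that the 3-, 4-, and 5-cycle counts already suffice (a few swaps are caught only by the 5-cycle count), so the fallback to 6-cycle counts or direct isomorphism checks you anticipated turns out to be unnecessary.
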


\begin{proof}
Note that, up to symmetry, there are two types of edges in $G$ -- those contained in only 4-cycles and those contained in both 3-cycles and 4-cycles. 

First, suppose $\lambda(v_1,v_2)=b$, noting that $\{v_1,v_2\}$ is not part of any 3-cycle. If any edge $(v_4,w)$, $(v_{8},w)$, $(v_{24},w)$, $(w,v_3)$, $(w,v_5)$, or $(w,v_{23})$ is labeled using bond-edge type $b$, then the resulting pot $P$ is such that there exists $H \in \mathcal{O}(P)$ with a multiple-edge. Together with the edges incident with $\{v_1,v_2\}$, this accounts for 20 of the 94 unlabeled oriented edges in $G$. For the remaining unlabeled oriented edges, it may be computationally verified that an edge ``swap" with $(v_1,v_2)$ results in a graph with a different number of $3$-cycles, $4$-cycles, or $5$-cycles. For nearly all such edges, if the edge is swapped with $(v_1,v_2)$, then the number of 3-cycles or 4-cycles changes. The three exceptions are $(v_6,v_9), (v_{11},v_{10})$, and $(v_{12},v_7)$, with which a swap changes the number of 5-cycles. Thus, when $\lambda(v_1,v_2)=b$, the only other edges that can be labeled with bond-edge type $b$ are those with initial vertex $v_1$ or terminal vertex $v_2$. 

Now suppose $\lambda(v_1,v_3)=b$, noting that $\{v_1,v_3\}$ is contained in a 3-cycle. In an identical fashion to the analysis of edge $(v_1,v_2)$, 20 of the 94 unlabeled oriented edges are either incident with $\{v_1,v_3\}$ or will produce a graph with a multiple-edge if swapped with $(v_1,v_3)$. For nearly all of the remaining unlabeled oriented edges, a swap results in a graph with a different number of $3$-cycles or a different number of $4$-cycles. The two exceptions are $(v_8, v_{18})$ and $(v_{12},v_8)$, with which a swap changes the number of 5-cycles. Thus, when $\lambda(v_1,v_3)=b$, the only other edges that can be labeled with bond-edge type $b$ are those with initial vertex $v_1$ or terminal vertex $v_3$. 

In both cases we have shown that if $\lambda(v,w)=\lambda(v',w')=b$ with $v \neq v'$ and $w \neq w'$, then the resulting pot is such that a nonisomorphic graph can be realized.
\end{proof}

In the proof of the following proposition, we apply the logic of Theorem \ref{thm:B3lower} to the small rhombicuboctahedron in order to create a Scenario 3 labeling of the graph using a minimum vertex cover as ``source" vertices. However, when using various minimum vertex covers of $G$, we find that smaller order graphs were realized by the resulting pots. Note that it is not feasible to check every possible minimum vertex cover for $G$, as for each 3-cycle there are three choices for a two-vertex covering set. We elect to slightly modify a pot derived from a minimum vertex cover to achieve a number of bond-edge types just one greater than the lower bound and leave it as an open question as to whether the minimum can be achieved.

\begin{proposition} \label{rhombiB3} Let $G$ be the small rhombicuboctahedron. Then, $16 \le B_3(G) \le 17$. \end{proposition}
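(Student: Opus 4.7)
The plan is to prove the two bounds separately. For the lower bound $B_3(G) \geq 16$, I would apply Theorem \ref{thm:B3lower} together with Lemma \ref{rhombiunswap}, which reduces the problem to showing that every vertex cover of $G$ has at least 16 vertices. The key structural fact is that $G$ has 8 triangular faces and vertex configuration $(3,4,4,4)$, so each vertex lies in exactly one triangle and the 8 triangular faces partition $V(G)$ into eight vertex-disjoint triples. Since any vertex cover must include at least two of the three vertices of each such triangle, every vertex cover of $G$ has size at least $2 \cdot 8 = 16$, and hence $B_3(G) \geq 16$.

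For the upper bound $B_3(G) \leq 17$, I would exhibit an explicit assembly design of $G$ and verify the Scenario 3 conditions via the SRPS from \cite{ashworth}. Starting from a minimum vertex cover $K$ of size 16, obtained by selecting two vertices from each triangular face in such a way that the eight excluded vertices form an independent set, I would follow the source-vertex strategy of Remark \ref{labelremark}: each $v \in K$ is designated a source, with all edges incident to $v$ carrying the un-hatted cohesive end $a_v$ at $v$ and its complement $\hat{a}_v$ at the other endpoint; for edges with both endpoints in $K$, one endpoint is chosen as the source. This yields a preliminary pot with exactly 16 bond-edge types.

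The main obstacle, already flagged in the paragraph preceding the proposition, is that pots derived directly from various minimum vertex covers of $G$ fail Scenario 3 because they realize graphs of order less than 24, and an exhaustive search is impractical given the $3^8$ choices of which vertex to exclude from each triangle. To sidestep this, I would modify the 16-type pot by splitting the label at a single judiciously chosen source vertex $v^*$: rather than labeling all four edges at $v^*$ with $a_{v^*}$, I partition them into two groups and label them with two distinct types $a_{v^*}$ and $a'_{v^*}$. The resulting pot $P$ has 17 bond-edge types. Because every bond-edge type is still supported only on edges incident to a single source, any two edges sharing a type are themselves incident, so unswappability of $G$ via Lemma \ref{rhombiunswap} immediately precludes any nonisomorphic graph of equal order in $\mathcal{O}(P)$. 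The remaining Scenario 3 requirement, that no smaller-order graph be realized, I would discharge by running the SRPS script on $P$. The combinatorial crux is thus to identify both the minimum vertex cover $K$ and the source vertex $v^*$ at which to perform the split so that the SRPS certifies $\#V(H) \geq 24$ for every $H \in \mathcal{O}(P)$.
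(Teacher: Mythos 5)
Your proposal matches the paper's proof in essentials: the lower bound comes from the eight vertex-disjoint triangles forcing any vertex cover to have size at least $16$, combined with Lemma \ref{rhombiunswap} and Theorem \ref{thm:B3lower}, and the upper bound comes from a source-vertex labeling on a $16$-vertex minimum cover that fails the small-order check, repaired by introducing one extra bond-edge type at a single source vertex and certified with the SRPS script --- exactly the paper's construction (its pot $P'$ splits the labels at $v_6$ into types $j$ and $q$). The only caveat is that, as in the paper, the argument is complete only once an explicit $17$-type pot is exhibited and the SRPS verification actually performed.
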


\begin{proof} 
Since $G$ contains eight disjoint 3-cycles, and each 3-cycle has a minimum vertex cover of two, the cardinality of a minimum vertex cover of $G$ is 16. By Theorem \ref{thm:B3lower}, $B_3(G) \ge 16$. Consider the following minimum vertex cover of $G$.
$$S = \{v_2,v_3,v_4,v_5,v_6,v_9,v_{11},v_{12},v_{13},v_{15},v_{16},v_{17},v_{18},v_{19},v_{22},v_{23}\}$$ 
Using $S$ as the set of source vertices for a labeling of $G$ results in a pot $P$ such that $\#\Sigma(P)=16$ and $P$ realizes $G$. However, it can be verified with the SRPS script from \cite{ashworth} that there exists $H \in \mathcal{O}(P)$ with $\#V(H)=20 < \#V(G)$. Changing the labeling of a few edges to an additional bond-edge type gives the following pot of 17 tile types, which realizes $G$ as shown in Figure \ref{fig:rhombi}. 
\begin{align*} P' = \{\{a^4\},\{b^4\},\{c^4\},\{d^4\},\{e^4\},\{f^4\},\{g^4\},\{h^4\},\{\hat{a},\hat{b},i^2\},\{\hat{a},\hat{i},\hat{k},\hat{n}\},\{\hat{a},\hat{b},\hat{m},\hat{p}\},\{\hat{a},\hat{h},\hat{m},\hat{n}\}, \\ 
\{\hat{b},\hat{c},\hat{g},p\},\{\hat{b},\hat{c},j,q\},\{\hat{c},\hat{d},\hat{e},\hat{j}\},\{\hat{c},\hat{e},\hat{g},\hat{l}\},\{\hat{d},\hat{i},\hat{k},\hat{q}\},\{\hat{d},\hat{f},k^2\},\{\hat{d},\hat{e},\hat{f},o\},\{\hat{e},l^3\},\{\hat{f},\hat{h},\hat{l},\hat{o}\},\\ \{\hat{f},\hat{h},n^2\},\{\hat{g},m^3\},\{\hat{g},\hat{h},\hat{l},\hat{m}\}\} \end{align*}

Using the SRPS script from \cite{ashworth}, it can be verified that for all $H \in \mathcal{O}(P')$, $\#V(H) \geq 24 = \#V(G)$. Thus, $B_3(G) \le 17$. 
\end{proof}

\begin{figure}[hbt!]
    \centering
    \begin{tikzpicture}[transform shape,scale=0.8]
    \node[main node] (10) at (0,0) {$v_{10}$};
    \node[main node] (11) at (1,0) {$v_{11}$};
    \node[main node] (15) at (0,-1) {$v_{15}$};
    \node[main node] (16) at (1,-1) {$v_{16}$};
    \node[main node] (6) at (0,1) {$v_6$};
    \node[main node] (7) at (1,1) {$v_7$};
    \node[main node] (9) at (-1,0) {$v_9$};
    \node[main node] (12) at (2,0) {$v_{12}$};
    \node[main node] (14) at (-1,-1) {$v_{14}$};
    \node[main node] (17) at (2,-1) {$v_{17}$};
    \node[main node] (19) at (0,-2) {$v_{19}$};
    \node[main node] (20) at (1,-2) {$v_{20}$};
    \node[main node] (3) at (-0.5,2.5) {$v_3$};
    \node[main node] (4) at (1.5,2.5) {$v_4$};
    \node[main node] (5) at (-2.5,0.5) {$v_5$};
    \node[main node] (8) at (3.5,0.5) {$v_8$};
    \node[main node] (13) at (-2.5,-2.5) {$v_{13}$};
    \node[main node] (18) at (3.5,-1.5) {$v_{18}$};
    \node[main node] (21) at (-0.5,-3.5) {$v_{21}$};
    \node[main node] (22) at (1.5,-3.5) {$v_{22}$};
    \node[main node] (1) at (-3.5,3.5) {$v_1$};
    \node[main node] (2) at (4.5,3.5) {$v_2$};
    \node[main node] (23) at (-3.5,-4.5) {$v_{23}$};
    \node[main node] (24) at (4.5,-4.5) {$v_{24}$};

        \path[draw,thick,color=red,->]
        (2) edge node[]{} (24)
        (2) edge node[]{} (8)
        (2) edge node[very near start,above]{$a$} (1)
        (2) edge node[]{} (4);
        \path[draw,thick,color=blue,->]
        (3) edge node[very near start, above]{$b$} (1)
        (3) edge node[]{} (4)
        (3) edge node[]{} (5)
        (3) edge node[]{} (6);
        \path[draw,thick,color=green,->]
        (9) edge node[near start,above]{$c$} (5)
        (9) edge node[]{} (6)
        (9) edge node[]{} (14)
        (9) edge node[]{} (10);
        \path[draw,thick,color=orange,->]
        (11) edge node[near start, right]{$d$} (7)
        (11) edge node[]{} (12)
        (11) edge node[]{} (16)
        (11) edge node[]{} (10);
        \path[draw,thick,color=purple,->]
        (15) edge node[near start, right]{$e$} (10)
        (15) edge node[]{} (14)
        (15) edge node[]{} (19)
        (15) edge node[]{} (16);
        \path[draw,thick,color=teal,->]
        (17) edge node[near start,right]{$f$} (12)
        (17) edge node[]{} (16)
        (17) edge node[]{} (18)
        (17) edge node[]{} (20);
        \path[draw,thick,color=pink,->]
        (13) edge node[near start,left]{$g$} (5)
        (13) edge node[]{} (14)
        (13) edge node[]{} (21)
        (13) edge node[]{} (23);
        \path[draw,thick,color=brown,->]
        (22) edge node[near start, below]{$h$} (18)
        (22) edge node[]{} (20)
        (22) edge node[]{} (21)
        (22) edge node[]{} (24);
        \path[draw,thick,color=violet,->]
        (4) edge node[near start, below]{$i$} (7)
        (4) edge node[]{} (8);
        \path[draw,thick,color=magenta,->]
        (6) edge node[near start, right]{$j$} (10);
        \path[draw,thick,color=lime,->]
        (12) edge node[near start, above]{$k$} (7)
        (12) edge node[]{} (8);
        \path[draw,thick,color=cyan,->]
        (19) edge node[near start, below]{$l$} (14)
        (19) edge node[]{} (20)
        (19) edge node[]{} (21);
        \path[draw,thick,color=olive,->]
        (23) edge node[very near start, left]{$m$} (1)
        (23) edge node[]{} (24)
        (23) edge node[]{} (21);
        \path[draw,thick,color=Mahogany,->]
        (18) edge node[very near start, right]{$n$} (8)
        (18) edge node[]{} (24);
        \path[draw,thick,color=MidnightBlue,->]
        (16) edge node[near start, right]{$o$} (20);
        \path[draw,thick,color=yellow,->]
        (5) edge node[near start, right]{$p$} (1);
        \path[draw,thick,color=Tan,->]
        (6) edge node[very near start, above]{$q$} (7);
    \end{tikzpicture}
    \caption{Scenario 3 labeling of small rhombicuboctahedron}
    \label{fig:rhombi}
\end{figure}
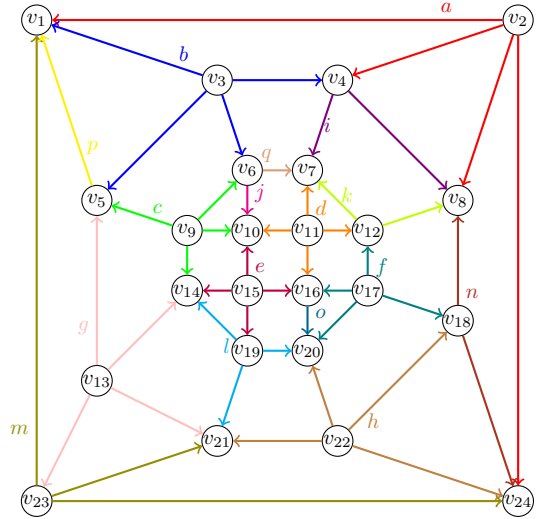

\begin{lemma}\label{snubcubeunswap} The snub cube is unswappable. \end{lemma}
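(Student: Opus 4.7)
The plan is to follow the template set by Lemmas \ref{cuboctunswap} and \ref{rhombiunswap}. Since the snub cube is vertex-transitive, its $60$ edges partition into a small number of orbits under the symmetry group --- on the order of two or three, depending on whether one distinguishes the ``symmetric'' triangular faces (one at each original cube vertex) from the ``skewed'' triangular faces. It suffices to pick one oriented representative from each orbit and rule out every other oriented edge as a potential same-label partner.

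For each representative $(v_a, v_b)$ labeled $b$, I would first enumerate by inspection the edges whose relabeling with $b$ would immediately yield a multi-edge upon swapping. As in the two previous lemmas, these are the edges incident with $v_a$ or $v_b$, together with the edges whose endpoints share a common neighbor with $v_a$ or $v_b$ in a configuration that duplicates an existing edge. Given the local structure (degree five, with four triangles at every vertex), I expect this to account for roughly twenty of the $118$ remaining oriented edges. For every oriented edge outside this list, I would invoke Algorithm \ref{alg:cycles} and check whether the swap with $(v_a,v_b)$ alters the number of $3$-, $4$-, or $5$-cycles; any change witnesses a nonisomorphic graph in the output of the resulting pot, since $n$-cycle counts are graph isomorphism invariants.

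The main obstacle is purely computational scale: with $120$ oriented edges on $60$ unordered edges, a swap-by-swap verification is infeasible by hand, but is exactly what the Python script of Section \ref{sec:methods} is designed to handle. A subtler point, already seen in Lemma \ref{rhombiunswap}, is that some swaps will preserve both the $3$- and $4$-cycle counts and must be caught by the $5$-cycle formula of Theorem \ref{thm:5cycles}; the abundance of triangular faces at each vertex of the snub cube makes such borderline swaps especially plausible, so the $5$-cycle test must be retained in the pipeline. Once the script output confirms a cycle-count change for every remaining oriented edge across all orbits, the conclusion follows exactly as in Lemmas \ref{cuboctunswap} and \ref{rhombiunswap}: since the choice of representative within each orbit was arbitrary, any two disjoint edges $\{v,w\}$ and $\{v',w'\}$ with $\lambda(v,w) = \lambda(v',w') = b$ yield a pot realizing a graph nonisomorphic to the snub cube, so the snub cube is unswappable.
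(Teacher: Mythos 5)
Your proposal is correct and follows essentially the same approach as the paper's proof: choose one oriented representative from each edge orbit of the snub cube (the paper distinguishes three such orbits, based on containment in 4-cycles and in the two kinds of 3-cycles), eliminate by hand the swaps that produce multiple-edges (the paper finds 43 of the 118 remaining oriented edges per case, more than your rough estimate, but this is immaterial), and dispatch all remaining oriented edges computationally via the 3-, 4-, and 5-cycle counts, with the 5-cycle (and in one case 4-cycle) test indeed needed for a few borderline swaps exactly as you anticipate.
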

\begin{proof}
Note that, up to symmetry, there are three types of edges in $G$: (1) those contained in 4-cycles, (2) those contained in two 3-cycles, each of which share an edge with a 4-cycle, (3) those contained in two 3-cycles, of which one shares an edge with a 4-cycle and one shares no edge with a 4-cycle. 

First, suppose $\lambda(v_1,v_2)=b$, noting that $\{v_1,v_2\}$ is contained in a 4-cycle. If any edge $(v_3,w)$, $(v_6,w)$, $(v_{15},w)$,$(v_{24},w)$, $(w,v_3)$, $(w,v_4)$, $(w,v_8)$, or $(w,v_{23})$ is labeled using bond-edge type $b$, then the resulting pot $P$ is such that there exists $H \in \mathcal{O}(P)$ with a multiple-edge. Together with the edges incident with $\{v_1,v_2\}$, this accounts for 43 of the 118 unlabeled oriented edges in $G$. For all but four of the remaining unlabeled oriented edges, it may be computationally verified that an edge ``swap" with $(v_1,v_2)$ results in a graph with a different number of $3$-cycles. A swap with edges $(v_{16}, v_9)$, $(v_{16},v_{19})$ $(v_{21},v_{18})$, or $(v_{22},v_{19})$ results in a graph with a different number of $4$-cycles. Thus, when $\lambda(v_1,\{v_1,v_2\})=b$, the only other edges that can be labeled with bond-edge type $b$ are those with initial vertex $v_1$ or terminal vertex $v_2$. 

Now, suppose $\lambda(v_1,v_3)=b$, noting that $\{v_1,v_3\}$ is contained in two 3-cycles that both share an edge with a 4-cycle. In an identical fashion to the analysis of edge $(v_1,v_2)$, 43 of the 118 unlabeled oriented edges are either incident with $\{v_1,v_3\}$ or will produce a graph with a multiple-edge if swapped with $(v_1,v_3)$. For all but two of the remaining unlabeled oriented edges, it may be computationally verified that an edge ``swap" with $(v_1,v_3)$ results in a graph with a different number of $3$-cycles. A swap with edges $(v_{7}, v_9)$ or $(v_{15},v_{24})$ results in a graph with a different number of $5$-cycles. Thus, when $\lambda(v_1,\{v_1,v_3\})=b$, the only other edges that can be labeled with bond-edge type $b$ are those with initial vertex $v_1$ or terminal vertex $v_3$. 

Now, suppose $\lambda(v_1,v_4)=b$, noting that $\{v_1,v_4\}$ is contained in two 3-cycles, one of which shares an edge with a 4-cycle. In an identical fashion to the analysis of edge $(v_1,v_2)$, 43 of the 118 unlabeled oriented edges are either incident with $\{v_1,v_4\}$ or will produce a graph with a multiple-edge if swapped with $(v_1,v_4)$. For all but two of the remaining unlabeled oriented edges, it may be computationally verified that an edge ``swap" with $(v_1,v_4)$ results in a graph with a different number of $3$-cycles. A swap with edge $(v_{16}, v_{19})$ results in a graph with a different number of $4$-cycles, and a swap with edge $(v_5,v_6)$ results in a graph with a different number of $5$-cycles. Thus, when $\lambda(v_1,\{v_1,v_4\})=b$, the only other edges that can be labeled with bond-edge type $b$ are those with initial vertex $v_1$ or terminal vertex $v_4$.

In all cases we have shown that if $\lambda(v,w)=\lambda(v',w')=b$ with $v \neq v'$ and $w \neq w'$, then the resulting pot is such that a nonisomorphic graph can be realized.
\end{proof}

\begin{proposition}\label{snubcubeB3} Let $G$ be the snub cube. Then $B_3(G)=16$. \end{proposition}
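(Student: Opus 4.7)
The plan is to mirror the proof strategy used for the cuboctahedron in Proposition \ref{cuboctB3}: establish a matching lower and upper bound using the minimum vertex cover framework of Theorem \ref{thm:B3lower}, together with an explicit source-vertex labeling verified via the SRPS script.

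For the lower bound, I would exploit the combinatorial structure of the snub cube. Recall that its face set consists of 32 triangles and 6 squares, with each vertex incident to exactly one square and four triangles. Among these triangles, eight come in a distinguished family that is pairwise vertex-disjoint (these are the ``snub'' triangles, one attached to each corner of the underlying cube, which together partition $V(G)$ into $8$ blocks of $3$). Since each of these eight 3-cycles requires at least two vertices in any vertex cover, a minimum vertex cover of $G$ has cardinality at least $16$. In fact, equality holds: choosing any two vertices from each snub triangle covers all edges within the snub triangles, and because every remaining edge of $G$ joins two snub triangles, a careful selection will cover them as well. Combined with Lemma \ref{snubcubeunswap} and Theorem \ref{thm:B3lower}, this yields $B_3(G) \geq 16$.

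For the upper bound, I would fix a specific minimum vertex cover $S \subseteq V(G)$ of size $16$ (concretely, two vertices from each of the eight disjoint snub triangles, chosen so that the $8$ remaining vertices are pairwise nonadjacent in a way that allows the inter-triangle edges to be covered). Using $S$ as the set of source vertices as in Remark \ref{labelremark}, each of the $16$ vertices in $S$ contributes one new bond-edge type and assigns its un-hatted end to every incident edge; each of the $8$ complementary vertices will become a tile of five hatted ends whose types are dictated by its neighbors in $S$. This procedure produces a pot $P$ with $\#\Sigma(P) = 16$ that realizes $G$.

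It remains to verify that $P$ satisfies Scenario 3, i.e. that every $H \in \mathcal{O}(P)$ has $\#V(H) \geq 24$ and, when $\#V(H) = 24$, is isomorphic to $G$. The first condition is checked computationally via the SRPS script of \cite{ashworth}; this step is the main obstacle, since (as observed in the proof of Proposition \ref{rhombiB3}) not every minimum vertex cover produces a Scenario-2 pot, and several candidate covers may need to be tried before one succeeds. The second condition follows automatically from the unswappability of $G$ established in Lemma \ref{snubcubeunswap}: any pot realizing $G$ on $24$ vertices and respecting the source-vertex structure can only rearrange edges of common type via incident swaps, and these preserve the isomorphism type. Combining the two bounds then gives $B_3(G) = 16$.
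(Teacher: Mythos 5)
Your overall strategy is the same as the paper's: the lower bound comes from the eight pairwise disjoint ``corner'' triangles of the snub cube, each forcing two vertices into any vertex cover, so a minimum vertex cover has size $16$ and Lemma \ref{snubcubeunswap} plus Theorem \ref{thm:B3lower} give $B_3(G)\geq 16$; the upper bound comes from using a $16$-vertex cover as source vertices, producing a pot with $16$ bond-edge types, checking the minimum realizable order with the SRPS script, and invoking unswappability for the equal-order condition. This is exactly the structure of the paper's argument, and your lower bound is complete as written.

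The gap is in the upper bound. First, a small but real inaccuracy: you say each of the $16$ cover vertices ``assigns its un-hatted end to every incident edge,'' but a cover of size $16$ in this $5$-regular graph leaves $20$ edges with both endpoints in the cover, and on those edges both endpoints cannot simultaneously act as the source; one must orient the cover--cover edges so that exactly one endpoint supplies the un-hatted end (this is why, in the paper's pot, several source-vertex tiles such as $\{\hat{a},\hat{b},h^3\}$ carry hatted ends). Second, and more importantly, the result $B_3(G)=16$ stands or falls on actually exhibiting a specific cover whose derived pot passes the SRPS check, and you explicitly leave this as an unperformed computation that ``may need several candidate covers.'' That is not a formality: for the small rhombicuboctahedron the analogous construction fails for every cover the authors tried, and only $B_3\leq 17$ is obtained (Proposition \ref{rhombiB3}). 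The paper closes this step by writing down an explicit cover $S$ and the explicit $16$-type pot $P$ and reporting that SRPS confirms $\#V(H)\geq 24$ for all $H\in\mathcal{O}(P)$. Without such a certified pot, your argument establishes only $B_3(G)\geq 16$ together with a plausible plan for the matching upper bound, not the equality claimed.
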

\begin{proof}
Since $G$ contains eight disjoint 3-cycles and each 3-cycle has a minimum vertex cover of two, the cardinality of a minimum vertex cover of $G$ is 16. By Theorem \ref{thm:B3lower}, $B_3(G) \ge 16$. Consider the following minimum vertex cover of $G$.
$$S = \{v_1,v_3,v_6,v_7,v_8,v_9,v_{10},v_{11},v_{14},v_{15},v_{16},v_{17},v_{18},v_{19},v_{22},v_{24}\}$$ 
Using $S$ as the set of source vertices for a labeling of $G$ results in the following pot $P$ such that $\#\Sigma(P)=16$ and $P$ realizes $G$ (see Figure \ref{fig:snubcube}). 
\begin{align*} P = \{\{a^5\},\{b^5\},\{c^5\},\{d^5\},\{e^5\},\{f^5\},\{g^5\},\{\hat{a},\hat{b},\hat{g},\hat{h},\hat{m}\},\{\hat{a},\hat{b},h^3\},\{\hat{a},\hat{c},\hat{h},\hat{i},\hat{j}\},\{\hat{a},\hat{f},i^3\}, \\ 
\{\hat{a},\hat{f},\hat{g},\hat{i},\hat{p}\},\{\hat{b},\hat{c},\hat{h},\hat{k},\hat{l}\},\{\hat{b},\hat{d},\hat{k},l^2\},\{\hat{b},\hat{d},\hat{g},m^2\},\{\hat{c},\hat{i},j^3\},\{\hat{c},k^4\},\{\hat{c},\hat{e},\hat{j},\hat{k},\hat{o}\}, \\
\{\hat{d},\hat{e},\hat{k},\hat{l},\hat{n}\},\{\hat{d},\hat{e},n^3\},\{\hat{d},\hat{g},\hat{m},\hat{n},\hat{p}\},\{\hat{e},\hat{f},\hat{j},o^2\},\{\hat{e},\hat{f},\hat{n},\hat{o},\hat{p}\},\{\hat{f},\hat{g},p^3\}\} \end{align*}

It can be verified with the SRPS script from \cite{ashworth} that $\#V(H) \geq 24 =  \#V(G)$ for all $H \in \mathcal{O}(P)$. Thus, $B_3(G)=16.$
\end{proof}

\begin{figure}[hbt!]
    \centering
    \begin{tikzpicture}[transform shape,scale=0.8]
        \node[main node] (10) at (0,0) {$v_{10}$};
        \node[main node] (12) at (-1,-0.5) {$v_{12}$};
        \node[main node] (17) at (-0.5,-1.5) {$v_{17}$};
        \node[main node] (13) at (0.5,-1) {$v_{13}$};
        \node[main node] (7) at (-1.25,0.75) {$v_7$};
        \node[main node] (11) at (1.25,0.25) {$v_{11}$};
        \node[main node] (18) at (0.75,-2.25) {$v_{18}$};
        \node[main node] (16) at (-1.75,-1.75) {$v_{16}$};
        \node[main node] (5) at (0.25,1.25) {$v_5$};
        \node[main node] (14) at (1.75,-0.75) {$v_{14}$};
        \node[main node] (20) at (-0.75,-2.75) {$v_{20}$};
        \node[main node] (9) at (-2.25,-0.25) {$v_9$};
        \node[main node] (4) at (-2.25,1.5) {$v_4$};
        \node[main node] (6) at (2,1.75) {$v_6$};
        \node[main node] (21) at (2.25,-2.5) {$v_{21}$};
        \node[main node] (19) at (-2.5,-3.25) {$v_{19}$};
        \node[main node] (8) at (-4.25,-0.75) {$v_8$};
        \node[main node] (22) at (0.25,-4.75) {$v_{22}$};
        \node[main node] (15) at (3.75,-0.25) {$v_{15}$};
        \node[main node] (3) at (-0.25,3.25) {$v_3$};
        \node[main node] (1) at (-5,4.25) {$v_1$};
        \node[main node] (2) at (4.75,4.25) {$v_2$};
        \node[main node] (23) at (-5.25,-5.75) {$v_{23}$};
        \node[main node] (24) at (4.75,-5.75) {$v_{24}$};

            \path[draw,thick, color=red, ->]
            (1) edge node[very near start,above]{$a$} (2)
            (1) edge node[]{} (3)
            (1) edge node[]{} (4)
            (1) edge node[]{} (8)
            (1) edge node[]{} (23);

             \path[draw,thick, color=blue, ->]
            (6) edge node[]{} (11)
            (6) edge node[]{} (15)
            (6) edge node[]{} (3)
            (6) edge node[]{} (5)
            (6) edge node[very near start,right]{$b$} (2);

            \path[draw,thick, color=green, ->]
            (7) edge node[]{} (4)
            (7) edge node[near start, above]{$c$} (5)
            (7) edge node[]{} (9)
            (7) edge node[]{} (10)
            (7) edge node[]{} (12);

            \path[draw,thick, color=orange, ->]
            (14) edge node[]{} (11)
            (14) edge node[]{} (13)
            (14) edge node[near start, above]{$d$} (15)
            (14) edge node[]{} (18)
            (14) edge node[]{} (21);

            \path[draw,thick, color=purple, ->]
            (17) edge node[]{} (12)
            (17) edge node[]{} (13)
            (17) edge node[]{} (16)
            (17) edge node[near start,below]{$e$} (18)
            (17) edge node[]{} (20);

            \path[draw,thick, color=teal, ->]
            (19) edge node[]{} (8)
            (19) edge node[]{} (16)
            (19) edge node[]{} (20)
            (19) edge node[]{} (22)
            (19) edge node[very near start,below]{$f$} (23);

            \path[draw,thick, color=pink, ->]
            (24) edge node[very near start, right]{$g$} (2)
            (24) edge node[]{} (15)
            (24) edge node[]{} (21)
            (24) edge node[]{} (22)
            (24) edge node[]{} (23);
            
            \path[draw,thick, color=brown, ->]
            (3) edge node[very near start, above]{$h$} (2)
            (3) edge node[]{} (4)
            (3) edge node[]{} (5);

            \path[draw,thick, color=violet, ->]
            (8) edge node[very near start, above]{$i$} (4)
            (8) edge node[]{} (9)
            (8) edge node[]{} (23);

            \path[draw,thick, color=magenta, ->]
            (9) edge node[]{} (4)
            (9) edge node[]{} (12) 
            (9) edge node[near start,left]{$j$} (16);

            \path[draw,thick, color=lime, ->]
            (10) edge node[]{} (5)
            (10) edge node[near start,above]{$k$} (11)
            (10) edge node[]{} (12)
            (10) edge node[]{} (13);

            \path[draw,thick, color=cyan, ->]
            (11) edge node[]{} (13)
            (11) edge node[near start, above]{$l$} (5);

            \path[draw,thick, color=olive, ->]
            (15) edge node[]{} (21)
            (15) edge node[very near start, right]{$m$} (2);

            \path[draw,thick, color=Mahogany, ->]
            (16) edge node[]{} (12)
            (16) edge node[near start,left]{$o$} (20);

            \path[draw,thick, color=MidnightBlue, ->]
            (18) edge node[]{} (13)
            (18) edge node[near start,below]{$n$} (20)
            (18) edge node[]{} (21);

            \path[draw,thick, color=yellow, ->]
            (22) edge node[near start, right]{$p$} (20)
            (22) edge node[]{} (21)
            (22) edge node[]{} (23);
    \end{tikzpicture}
    \caption{Scenario 3 labeling of the snub cube}
    \label{fig:snubcube}
\end{figure}
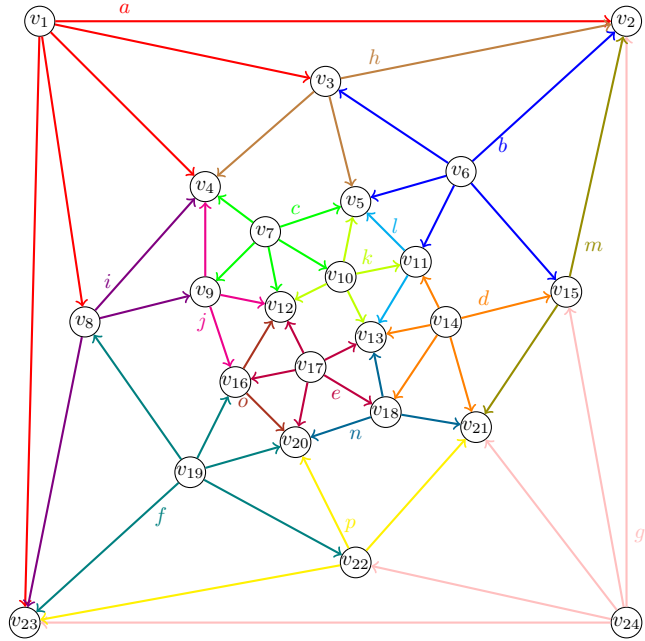

\begin{proposition} Let $G$ be the cuboctahedron, the small rhombicuboctahedron, or the snub cube. Then, $T_3(G) = \#V(G)$. \end{proposition}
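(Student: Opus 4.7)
The strategy is to sandwich $T_3(G)$ between $\#V(G)$ on both sides, using Proposition \ref{thm:T3lower} for the lower bound and the pots already exhibited in Propositions \ref{cuboctB3}, \ref{rhombiB3}, and \ref{snubcubeB3} for the upper bound. Since Lemmas \ref{cuboctunswap}, \ref{rhombiunswap}, and \ref{snubcubeunswap} establish that all three graphs are unswappable, Proposition \ref{thm:T3lower} gives $T_3(G) \geq X$, where $X = |\{N(v) : v \in V(G)\}|$. So the lower bound reduces to showing $X = \#V(G)$, i.e., that the neighborhood map $v \mapsto N(v)$ is injective.

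To show this injectivity, I would argue that no two distinct vertices $u \neq v$ can satisfy $N(u) = N(v)$. Such a pair would necessarily be non-adjacent (else $u \in N(u)$, forcing a loop) and would share all $k = \deg(v)$ neighbors, producing a $K_{2,k}$ subgraph. To rule this out, I would exploit the vertex-transitivity of each Archimedean graph: fix one vertex $u$ and check, for a representative of each orbit of the stabilizer of $u$ acting on $V(G) \setminus \{u\}$, that $|N(u) \cap N(v)| < k$. For the cuboctahedron and small rhombicuboctahedron ($k=4$), this means no non-adjacent pair shares four neighbors; for the snub cube ($k=5$), no non-adjacent pair shares five neighbors. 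Direct inspection of Figures \ref{fig:cuboct}, \ref{fig:rhombi}, and \ref{fig:snubcube} shows that any two vertices share at most two common neighbors (the ``apex'' vertices of a shared triangle or 4-cycle), which is strictly less than $k$ in every case.

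For the upper bound, I would simply tally the tile types in the pots already constructed in the preceding propositions. The pot for the cuboctahedron in Proposition \ref{cuboctB3} lists $12 = \#V(G)$ distinct tile types, the pot $P'$ for the small rhombicuboctahedron in Proposition \ref{rhombiB3} lists $24 = \#V(G)$ distinct tile types, and the pot for the snub cube in Proposition \ref{snubcubeB3} lists $24 = \#V(G)$ distinct tile types. Each of these pots has already been verified via the SRPS script to satisfy the Scenario 3 requirement that no smaller or nonisomorphic graph of equal order is realized. Hence $T_3(G) \leq \#V(G)$ in all three cases, completing the proof.

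The main obstacle will be verifying the distinct-neighborhoods property for the small rhombicuboctahedron and snub cube, whose size makes exhaustive manual checking unpleasant. Reducing via vertex-transitivity to a single vertex and its local structure is the key simplification; once this is in place, the bound $|N(u) \cap N(v)| \leq 2$ follows immediately from the fact that larger shared neighborhoods would force small complete bipartite subgraphs incompatible with the known face structure (triangles and small polygons) of these Archimedean graphs.
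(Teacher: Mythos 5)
Your proposal is correct and follows essentially the same route as the paper: the lower bound comes from the unswappability lemmas together with Proposition \ref{thm:T3lower} applied to the observation that all vertices have distinct neighbor sets (which the paper simply states as easily verified, while you flesh it out via vertex-transitivity and bounding common neighborhoods), and the upper bound comes from counting the tile types in the pots already constructed in Propositions \ref{cuboctB3}, \ref{rhombiB3}, and \ref{snubcubeB3}.
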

\begin{proof}
We have shown in Lemmas \ref{cuboctunswap}, \ref{rhombiunswap}, and \ref{snubcubeunswap} that the cuboctahedron, small rhoombicubocathedron, and snub cube are unswappable. For each graph, it can be easily verified that each vertex has a distinct neighbor set. By Proposition \ref{thm:T3lower}, $T_3(G) = \#V(G)$. In each case, the example pots given in the proofs of \ref{cuboctB3}, \ref{rhombiB3}, and \ref{snubcubeB3} are such that $\#P = \#V(G)$.
\end{proof}

\section{Swappable Archimedean graphs}\label{swap}

In this section we find that three Archimedean graphs do not possess the ``unswappable" property, which renders the lower bounds from \cite{baek} inapplicable. The three graphs are all 3-regular, which may point to lesser connectivity and greater flexibility in labeling. Interestingly, we find that while the truncated tetrahedron and truncated cube are ``swappable," these graphs require no fewer bond-edge or tile types than if they were unswappable. That is, the swappability of the graph does not allow for greater repetition of bond-edge and tile types in the graph's labeling. On the other hand, we show that the truncated octahedron is a special case, such that fewer numbers of both tile and bond-edge types are sufficient. 

Throughout, we consider the vertex labeling of the truncated tetrahedron shown in Figure \ref{fig:trunctet}.

\begin{lemma}\label{trunctetrepeat} Let $G$ be the truncated tetrahedron. Then a single bond-edge type can be used to label at most three edges of $G$.
\end{lemma}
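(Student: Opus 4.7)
The plan is to argue by contradiction: suppose four edges $e_1, e_2, e_3, e_4$ of the truncated tetrahedron $G$ all receive the same label $b$ in a Scenario 3 assembly design. As observed in Section \ref{sec:methods}, at any vertex $v$ shared by two or more of these edges, the orientations of those edges at $v$ must agree, or else a loop can be realized at $v$, already contradicting Scenario 3. Consequently, the subgraph $H$ of $G$ spanned by $\{e_1, e_2, e_3, e_4\}$ admits a bipartition of its vertex set into ``sources'' (vertices at which all incident $b$-edges are oriented outward) and ``sinks'' (all incident $b$-edges oriented inward). Thus $H$ is bipartite with maximum degree at most $3$, since $G$ is $3$-regular.

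Using this reduction, I would enumerate, up to isomorphism of bipartite graphs, every possible $H$ on at most $8$ vertices with exactly four edges and maximum degree three: four disjoint edges, a $4$-cycle, a path of length four, a path of length three together with a disjoint edge, two paths of length two, and a star $K_{1,3}$ together with a disjoint edge. For each $H$, I would then enumerate the essentially distinct embeddings of $H$ into $G$ by exploiting the automorphism group of $G$ together with its two edge orbits (triangle edges and hexagon-only edges), so that the list of embeddings stays manageable.

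For each embedding, the proof reduces to checking one of two conditions. Either some non-incident pair in $H$, upon being swapped according to their bond-edge orientations, creates a multi-edge in the resulting graph (so that the pot realizes a nonisomorphic graph of order $12$), or Algorithm \ref{alg:cycles} certifies that the swap alters the count of $3$-, $4$-, or $5$-cycles. The multi-edge criterion immediately dispatches $H = C_4$ and many subcases in which the ``outer'' endpoints of a non-incident pair are already adjacent in $G$. The cycle-count criterion then handles the remaining tree-like configurations by exhibiting a non-incident pair whose swap changes the number of $3$-cycles; this is a natural obstruction because every edge of $G$ lies in exactly one triangular face.

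The main obstacle is the bookkeeping of the case analysis, though each individual case should be dispatched quickly once Algorithm \ref{alg:cycles} is invoked on the relevant oriented edge pair. Since three edges can indeed share a label (as exhibited, for instance, by the three edges meeting at a single vertex $v$ and all oriented outward from $v$, so that each other endpoint receives a distinct $\hat{b}$ half-edge), the bound of three is tight, confirming the lemma.
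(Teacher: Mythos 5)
Your plan has the right general shape (pairwise swap obstructions, multi-edge checks, Algorithm \ref{alg:cycles}), but it rests on a false structural claim that breaks exactly where the real difficulty lies. You assert that ``every edge of $G$ lies in exactly one triangular face'' and that the leftover tree-like configurations can therefore be dispatched by a pair whose swap changes the number of $3$-cycles. In the truncated tetrahedron only $12$ of the $18$ edges lie on triangles; the other $6$ lie only on hexagonal faces and form a perfect matching. For these edges your criterion fails in two ways. First, swapping two hexagon-only edges never destroys a triangle, so $3$-cycle counts are typically useless there; the actual obstructions are multi-edges, $4$-cycles, and $5$-cycles. Second, and more seriously, some non-incident pairs of hexagon-only edges admit \emph{no} obstruction at all: if two (or three) hexagon-only edges attached to a common $3$-cycle are labeled with the same bond-edge type and oriented consistently away from (or toward) that $3$-cycle, every resulting swap yields a graph isomorphic to $G$ --- this is precisely why the truncated tetrahedron is swappable, and why three pairwise non-incident edges can legitimately share a label. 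Your procedure, which expects every configuration to be killed by a multi-edge or a cycle-count change over some pair, has no provision for pairs where both checks come back empty; to exclude a fourth edge in these cases one must show it conflicts with at least one of the three compatible edges via a multi-edge, a created $4$-cycle, or a created $5$-cycle (e.g., pairing an edge out of the $3$-cycle $\{v_1,v_2,v_3\}$ with one out of a different $3$-cycle forces a multi-edge or a $4$-cycle), which is the heart of the paper's argument and is absent from your plan.

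There are also smaller gaps: your enumeration of $4$-edge subgraphs $H$ omits the configuration consisting of a two-edge path plus two disjoint edges and the tree with degree sequence $(3,2,1,1,1)$ (a $K_{1,3}$ with one pendant edge attached at a leaf), while including $C_4$, which does not occur in the truncated tetrahedron; and the orientation data matters beyond bipartiteness of $H$, since whether a given pair is harmless or forces a multi-edge depends on how the two edges are oriented relative to the shared $3$-cycle. The paper avoids this bookkeeping by fixing a single labeled oriented edge of each of the two symmetry types and determining exactly which other oriented edges are pairwise compatible with it, observing that every maximal compatible set has size three; if you keep your ``find one bad pair in every $4$-set'' framing, you must first establish that same pairwise compatibility structure, including the isomorphic-swap cases you currently assume away.
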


\begin{proof} 
 Note that, up to symmetry, there are two types of edges in $G$ -- those that are contained within a 3-cycle, and those that are contained only within 6-cycles. 
 
 First, suppose $\lambda(v_1,v_2)=b$, noting that $\{v_1,v_2\}$ is within a 3-cycle. If any edge $(v_3,w), (w,v_3)$, $(w,v_7)$, or $(v_11,w)$ is labeled using bond-edge type $b$, then the resulting pot $P$ is such that there exists a graph $H \in \mathcal{O}(P)$ with a multiple-edge. Together with the edges incident with $\{v_1,v_2\}$, this accounts for 14 of the 34 unlabeled oriented edges in $G$. If bond-edge type $b$ is used to label any of the edges $(v_4,v_5)$, $(v_5,v_6)$, $(v_6,v_4)$, $(v_6,v_5)$, $(v_5,v_8)$, $(v_8,v_5)$, $(v_7,v_8)$, $(v_7,v_9)$, $(v_9,v_8)$, $(v_6,v_{10})$, $(v_{10},v_6)$, $(v_{12},v_9)$, $(v_{10},v_{11})$, $(v_{10},v_{12})$, or $(v_{12},v_{11})$, then a graph with a 4-cycle can be realized by the resulting pot (for example, see Figure \ref{fig:trunctet}). If bond-edge type $b$ is used to label any of the edges $(v_4,v_6)$, $(v_5,v_4)$, $(v_8,v_9)$, $(v_9,v_{12})$, or $(v_{12},v_{10})$, then a graph with a $5$-cycle can be realized by the resulting pot (for example, see Figure \ref{fig:trunctet}). Thus, when $\lambda(v_1,v_2)=b$, the only other edges that can be labeled with bond-edge type $b$ are those with initial vertex $v_1$ or terminal vertex $v_2$.

Now, suppose $\lambda(v_1,v_7)=b$, noting that $\{v_1,v_7\}$ is not part of any 3-cycle. If any edge $(w,v_2), (w,v_3), (v_8,w)$, or $(v_9,w)$ is labeled using bond-edge type $b$, then there exists a graph $H \in \mathcal{O}(P)$ with a multiple-edge. If edges $\{v_4,v_6\},\{v_5,v_6\},\{v_6,v_{10}\}, \{v_{10},v_{11}\},$ $\{v_{10},v_{12}\}$ (in either orientation), $(v_5,v_4)$, or $(v_{12},v_{11})$ are labeled using bond-edge type $b$, then a graph with a 4-cycle can be realized by the resulting pot. If $(v_4,v_5)$ or $(v_{11},v_{12})$ are labeled using bond-edge type $b$, then a graph with a 5-cycle can be realized by the resulting pot. Only four oriented edges remain to be considered: $(v_3,v_4)$,$(v_2,v_{11})$,$(v_5,v_8)$, and $(v_{12},v_9)$. If either or both of the edges $(v_3,v_4)$ or $(v_2,v_{11})$ are labeled using bond-edge type $b$, then any resulting edge swaps result in a graph isomorphic to $G$. For example, consider the bijection $\phi(v_1)=v_3, \phi(v_3)=v_1, \phi(v_i)=v_i$ for $i \neq 4,6$ resulting from the ``swap" of edges $(v_1,v_7)$ and $(v_3,v_4)$. Labeling either or both of the edges $(v_5,v_8)$ or $(v_{12},v_9)$ with bond-edge type $b$ is an identical case. However, if $\lambda(v,w)=\lambda(v',w')=b$ where $v$ and $v'$ are not part of the same 3-cycle, then the resulting pot can realize a graph with a multiple-edge or a 4-cycle. For example, if $\lambda(v_3,v_4)=\lambda(v_5,v_8)$, a multiple-edge can be formed between $v_4$ and $v_5$. For an example of the other case, if $\lambda(v_3,v_4)=\lambda(v_{12},v_9)$, a 4-cycle can be formed by $\{v_4,v_6,v_{10},v_{12}\}$.

We have shown that a bond-edge type may be used to label at most three edges of $G$, in accordance with one of the following two conditions: (1) $\lambda(v,w) = \lambda(v,u) = \lambda(v,t)$, (2) $\lambda(v_1,w)=\lambda(v_2,u)=\lambda(v_3,t)$, where $\{v_1,v_2,v_3\}$ is a 3-cycle in $G$ and $w,u,t \not \in \{v_1,v_2,v_3\}$.
\end{proof}

\begin{proposition}\label{prop:trunctetb3} Let $G$ be the truncated tetrahedron. Then, $B_3(G) = 8$.
\end{proposition}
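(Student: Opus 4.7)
The plan is to establish both $B_3(G) \geq 8$ and $B_3(G) \leq 8$ separately.

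For the lower bound, I would exploit the four vertex-disjoint triangles of the truncated tetrahedron together with Lemma~\ref{trunctetrepeat}. The lemma's two allowable patterns for multiple same-typed edges both enforce structural locality: case~(1) demands a common source vertex, while case~(2) demands that the edges emanate from distinct vertices of a common $3$-cycle toward targets outside that $3$-cycle. Since every triangle edge of $G$ has both endpoints inside some triangle, no triangle edge can participate in case~(2). Hence any two triangle edges sharing a bond-edge type must share a vertex via case~(1), which rules out pairing triangle edges from distinct triangles. Within a single triangle, at most two edges meet at any vertex, so at most two of the triangle's three triangle edges can share a bond-edge type. Each triangle therefore requires at least two distinct types for its triangle edges, and because the four triangles are pairwise vertex-disjoint, these type-sets are pairwise disjoint. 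This gives $B_3(G) \geq 4 \cdot 2 = 8$.

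For the upper bound, I would construct an explicit pot $P$ with $8$ bond-edge types that realizes $G$. Let $T_1, T_2, T_3, T_4$ denote the four triangles of $G$; the six connecting edges (not in any triangle) join pairs of triangles following the pattern of $K_4$. I would orient these six edges so that the out-degrees at the four triangles form the sequence $(2,2,1,1)$; such an orientation exists, for example via a suitable non-transitive tournament on the four triangles. For each triangle $T_i$, select an outward vertex and assign a bond-edge type $a_i$ via case~(1) of Lemma~\ref{trunctetrepeat} at that vertex, so that $a_i$ simultaneously labels the vertex's two triangle edges and its outward connecting edge. The remaining triangle edge of $T_i$ receives a new type $b_i$; for the two triangles with out-degree $2$, I would extend $b_i$ via case~(1) at the second outward vertex to also cover the second outward connecting edge. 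This yields a pot with the $8$ bond-edge types $\{a_i, b_i\}_{i=1}^{4}$ that consistently labels all $18$ edges of $G$.

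The main obstacle will be verifying that $P$ satisfies Scenario~3. Because the labeling was designed to respect the two cases of Lemma~\ref{trunctetrepeat}, any swap of same-typed edges of $G$ produces a graph isomorphic to $G$, so $\mathcal{O}(P)$ contains no non-isomorphic $12$-vertex graph. The delicate remaining task is to rule out smaller-order realizations. I would handle this by computing the construction matrix $M(P)$, applying the SRPS script of~\cite{ashworth} to its spectrum, and confirming that every nonnegative rational solution corresponds to tile proportions whose least common denominator is at least $12$. If this computational check were to fail, I would adjust either the orientation of the connecting-edge digraph or the distribution of how the $b_i$ types extend to connecting edges, and re-verify.
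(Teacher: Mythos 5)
Your proposal is correct in substance, but its lower-bound argument takes a genuinely different route from the paper, while its upper bound essentially reconstructs the paper's pot. For $B_3(G)\ge 8$, the paper counts how many bond-edge types can label three edges each: it argues that only one type can be a ``case (2)'' triple (the three connecting edges leaving a single $3$-cycle), that the remaining triples must come from an independent set of source vertices avoiding that triple, and then finishes with an edge-count against $\#\Sigma(P)\le 7$. Your argument instead localizes the count to the four vertex-disjoint triangles: since (by the analysis inside the proof of Lemma \ref{trunctetrepeat}) any edge sharing a type with a triangle edge must be incident to it with compatible orientation, triangle edges of distinct triangles never share a type, and no single type can cover all three edges of one triangle (pairwise same-orientation incidence at the three corners is impossible), so each triangle privately consumes at least two types, giving $4\cdot 2=8$ directly. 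This is cleaner and avoids the paper's global bookkeeping; just be aware that you are invoking the pairwise incidence conclusions established in the lemma's proof, not merely the lemma's statement about triples of edges, so you should cite those intermediate facts explicitly. For the upper bound, your scheme (four source vertices, one per triangle, each carrying a type on its two triangle edges plus its outgoing connecting edge, with the connecting edges oriented so the triangles have out-degrees $2,2,1,1$ and the leftover triangle edge of each triangle getting a second type, extended at the out-degree-$2$ triangles to the second outgoing edge) is structurally identical to the paper's pot $\{\{a^3\},\{b^3\},\{c^3\},\{d^3\},\{e^2,\hat a\},\{f^2,\hat c\},\dots\}$, and the paper completes the argument exactly as you plan: same-order nonisomorphic realizations are excluded by the lemma's swap analysis, and smaller-order realizations are excluded computationally with the SRPS script of \cite{ashworth}. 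The only gap on your side is that this computational check is promised rather than performed (with a fallback if it fails); since your construction coincides with the pot the paper verified, the check does succeed, but as written your upper bound is conditional on it.
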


\begin{proof}
In accordance with the logic in the proof of Lemma \ref{trunctetrepeat}, a single bond-edge type can be used to label only one set of three non-incident edges. Without loss of generality, suppose $\{v_1,v_7\},\{v_3,v_4\},\{v_2,v_{11}\}$ are labeled with the same bond-edge type. For the edges that remain, a second bond-edge type can be used to label three edges only if all three are incident with the same vertex. Hence, we select an independent set of vertices to serve as source vertices for bond-edge types. A maximal set of independent vertices with no labeled incident edges is of cardinality three; for example, $\{v_5,v_9,v_{10}\}$. Thus, if a set of three non-incident edges are labeled using a single bond-edge type, then at most four bond-edge types can be used to label three edges each. 

An alternate labeling strategy is to use a single bond-edge type to label only edges incident with the same vertex. Note that a maximal independent set of vertices in $G$ is of cardinality four (for example, $\{v_1,v_5,v_9,v_{10}\}$). We may select four such vertices to be labeled with tile types consisting of a single cohesive-end type. After these edges are labeled, there remains no set of three edges that may be labeled using the same bond-edge type. 

Since no bond-edge type can be used to label more than three edges, if $G \in \mathcal{O}(P)$ and $\#\Sigma(P) \leq 7$, then at least five bond-edge types must be used to label three edges each. We have shown that at most four bond-edge types may be used to label three edges of $G$ each. Thus, $B_3(G) \geq 8$. The following pot $P$ is such that $\#\Sigma(P)=8$ and $P$ realizes $G$ (see Figure \ref{fig:trunctet}). 
$$P = \{\{a^3\},\{b^3\},\{c^3\},\{d^3\},\{e^2,\hat{a}\},\{f^2,\hat{c}\},\{g,\hat{b},\hat{d}\},\{h,\hat{d},\hat{e}\},\{\hat{a},\hat{b},\hat{e}\},\{\hat{b},\hat{f},\hat{g}\},\{\hat{a},\hat{c},\hat{f}\},\{\hat{c},\hat{d},\hat{h}\}\}$$

It can be computationally verified with the SRPS script from \cite{ashworth} that for all $H \in \mathcal{O}(P)$, $\#V(H) \geq 12 = \#V(G)$. Therefore, $B_3(G) = 8$.
\end{proof}

 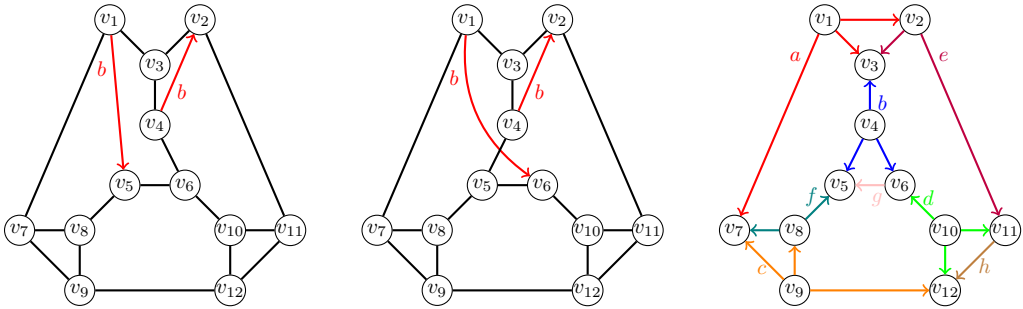
\begin{figure}[hbt!]
        \centering
        \begin{tikzpicture}[transform shape, scale = 0.8]
            \node[main node] (4) at (0,0.5) {$v_4$};
           \node[main node] (5) at (-0.5,-0.5){$v_5$};
          \node[main node] (6) at (0.5,-0.5){$v_6$};
         \node[main node] (3) at (0,1.5){$v_3$};
         \node[main node] (9) at (-1.25,-2.25){$v_9$};
         \node[main node] (12) at (1.25,-2.25){$v_{12}$};
         \node[main node] (1) at (-0.75,2.25){$v_1$};
         \node[main node] (2) at (0.75,2.25){$v_2$};
            \node[main node] (7) at (-2.25,-1.25){$v_7$};
            \node[main node] (11) at (2.25,-1.25){$v_{11}$};
          \node[main node] (8) at (-1.25,-1.25){$v_8$};
          \node[main node] (10) at (1.25,-1.25){$v_{10}$};

                \path[draw,thick,color=red,->]
                (1) edge node [near start, left]{$b$} (5)
                (4) edge node [near start, right]{$b$} (2);
                \path[draw,thick,color=black]
                (1) edge node []{} (3)
                (1) edge node []{} (7)
                (4) edge node []{} (3)
                (4) edge node []{} (6)
                (10) edge node []{} (6)
                (10) edge node []{} (11)
                (10) edge node []{} (12)
                (9) edge node []{} (7)
                (9) edge node []{} (8)
                (9) edge node []{} (12)
                (2) edge node []{} (3)
                (2) edge node []{} (11)
                (8) edge node []{} (7)
                (8) edge node []{} (5)
                (6) edge node []{} (5)
                (11) edge node []{} (12);
        \end{tikzpicture}
        \hspace{0.5cm}
        \begin{tikzpicture}[transform shape, scale = 0.8]
            \node[main node] (4) at (0,0.5) {$v_4$};
           \node[main node] (5) at (-0.5,-0.5){$v_5$};
          \node[main node] (6) at (0.5,-0.5){$v_6$};
         \node[main node] (3) at (0,1.5){$v_3$};
         \node[main node] (9) at (-1.25,-2.25){$v_9$};
         \node[main node] (12) at (1.25,-2.25){$v_{12}$};
         \node[main node] (1) at (-0.75,2.25){$v_1$};
         \node[main node] (2) at (0.75,2.25){$v_2$};
            \node[main node] (7) at (-2.25,-1.25){$v_7$};
            \node[main node] (11) at (2.25,-1.25){$v_{11}$};
          \node[main node] (8) at (-1.25,-1.25){$v_8$};
          \node[main node] (10) at (1.25,-1.25){$v_{10}$};
                \path[draw,thick,color=red,->]
                (1) edge [bend right] node [near start, left]{$b$} (6)
                (4) edge node [near start, right]{$b$} (2);
                \path[draw,thick,color=black]
                (1) edge node []{} (3)
                (1) edge node []{} (7)
                (4) edge node []{} (3)
                (4) edge node []{} (5)
                (10) edge node []{} (6)
                (10) edge node []{} (11)
                (10) edge node []{} (12)
                (9) edge node []{} (7)
                (9) edge node []{} (8)
                (9) edge node []{} (12)
                (2) edge node []{} (3)
                (2) edge node []{} (11)
                (8) edge node []{} (7)
                (8) edge node []{} (5)
                (6) edge node []{} (5)
                (11) edge node []{} (12);
        \end{tikzpicture}
        \hspace{0.5cm}
        \begin{tikzpicture}[transform shape, scale = 0.8]
            \node[main node] (4) at (0,0.5) {$v_4$};
           \node[main node] (5) at (-0.5,-0.5){$v_5$};
          \node[main node] (6) at (0.5,-0.5){$v_6$};
         \node[main node] (3) at (0,1.5){$v_3$};
         \node[main node] (9) at (-1.25,-2.25){$v_9$};
         \node[main node] (12) at (1.25,-2.25){$v_{12}$};
         \node[main node] (1) at (-0.75,2.25){$v_1$};
         \node[main node] (2) at (0.75,2.25){$v_2$};
            \node[main node] (7) at (-2.25,-1.25){$v_7$};
            \node[main node] (11) at (2.25,-1.25){$v_{11}$};
          \node[main node] (8) at (-1.25,-1.25){$v_8$};
          \node[main node] (10) at (1.25,-1.25){$v_{10}$};

                \path[draw,thick,color=red,->]
                (1) edge node []{} (2)
                (1) edge node []{} (3)
                (1) edge node [very near start, left]{$a$} (7);
                \path[draw,thick,color=blue,->]
                (4) edge node [near start, right]{$b$} (3)
                (4) edge node []{} (5)
                (4) edge node []{} (6);
                \path[draw,thick,color=green,->]
                (10) edge node [near start, above]{$d$} (6)
                (10) edge node []{} (11)
                (10) edge node []{} (12);
                \path[draw,thick,color=orange,->]
                (9) edge node [near start, left]{$c$} (7)
                (9) edge node []{} (8)
                (9) edge node []{} (12);
                 \path[draw,thick,color=purple,->]
                (2) edge node []{} (3)
                (2) edge node [very near start, right]{$e$} (11);
                \path[draw,thick,color=teal,->]
                (8) edge node []{} (7)
                (8) edge node [near start, above]{$f$} (5);
                \path[draw,thick,color=pink,->]
                (6) edge node [near start, below]{$g$} (5);
                \path[draw,thick,color=brown,->]
                (11) edge node [near start, below]{$h$} (12);
        \end{tikzpicture}
            \caption{nonisomorphic graphs realized by pot when $\lambda(v_1,v_2)=\lambda(v_4,v_5)$ (left) or $\lambda(v_1,v_2)=\lambda(v_4,v_6)$ (center); Scenario 3 labeling of truncated tetrahedron (right).}
            \label{fig:trunctet}
    \end{figure}

Throughout, we consider the vertex labeling of the truncated cube shown in Figure \ref{fig:trunccube}.

\begin{lemma}\label{trunccuberepeat} Let $G$ be the truncated cube. Then a single bond-edge type may be used to label at most three edges of $G$.
\end{lemma}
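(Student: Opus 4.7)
The plan is to mimic the structure of Lemma \ref{trunctetrepeat} for the truncated tetrahedron, exploiting the edge-orbit structure of $G$. Up to symmetry, $G$ has two types of edges: edges contained in a triangular face (of which there are $24$, each lying in one $3$-cycle), and edges shared between two octagonal faces (of which there are $12$, each contained in no $3$-cycle). I would fix a representative oriented edge of each type, say $(v_1,v_2)$ with $\{v_1,v_2\}$ on a triangle, and $(v_i,v_j)$ with $\{v_i,v_j\}$ between two octagons, and analyze the edges with which each representative may share a bond-edge type.

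For each representative, the first step is to discard, by inspection, all oriented edges that are incident with the endpoints of the representative, together with those oriented edges whose swap with the representative immediately produces a multiple edge (these are the edges incident with the neighbors of the representative's endpoints, with appropriate orientation). These account for a fixed number of the $70$ remaining oriented edges in $G$. The next step is to run Algorithm \ref{alg:cycles} with the representative as input, thereby obtaining the list of oriented edges whose swap with the representative alters the number of $3$-, $4$-, or $5$-cycles. I expect that, as in the truncated tetrahedron case, nearly every remaining oriented edge will change the count of $3$- or $4$-cycles on a swap; the script is what makes this feasible to verify given the $36$ edges of $G$. After both filters, the only oriented edges that can be labeled with the same bond-edge type as the representative should be those with initial vertex $v_1$ (respectively $v_i$) or terminal vertex $v_2$ (respectively $v_j$).

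As in the analogue, the subtle part of the analysis is the non-triangle edge case: some oriented edges may survive both filters because their swap with the representative produces a graph isomorphic to $G$ (for instance, a swap that amounts to relabeling vertices fixed by a symmetry of $G$). For such residual candidates, I would argue by examining pairs: labeling a second edge of this residual class with the same bond-edge type as the representative is harmless in isolation, but combining two such independent labels forces a configuration in which a multi-edge, $4$-cycle, or $5$-cycle can be realized. This case-split is where the argument is most delicate, but it closely parallels the treatment of edges $(v_3,v_4),(v_2,v_{11}),(v_5,v_8),(v_{12},v_9)$ in Lemma \ref{trunctetrepeat}.

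Combining the two cases, I would conclude that a single bond-edge type may label at most three edges of $G$, in one of two shapes: either three edges meeting at a common vertex (consistent with the fact that $G$ is $3$-regular), or three edges of the form $\{v_a,w_a\},\{v_b,w_b\},\{v_c,w_c\}$ where $\{v_a,v_b,v_c\}$ spans one of the $3$-cycles of $G$ and $w_a,w_b,w_c$ lie outside that $3$-cycle. The main obstacle is bookkeeping: enumerating and verifying all cases of oriented-edge swaps for both edge orbits; this is precisely what Algorithm \ref{alg:cycles} is designed to expedite.
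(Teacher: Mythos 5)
Your overall strategy is the same as the paper's: split into the two edge orbits (triangle edges and octagon--octagon edges), filter out swaps that create multiple edges, use Algorithm \ref{alg:cycles} to eliminate swaps that change the $3$-, $4$-, or $5$-cycle counts, and then handle the surviving candidates by an isomorphism/pairwise analysis, arriving at the same two permissible shapes (three edges at a common vertex, or one edge off each vertex of a $3$-cycle). The triangle-edge case does in fact go through exactly as you sketch: every non-excluded swap there changes a $3$-, $4$-, or $5$-cycle count.

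The gap is in the octagon--octagon case, where you assume that every oriented edge surviving the multiple-edge filter and the $3$-/$4$-/$5$-cycle filter does so because its swap yields a graph \emph{isomorphic} to $G$. That is false for the truncated cube: with $(v_1,v_2)$ as the representative, there are five oriented edges (e.g.\ $(v_{11},v_{13})$, $(v_{14},v_{12})$, $(v_{16},v_{15})$, $(v_{20},v_{17})$, $(v_{18},v_{21})$) whose swap creates a $6$-cycle, and two more ($(v_9,v_{10})$ and $(v_{23},v_{24})$) whose swap produces a nonplanar graph containing a subdivision of $K_{3,3}$; none of these are detected by the $3$-/$4$-/$5$-cycle counts that Algorithm \ref{alg:cycles} computes (recall the paper deliberately omits $6$-cycle counts from the script). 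These seven swaps yield graphs nonisomorphic to $G$, so the corresponding edges must be excluded outright, and your pairwise-combination argument cannot do this, since it only rules out \emph{combinations} of residual labels while a single such shared label already permits a nonisomorphic realization. To close the argument you need additional invariants beyond the script --- in the paper's proof, an explicit $6$-cycle created by the swap, and planarity via $K_{3,3}$ minors --- to certify that these seven survivors are genuinely forbidden; only then does the residual set shrink to $(v_3,v_{19})$, $(v_4,v_7)$, $(v_8,v_5)$, $(v_{22},v_6)$, $(v_{15},v_{16})$, where your pairwise analysis (showing any cross-combination creates a $4$-cycle) correctly caps the count at three.
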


\begin{proof} 
Note that, up to symmetry, there are two types of edges $G$ -- those that are contained within a 3-cycle, and those that are contained only within 8-cycles. 

Suppose $\lambda(v_1,\{v_1,v_2\})=b$, noting that $\{v_1,v_2\}$ is not part of any 3-cycle. If any edge $(w,v_3),(w,v_4),(v_5,w)$, or $(v_6,w)$ is labeled using bond-edge type $b$, then the resulting pot $P$ is such that there exists $H \in \mathcal{O}(P)$ with a multiple-edge. Together with the edges incident with $\{v_1,v_2\}$, this accounts for 16 of the 70 unlabeled oriented edges in $G$. For 22 of the remaining unlabeled oriented edges, an edge swap with $(v_1,v_2)$ results in a graph with a 4-cycle (see Figure \ref{fig:trunccube}). If any edge $(v_{11},v_{13})$, $(v_{14},v_{12})$, $(v_{16},v_{15})$, $(v_{20},v_{17})$, or $(v_{18},v_{21})$ is swapped with $(v_1,v_2)$, the resulting graph contains a 6-cycle. Two remaining oriented edges, when swapped, will give rise to a nonplanar graph. If edge $(v_9,v_{10})$ is swapped with $(v_1,v_2)$, then a $K_{3,3}$ minor can be found in the vertex partitions $\{1,2,13\}, \{9,10,23\}$. If edge $(v_{23},v_{24})$ is swapped, then a $K_{3,3}$ minor can be found in the vertex partitions $\{1,2,16\}, \{20,23,24\}$. For 22 of the remaining oriented edges, it may be computationally verified that a swap with $(v_1,v_2)$ results in a graph with a different number of $3$-cycles. 

Five oriented edges remain to be considered: $(v_3,v_{19})$, $(v_4,v_7)$, $(v_8,v_5)$, $(v_{22},v_6)$, and $(v_{15},v_{16})$. If either of the edges $(v_3,v_{19}), (v_4,v_7)$ are labeled using bond-edge type $b$, then any resulting edge swaps result in a graph isomorphic to $G$. For example, consider the bijection $\phi(v_1)=v_3, \phi(v_3)=v_1$, $\phi(v_i)=v_i$ for $i \neq 1,3$ resulting from the swap of edges $(v_1,v_2)$ and $(v_3,v_{19})$ (one may visualize this as a ``twist" of the 3-cycle $\{v_1,v_3,v_4\}$). A swap with either of the edges $(v_8,v_5)$, $(v_{22},v_6)$ is a symmetric case. Finally, a swap of edges $(v_1,v_2)$ and $(v_{15},v_{16})$ results in a graph isomorphic to $G$ via the bijection $\phi(v_1) = v_{15}, \phi(v_{15})=v_1$, $\phi(v_3) = v_{17}, \phi(v_{17})=v_3$, $\phi(v_4)=v_{13}, \phi(v_{13})=v_4$, $\phi(v_7) = v_{11}, \phi(v_{11})=v_7$, $\phi(v_i)=v_i$ for all $i \not \in \{1,3,4,7,11,13,15,17\}$ (one may visualize this as a ``trading places" of the 3-cycles $\{v_1,v_3,v_4\}$ and $\{v_{13},v_{15},v_{17}\}$). 

We claim that, of the six edges $\{(v_1,v_2)$,$(v_3,v_{19})$, $(v_4,v_7)$, $(v_8,v_5)$, $(v_{22},v_6)$, and $(v_{15},v_{16})$, at most three can be labeled with the same bond-edge type. Swaps between any edge in $\{(v_3,v_{19}), (v_4,v_7)\}$ and any edge in $\{(v_8,v_5), (v_{22},v_6)\}$ result in a graph with a 4-cycle. A swap between edge $(v_{15},v_{16})$ and any edge in $\{(v_3,v_{19}), (v_4,v_7), (v_8,v_5), (v_{22},v_6)\}$ also results in a graph with a 4-cycle. Thus, the following are the ``maximal" sets including $(v_1,v_2)$ such that all edges in the set may be labeled with the same bond-edge type: $\{(v_1,v_2),(v_3,v_{19}), (v_4,v_7)\}$, $\{(v_1,v_2),(v_8,v_5),(v_{22},v_6)\}$, $\{(v_1,v_2),(v_{15},v_{16})\}$.
%All five edges may be swapped with $(v_1,v_2)$ without creating a nonisomorphic graph. However, only three of the five edges may be labeled with the same bond-edge type at the same time.

Now, suppose $\lambda(v_1,\{v_1,v_3\})=b$, noting that $\{v_1,v_3\}$ is within a 3-cycle. If any edge $(w,v_2)$, $(v_4,w)$, $(w,v_4)$, or $(v_{19},w)$ is labeled using bond-edge type $b$, then resulting the pot $P$ is such that there exists $H \in \mathcal{O}(P)$ with a multiple-edge. Together with the edges incident with $\{v_1,v_3\}$, this accounts for 14 of the 70 unlabeled oriented edges of $G$. There are 45 edges for which, if labeled with bond-edge type $b$, a swap with edge $\{v_1,v_3\}$ results in a graph with a 4-cycle or 5-cycle. For the remaining 11 unlabeled oriented edges, it may be computationally verified that a swap with edge $(v_1,v_2)$ results in a graph with a different number of $3$-cycles. Thus, when $\lambda(v_1,v_3)=b$, the only other edges that can be labeled with bond-edge type $b$ are those with initial vertex $v_1$ or terminal vertex $v_3$.

We have shown that a bond-edge type may be used to label at most three edges of $G$, in accordance with one of the following two conditions: (1) $\lambda(v,w) = \lambda(v,u) = \lambda(v,t)$, (2) $\lambda(v_1,w)=\lambda(v_2,u)=\lambda(v_3,t)$, where $\{v_1,v_2,v_3\}$ is a 3-cycle in $G$ and $w,u,t \not \in \{v_1,v_2,v_3\}$.
\end{proof}

\begin{proposition}\label{prop:trunccubeB3} Let $G$ be the truncated cube. Then, $16 \leq B_3(G) \leq 21$. 
\end{proposition}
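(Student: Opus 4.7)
The plan is to prove $16 \leq B_3(G) \leq 21$ by a counting-based lower bound derived from Lemma \ref{trunccuberepeat} and an explicit upper-bound construction via Theorem \ref{thm:B3upper}.

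For the lower bound, I invoke Lemma \ref{trunccuberepeat}: each bond-edge type labels at most three edges of $G$, and three-edge sharing requires either condition (1) (all incident at a single vertex) or condition (2) (the three outer edges of a triangle). A crucial consequence is that the three inner edges of any triangle can never share one bond-edge type, since they are neither all incident at a single vertex nor outer edges. Call a vertex $v$ \emph{doubly-labeled} if the two triangle-inner edges at $v$ share a bond-edge type. Suppose two vertices $v, u$ of a triangle $T = \{v, u, w\}$ are both doubly-labeled; the consistency requirement on edge $\{v, u\}$ then forces their two bond-edge types to coincide, and this common type labels all three inner edges of $T$, a contradiction. Hence each of the $8$ triangles contributes at most one doubly-labeled vertex, so $d \leq 8$ where $d$ is the number of doubly-labeled vertices. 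Moreover, two inner edges not paired at a doubly-labeled vertex cannot share a bond-edge type either: two inner edges in different triangles are non-incident and forbidden from sharing by the lemma, while two inner edges incident at one vertex would by definition make that vertex doubly-labeled. Hence the 24 inner edges alone require at least $d + (24 - 2d) = 24 - d \geq 16$ distinct bond-edge types, yielding $B_3(G) \geq 16$.

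For the upper bound, I apply Theorem \ref{thm:B3upper} by selecting a vertex cover $K \subseteq V(G)$ of size $21$. Let $V(G) \setminus K = \{x, y, z\}$ consist of three pairwise non-adjacent vertices lying in three distinct triangles whose pairwise distances in the triangle-adjacency graph $Q_3$ are at least two. Then $K$ is a vertex cover (as $\{x, y, z\}$ is independent), the neighbor vectors $\rho_x, \rho_y, \rho_z \in \mathbb{R}^{21}$ have disjoint supports across distinct triangles and are thus linearly independent (so $K$ is neighborhood-independent), and the induced subgraph on $K$ retains 2-edge-connectedness because the truncated cube is 3-edge-connected and only three well-separated vertices are removed. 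By Remark \ref{labelremark}, the pot derived from $(K, O)$ with any orientation $O$ on the induced subgraph uses exactly $21$ bond-edge types and realizes $G$. Scenario 3 validity for the resulting pot is verified using the SRPS script of \cite{ashworth}.

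The main obstacle is closing the gap between $16$ and $21$. The doubly-labeled-vertex argument gives a clean combinatorial lower bound, but naive source-based constructions aiming for $16$ bond-edge types typically fail Scenario 3: the rich symmetry of the truncated cube permits nonisomorphic graphs of order 24 to be realized from such pots, forcing additional bond-edge types to break the ambiguity. Pinpointing $B_3(G)$ within $[16, 21]$ would require either a finer pot construction or a stronger structural obstruction, and is left as an open question.
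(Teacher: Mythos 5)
Your argument is correct and follows essentially the same route as the paper: the lower bound $B_3(G)\geq 16$ is extracted from the proof of Lemma \ref{trunccuberepeat} (each 3-cycle needs two bond-edge types, with no sharing of types between distinct 3-cycles; your doubly-labeled-vertex count is a repackaging of this), and the upper bound applies Theorem \ref{thm:B3upper} and Remark \ref{labelremark} to a 21-vertex cover obtained by deleting one vertex from each of three pairwise far-apart triangles, exactly as the paper does with $V(G)\setminus K=\{v_9,v_{15},v_{19}\}$, followed by an SRPS check. The one step you should tighten is 2-edge-connectedness of the induced subgraph on $K$: it does not follow from 3-edge-connectivity of $G$ together with the deleted vertices being ``well-separated'' (edge-connectivity is not preserved under vertex deletion), but it is a routine finite verification on the explicit cover, which the paper carries out by exhibiting the bolded subgraph in Figure \ref{fig:trunccube}.
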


\begin{proof}
It is shown in the proof of Lemma \ref{trunccuberepeat} that at least two bond-edge types are needed to label each 3-cycle in $G$, and that those bond-edge types must be distinct for each 3-cycle. Therefore, at least 16 bond-edge types are needed to label the 24 edges of the 3-cycles in $G$. By carefully selecting labeling orientations, it is possible to label all remaining 12 edges with bond-edge types used to label the edges of the 3-cycles while maintaining a pot that does not realize any nonisomorphic graphs. However, upon testing such pots in the SRPS script from \cite{ashworth}, graphs of smaller order were consistently realized. 

Consider the following vertex cover of $G$.
$$K = \{v_1, v_2, v_3, v_4, v_5, v_6, v_7, v_8, v_{10}, v_{11}, v_{12}, v_{13}, v_{14}, v_{16}, v_{17}, v_{18}, v_{20}, v_{21}, v_{22}, v_{23},  v_{24}\}$$
The 21 vertices of $K$ form a 2-edge-connected subgraph, as shown by the bolded edges of $G$ in Figure \ref{fig:trunccube}. Then, following from Definition \ref{def:neighbind}, the corresponding vectors for the three vertices in $G \setminus S$ are given below.
\begin{align*} \rho_{v_9} = \langle 0,0,0,0,0,0,1,0,1,1,0,0,0,0,0,0,0,0,0,0,0 \rangle \\
\rho_{v_{15}} = \langle 0,0,0,0,0,0,0,0,0,0,0,1,0,1,1,0,0,0,0,0,0 \rangle \\
\rho_{v_{19}} = \langle 0,0,1,0,0,0,0,0,0,0,0,0,0,0,0,0,1,0,0,1,0 \rangle \end{align*}

 As it can be easily verified that $\{\rho_{v_9},\rho_{v_{15}},\rho_{v_{19}}\}$ is a linearly independent set, $K$ is neighborhood independent. By Theorem \ref{thm:B3upper}, $B_3(G) \le |K| = 21$. The following pot $P$, constructed according to Remark \ref{labelremark}, is such that $\#\Sigma(P)=21$ and $P$ realizes $G$ (see Figure \ref{fig:trunccube}).
\begin{align*} P = \{\{a,\hat{m},\hat{u}\},\{\hat{a},b^2\},\{\hat{b},c^2\},\{\hat{c},d^2\},\{\hat{d},e^2\},\{\hat{e},f^2\},\{\hat{f},g^2\},\{\hat{g},h^2\},
    \{\hat{h},i^2\},\{\hat{i},j^2\}, \{\hat{i},\hat{j},k\}, \\ \{\hat{k},l^2\},\{\hat{l},m^2\},\{\hat{l},\hat{m},n\},\{\hat{n},o^2\},\{\hat{o},p^2\},\{\hat{o},\hat{p},q\},\{\hat{q},r^2\},\{\hat{r},s^2\},\{\hat{r},\hat{s},t\},\{\hat{b},\hat{t},u\},\{\hat{c},\hat{d},\hat{j}\}, \\ \{\hat{e},\hat{f},\hat{s}\},\{\hat{g},\hat{h},\hat{p}\}\}\end{align*}

It can be computationally verified with the SRPS script from \cite{ashworth} that for all $H \in \mathcal{O}(P)$, $\#V(H) \geq 24 = \#V(G)$. Therefore, $B_3(G) \le 21$.
\end{proof}

    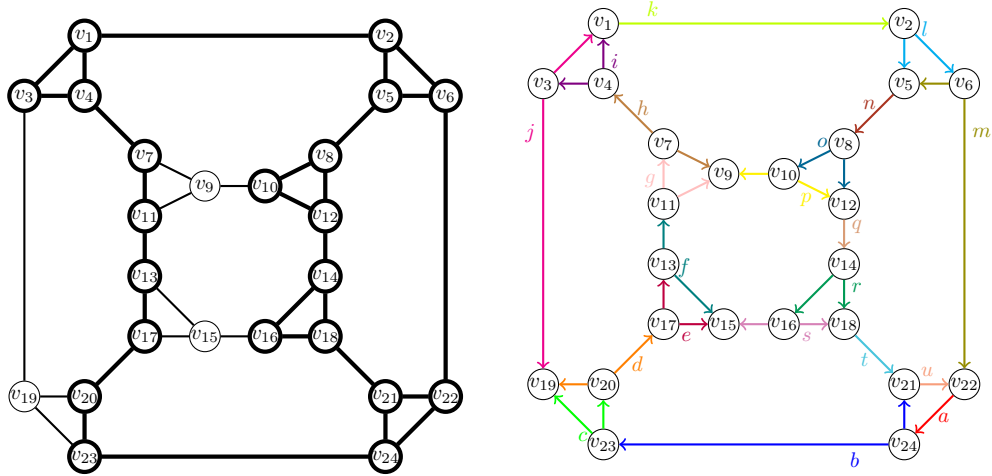
\begin{figure}[hbt!]
        \centering
        \centering
        \begin{tikzpicture}[transform shape,scale=0.8]
            \node[main node] (9) at (-0.5,1) {$v_9$};
            \node[main node, ultra thick] (10) at (0.5,1) {$v_{10}$};
            \node[main node, ultra thick] (11) at (-1.5,0.5) {$v_{11}$};
            \node[main node, ultra thick] (12) at (1.5,0.5) {$v_{12}$};
            \node[main node, ultra thick] (13) at (-1.5,-0.5) {$v_{13}$};
            \node[main node, ultra thick] (14) at (1.5,-0.5) {$v_{14}$};
            \node[main node] (15) at (-0.5,-1.5) {$v_{15}$};
            \node[main node, ultra thick] (16) at (0.5,-1.5) {$v_{16}$};
            \node[main node, ultra thick] (7) at (-1.5,1.5) {$v_7$};
            \node[main node, ultra thick] (8) at (1.5,1.5) {$v_8$};
            \node[main node, ultra thick] (17) at (-1.5,-1.5) {$v_{17}$};
            \node[main node, ultra thick] (18) at (1.5,-1.5) {$v_{18}$};
            \node[main node, ultra thick] (4) at (-2.5,2.5) {$v_4$};
            \node[main node, ultra thick] (5) at (2.5,2.5) {$v_5$};
            \node[main node, ultra thick] (20) at (-2.5,-2.5) {$v_{20}$};
            \node[main node, ultra thick] (21) at (2.5,-2.5) {$v_{21}$};
            \node[main node, ultra thick] (1) at (-2.5,3.5) {$v_1$};
            \node[main node, ultra thick] (2) at (2.5,3.5) {$v_2$};
            \node[main node, ultra thick] (23) at (-2.5,-3.5) {$v_{23}$};
            \node[main node, ultra thick] (24) at (2.5,-3.5) {$v_{24}$};
            \node[main node, ultra thick] (3) at (-3.5,2.5) {$v_3$};
            \node[main node, ultra thick] (6) at (3.5,2.5) {$v_6$};
            \node[main node] (19) at (-3.5,-2.5) {$v_{19}$};
            \node[main node, ultra thick] (22) at (3.5,-2.5) {$v_{22}$};

                \path[draw,ultra thick]
                (22) edge node []{} (24)
                (22) edge node []{} (6)
                (22) edge node []{} (21)
                (24) edge node []{} (21)
                (24) edge node []{} (23)
                (23) edge node []{} (20)
                (21) edge node []{} (18)
                (20) edge node []{} (17)
                (18) edge node []{} (14)
                (18) edge node []{} (16)
                (17) edge node []{} (13)
                (16) edge node []{} (14)
                (14) edge node []{} (12)
                (13) edge node []{} (11)
                (12) edge node []{} (8)
                (12) edge node []{} (10)
                (11) edge node []{} (7)
                (10) edge node []{} (8)
                (8) edge node []{} (5)
                (7) edge node []{} (4)
                (6) edge node []{} (2)
                (6) edge node []{} (5)
                (5) edge node []{} (2)
                (4) edge node []{} (1)
                (4) edge node []{} (3)
                (3) edge node []{} (1)
                (2) edge node []{} (1);
            \path[draw,thick]
                (3) edge node []{} (19)
                (19) edge node []{} (20)
                (19) edge node []{} (23)
                (7) edge node []{} (9)
                (9) edge node []{} (11)
                (13) edge node []{} (15)
                (15) edge node []{} (17)
                (15) edge node []{} (16)
                (9) edge node []{} (10);
            
    \end{tikzpicture}
    \hspace{0.5cm}
        \begin{tikzpicture}[transform shape,scale=0.8]
            \node[main node] (9) at (-0.5,1) {$v_9$};
            \node[main node] (10) at (0.5,1) {$v_{10}$};
            \node[main node] (11) at (-1.5,0.5) {$v_{11}$};
            \node[main node] (12) at (1.5,0.5) {$v_{12}$};
            \node[main node] (13) at (-1.5,-0.5) {$v_{13}$};
            \node[main node] (14) at (1.5,-0.5) {$v_{14}$};
            \node[main node] (15) at (-0.5,-1.5) {$v_{15}$};
            \node[main node] (16) at (0.5,-1.5) {$v_{16}$};
            \node[main node] (7) at (-1.5,1.5) {$v_7$};
            \node[main node] (8) at (1.5,1.5) {$v_8$};
            \node[main node] (17) at (-1.5,-1.5) {$v_{17}$};
            \node[main node] (18) at (1.5,-1.5) {$v_{18}$};
            \node[main node] (4) at (-2.5,2.5) {$v_4$};
            \node[main node] (5) at (2.5,2.5) {$v_5$};
            \node[main node] (20) at (-2.5,-2.5) {$v_{20}$};
            \node[main node] (21) at (2.5,-2.5) {$v_{21}$};
            \node[main node] (1) at (-2.5,3.5) {$v_1$};
            \node[main node] (2) at (2.5,3.5) {$v_2$};
            \node[main node] (23) at (-2.5,-3.5) {$v_{23}$};
            \node[main node] (24) at (2.5,-3.5) {$v_{24}$};
            \node[main node] (3) at (-3.5,2.5) {$v_3$};
            \node[main node] (6) at (3.5,2.5) {$v_6$};
            \node[main node] (19) at (-3.5,-2.5) {$v_{19}$};
            \node[main node] (22) at (3.5,-2.5) {$v_{22}$};

                \path[draw,thick, color=red, ->]
                (22) edge node [near start,below]{$a$} (24);

                \path[draw, thick, color=blue, ->]
                (24) edge node []{} (21)
                (24) edge node [very near start,below]{$b$} (23);

                \path[draw, thick, color=green, ->]
                (23) edge node [near start,below]{$c$} (19)
                (23) edge node []{} (20);

                \path[draw, thick, color=orange, ->]
                (20) edge node []{} (19)
                (20) edge node [near start,right]{$d$} (17);

                \path[draw, thick, color=purple, ->]
                (17) edge node [near start,below]{$e$} (15)
                (17) edge node []{} (13);

                \path[draw, thick, color=teal, ->]
                (13) edge node [near start,above]{$f$} (15)
                (13) edge node []{} (11);

                \path[draw, thick, color=pink, ->]
                (11) edge node [near start,left]{$g$} (7)
                (11) edge node []{} (9);

                \path[draw, thick, color=brown, ->]
                (7) edge node [near start,above]{$h$} (4)
                (7) edge node []{} (9);

                \path[draw, thick, color=violet, ->]
                (4) edge node [near start,right]{$i$} (1)
                (4) edge node []{} (3);

                \path[draw, thick, color=magenta, ->]
                (3) edge node []{} (1)
                (3) edge node [very near start,left]{$j$} (19);

                \path[draw, thick, color=lime, ->]
                (1) edge node [very near start,above]{$k$} (2);

                \path[draw, thick, color=cyan, ->]
                (2) edge node []{} (5)
                (2) edge node [near start,above]{$l$} (6);

                \path[draw, thick, color=olive, ->]
                (6) edge node []{} (5)
                (6) edge node [very near start,right]{$m$} (22);

                \path[draw, thick, color=Mahogany, ->]
                (5) edge node [near start,left]{$n$} (8);

                \path[draw, thick, color=MidnightBlue, ->]
                (8) edge node [near start,above]{$o$} (10)
                (8) edge node []{} (12);

                \path[draw, thick, color=yellow, ->]
                (10) edge node []{} (9)
                (10) edge node [near start,below]{$p$} (12);

                \path[draw, thick, color=Tan, ->]
                (12) edge node [near start,right]{$q$} (14);

                \path[draw, thick, color=ForestGreen, ->]
                (14) edge node [near start,right]{$r$} (18)
                (14) edge node []{} (16);

                \path[draw, thick, color=Thistle, ->]
                (16) edge node []{} (15)
                (16) edge node [near start,below]{$s$} (18);

                \path[draw, thick, color=SkyBlue, ->]
                (18) edge node [near start,below]{$t$} (21);

                \path[draw, thick, color=Melon, ->]
                (21) edge node [near start,above]{$u$} (22);
    \end{tikzpicture}
        \caption{2-edge-connected subgraph formed by $K$ (left); Scenario 3 labeling of truncated cube (right).}
        \label{fig:trunccube}
\end{figure}

\begin{proposition}\label{prop:tetcubeT3} Let $G$ be the truncated tetrahedron or the truncated cube. Then, $T_3(G) = \#V(G)$.   
\end{proposition}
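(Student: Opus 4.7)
The upper bound $T_3(G) \leq \#V(G)$ is immediate: the pots constructed in Propositions \ref{prop:trunctetb3} and \ref{prop:trunccubeB3} each use $\#V(G)$ tile types, one per vertex. My plan for the matching lower bound is a proof by contradiction: assume two distinct vertices $u \neq v$ share a tile type $t := t(u) = t(v)$, and derive a contradiction from Lemmas \ref{trunctetrepeat} and \ref{trunccuberepeat}. Throughout I will use that, for both graphs, every vertex lies in exactly one 3-cycle and has exactly one edge leaving that 3-cycle.

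A quick counting argument controls repetition within $t$. If an unhatted cohesive-end $x$ appeared $m \geq 2$ times in $t$, then $u$ and $v$ alone would contribute $2m \geq 4$ occurrences of unhatted $x$, forcing bond-edge type $x$ to label at least $4$ edges, contradicting the bound of $3$ from Lemmas \ref{trunctetrepeat} and \ref{trunccuberepeat}. The same count for hatted entries shows that $t$ has three pairwise-distinct cohesive-end entries.

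I would then split on whether $t$ contains an unhatted entry. If $t$ contains an unhatted $a$, then $u$ and $v$ are distinct sources for bond-edge type $a$, and the analysis in the lemmas (distinct sources of equally-labeled edges must share a 3-cycle) forces $u, v$ into a common 3-cycle $T$. The edge $\{u,v\}$ in $T$ requires complementary cohesive-ends drawn from $t$, so $t$ must contain a pair $x, \hat{x}$. Counting sources and sinks of bond-edge type $x$ at $u$ and $v$ then shows that this type labels exactly three edges, and condition (2) of the lemma forces the three sources to form a 3-cycle (which can only be $T$) with all three sinks lying outside it; since $u, v \in T$ are themselves among the sinks, this is the desired contradiction.

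The case $t = \{\hat{a}, \hat{b}, \hat{c}\}$ (all hatted) is the trickiest and I expect it to be the main obstacle. Here $u, v$ cannot share a 3-cycle, since the edge $\{u,v\}$ would demand an unhatted cohesive-end absent from $t$, so they sit in distinct 3-cycles $T_u \neq T_v$. For each bond-edge type $x$ let $u_x$ and $v_x$ denote the source of $u$'s and $v$'s $x$-edge respectively; whenever $u_x \neq v_x$, the lemma forces $u_x$ and $v_x$ to share a 3-cycle. The key pigeonhole is that whenever $u_x \in T_u$, the only way $v_x$ can share a 3-cycle with $u_x$ is for $v_x$ to be $v$'s unique outside neighbor $r_v$; but $u$ has two neighbors inside $T_u$, so two distinct bond-edge types would require $v_x = r_v$, which is impossible. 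Carefully handling the subcase $u_x = v_x$ (a common neighbor of $u$ and $v$, which must also lie in $T_u$) is the most delicate part of the argument.
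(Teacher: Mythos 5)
Your upper bound and your counting argument that the shared tile $t$ has three pairwise-distinct cohesive-end entries are fine, but the pivot of your first case is not supported by the lemmas. You assert that ``distinct sources of equally-labeled edges must share a 3-cycle,'' yet the proofs of Lemmas \ref{trunctetrepeat} and \ref{trunccuberepeat} explicitly permit a repeated bond-edge type on edges sharing their \emph{terminal} vertex as well (``initial vertex $v_1$ or terminal vertex $v_2$''); the summarized conditions (1) and (2) omit this mirror configuration, but it is allowed (and must be, by the hatted/unhatted symmetry of the model). Consequently, if $t$ contains an unhatted $a$ and the non-adjacent vertices $u,v$ both direct their $a$-edges at a common neighbor $w$, nothing forces $u$ and $v$ into a common 3-cycle, and your chain (common triangle $\Rightarrow$ $\{x,\hat{x}\}\subseteq t$ $\Rightarrow$ contradiction) never starts. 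Patching this is not automatic: one entry of $t$ can always take the ``common far vertex'' escape, and only when two entries do so do you get two common neighbors of $u,v$ and hence a 4-cycle (absent in both graphs); otherwise you are pushed back into the same delicate source/sink analysis you flag in the all-hatted case. In that case, your parenthetical claim that a common source $u_x=v_x$ ``must also lie in $T_u$'' is likewise unjustified as stated (it need only lie in $T_u\cup T_v$), and the exclusion of head-to-tail incidences (which would realize a loop) needs to be invoked explicitly when you argue $u_x\neq v_x$ forces a shared 3-cycle.

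For comparison, the paper's argument avoids this case analysis entirely: adjacent vertices cannot share a tile type by the chromatic-number bound cited from the flexible-tile literature, and for non-adjacent $u,v$ a pigeonhole on the two cohesive-ends each vertex spends on its own 3-cycle shows some entry of $t$ labels a 3-cycle edge at $u$ and a 3-cycle edge at $v$; these are non-incident edges in distinct 3-cycles carrying the same bond-edge type, which the proofs of Lemmas \ref{trunctetrepeat} and \ref{trunccuberepeat} rule out, since a 3-cycle edge's bond-edge type may recur only on incident edges. If you keep your route, you should replace the ``distinct sources share a 3-cycle'' step with the correct trichotomy (shared source, shared head, or non-incident with sources in a common 3-cycle and heads outside it) and handle the shared-head branch; or simply adopt the triangle-edge pigeonhole, which collapses the whole lower bound to a few lines.
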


\begin{proof} 
The proof of Theorem 3 in \cite{mintiles}, which states that the number of tile types is bounded below by the chromatic number of $G$, shows that tile types may not repeated on adjacent vertices in any graph. Suppose $\lambda(v)=t$, where $v$ is any vertex of $G$ and $t$ is an arbitrary tile type. If $\lambda(w)=t$ for any other vertex of $G$ that is non-adjacent to $v$, then at least one edge the 3-cycle containing $w$ will be labeled with a bond-edge type used to label an edge of the 3-cycle containing $t$. As shown in the proofs of Lemmas \ref{trunctetrepeat} and \ref{trunccuberepeat}, this is not permissible for a labeling of $G$ in Scenario 3. Since each vertex of $G$ requires a unique tile type, $T_3(G)=\#V(G)$. The example pots given in the proofs of Propositions \ref{prop:trunctetb3} and \ref{prop:trunccubeB3} are such that $\#P=\#V(G)$ and $\#P=24$.
\end{proof}

As mentioned in the introduction, the truncated octahedron has a special biological application of cell-targeting. Its structure closely resembles that of oxidized low-density lipoprotein (ox-LDL), allowing it to bind to the low-density lipoprotein receptor-1 (LOX-1) on cells \cite{applications}. This property could lend itself to targeting cancerous cells and delivering chemotherapy drugs encaged in a DNA truncated octahedron nanostructure. The truncated octahedron also turns out to be especially mathematically interesting as a DNA graph, as it is the only smaller order Archimedean graph for which the number of tile types required in Scenario 3 is less than the order of the graph. 

\begin{lemma}\label{truncoctbond} Let $G$ be the truncated octahedron. Then a single bond-edge type may be used to label at most three edges of $G$. \end{lemma}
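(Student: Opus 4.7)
The plan is to follow the case-analysis template established in Lemmas \ref{trunctetrepeat} and \ref{trunccuberepeat}. The truncated octahedron has face configuration $4.6.6$, so under its automorphism group the edge set falls into two orbits: the $24$ edges separating a square face from a hexagonal face, and the $12$ edges separating two hexagonal faces. I would pick one oriented representative $(v_i,v_j)$ from each orbit, assign it bond-edge type $b$, and then rule out every other oriented edge as a candidate carrier of label $b$ except those incident to $(v_i,v_j)$.

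For each representative the argument proceeds in two passes. The first pass handles the ``local'' repetitions: any oriented edge whose endpoints lie within distance one of $\{v_i,v_j\}$ forces a multiple edge in some $H \in \mathcal{O}(P)$, either directly (when labeling with the same type in the permitted orientation admits a parallel edge) or through the swap of two bond-edges, and is discarded. The second pass feeds the adjacency matrix of $G$ into Algorithm \ref{alg:cycles} and flags every remaining oriented edge whose swap with $(v_i,v_j)$ alters the count of $3$-, $4$-, or $5$-cycles. Since the truncated octahedron has no $3$-cycles, the decisive quantities are the counts of $4$-cycles (one per square face) and $5$-cycles: a swap that introduces a spurious $4$-cycle or a new $5$-cycle is an immediate certificate that the resulting graph is nonisomorphic to $G$.

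The hardest step will be disposing of residual oriented edges whose cycle counts happen to be preserved by the swap, mirroring the $(v_3,v_{19})$ and $(v_{15},v_{16})$ exceptions encountered in Lemma \ref{trunccuberepeat}. For each such pair I would attempt to exhibit an explicit automorphism of $G$ realizing the swap and then argue, by combining it with the other forced assignments in any candidate labeling, that some further swap must eventually break the isomorphism, for instance by creating a multiple edge or a new $4$-cycle involving a square face. Because $G$ is $3$-regular and contains no $3$-cycles, the only configuration that survives all of these eliminations is three edges meeting at a common source vertex, which yields the claimed bound of three edges per bond-edge type.
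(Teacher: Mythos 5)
Your overall architecture matches the paper's: split into the two edge orbits (square--hexagon and hexagon--hexagon edges), eliminate neighbors of the representative edge via forced multiple-edges, and let the cycle-counting script (Algorithm \ref{alg:cycles}) dispose of every swap that changes the number of 3-, 4-, or 5-cycles. The gap is in how you plan to finish. The residual oriented edges whose swaps preserve all of these cycle counts are \emph{not} all automorphism-induced in the truncated octahedron: a swap of $(v_1,v_2)$ with $(v_4,v_8)$, $(v_{16},v_{15})$, or $(v_{19},v_{15})$ (and several analogous swaps in the $(v_1,v_3)$ case, e.g.\ with $(v_{14},v_{17})$ or $(v_{19},v_{20})$) produces a graph with unchanged small-cycle counts that is nevertheless nonisomorphic to $G$. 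The paper certifies nonisomorphism for these by exhibiting a subdivision of $K_{3,3}$ in the swapped graph (it is nonplanar, while $G$ is planar), and for the swap with $(v_7,v_{15})$ by observing that the edge $\{v_{15},v_{19}\}$ would then lie in two 4-cycles, a configuration that does not occur in $G$. Your proposal offers no such fallback invariant: once your search for an automorphism realizing one of these swaps fails, you have no certificate either way, and the argument stalls exactly on the edges that matter.

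The second problem is that your closing claim --- that the only surviving configuration is three edges at a common source vertex and that every cycle-preserving exception can eventually be forced into a contradiction --- is wrong in direction for the genuine exceptions. Swapping $(v_1,v_3)$ with $(v_{11},v_8)$ or with $(v_{24},v_{22})$ really does yield a graph isomorphic to $G$, so such non-incident pairs may legitimately share a bond-edge type; the pot constructed in Proposition \ref{prop:truncoctB3} uses exactly such pairs (e.g.\ bond-edge type $c$ on $(v_1,v_3)$ and $(v_{24},v_{22})$). Trying to rule them out by ``breaking the isomorphism'' with further swaps would therefore fail. The paper instead accepts these pairs and caps them: at most two of $(v_1,v_3),(v_{11},v_8),(v_{24},v_{22})$ can carry the same type (a third forces a swap of $(v_{11},v_8)$ with $(v_{24},v_{22})$, which again gives a nonplanar graph), and they cannot be combined with edges incident to $(v_1,v_3)$. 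The bound of three then follows because the only way to reach three edges of one type is the common-source-vertex configuration, while the exceptional configuration contributes at most two. Your proof needs both the planarity-type invariant and this accept-and-bound treatment of the isomorphic swaps to be complete.
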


\begin{proof}
Note that, up to symmetry, there are two types of edges $G$ -- those that are contained within a 4-cycle, and those that are contained only within 6-cycles. Suppose $\lambda(v_1,\{v_1,v_2\})=b$, noting that $\{v_1,v_2\}$ is contained within a 4-cycle. If any edge $(w,v_3),(w,v_{23}),(v_6,w)$, or $(v_{24},w)$ is labeled using bond-edge type $b$, then the resulting pot $P$ is such that there exists $H \in \mathcal{O}(P)$ with a multiple-edge. Together with the edges incident with $\{v_1,v_2\}$, this accounts for 16 of the 70 unlabeled oriented edges in $G$. For 13 of the remaining unlabeled oriented edges, a swap with $(v_1,v_2)$ results in a graph with a 3-cycle (see Figure \ref{fig:trunccube}). Consider the edges $(v_4,v_8)$, $(v_5,v_9)$, $(v_7,v_{15})$, $(v_{10},v_{18})$, $(v_{16},v_{15})$, $(v_{18},v_{17})$, $(v_{19},v_{15})$, and $(v_{18},v_{22})$. Due to the symmetry of $G$, it is sufficient to only consider the edges $(v_4,v_8),(v_7,v_{15}),(v_{16},v_{16}),$ and $(v_{19},v_{15})$. A swap of $(v_1,v_2)$ with $(v_4,v_8)$, $(v_{16},v_{15})$, or $(v_{19},v_{15})$ results in a nonplanar graph; the vertex partitions $\{\{v_1,v_2,v_7\},\{v_3,v_8,v_{23}\}\}$, $\{\{v_2,v_{22},v_{23}\},\{v_6,v_{16},v_{24}\}\}$, $\{\{v_2,v_{22},v_{23}\},\{v_5,v_{19},v_{24}\}\}$ form a subdivison of $K_{3,3}$ in the graph after a swap with $(v_4,v_8)$, $(v_{16},v_{15})$, and $(v_{19},v_{15})$, respectively (an example is illustrated in Figure \ref{fig:truncoct}). If edges $(v_1,v_2)$ and $(v_7,v_{15})$ are swapped, then the edge $(v_{15},v_{19})$ will be contained within two 4-cycles, and such an edge does not exist in $G$. For the 33 remaining unlabeled oriented edges it may be computationally verified that a swap with edge $(v_1,v_2)$ results in a graph with a different number of 4-cycles. Thus, when $\lambda(v_1,v_2)=b$, the only other edges that can be labeled with bond-edge type $b$ are those with initial vertex $v_1$ or terminal vertex $v_2$.

Now, suppose $\lambda(v_1,v_3)=v$, noting that $\{v_1,v_3\}$ is not contained within any 4-cycle. In an identical fashion to the analysis of edge $(v_1,v_2)$, 16 of the 70 unlabeld oriented edges are either incident with $\{v_1,v_3\}$ or will produce a graph with a multiple-edge if swapped with $(v_1,v_3)$. For 12 of the remaining unlabeled oriented edges, a swap with $(v_1,v_3)$ results in a graph with a 3-cycle. If any of the edges $\{v_{10}, v_{18}\}$, $\{v_{12}, v_9\}$, $\{v_{13}, v_{16}\}$, $\{v_{20}, v_{21}\}$ are labeled with bond-edge type $a$, then there exists a graph $H \in \mathcal{O}(P)$ containing a 5-cycle. Consider the edges $(v_2, v_6)$, $(v_5, v_4)$, $(v_6, v_{10})$, $(v_9, v_5)$, $(v_{14}, v_{17})$, $(v_{15}, v_7)$, $(v_{16}, v_{15})$, $(v_{17}, v_{14})$, $(v_{19}, v_{20})$, and $(v_{23}, v_{19})$. Due to the symmetry of $G$, it is sufficient to only consider the edges $(v_{14}, v_{17})$, $(v_{15}, v_7)$, $(v_{16}, v_{15})$, $(v_{17}, v_{14})$, $(v_{19}, v_{20})$, and $(v_{23}, v_{19})$. A swap of $(v_1,v_3)$ with $(v_{14},v_{17})$, $(v_{15},v_7)$, $(v_{16},v_{15})$, $(v_{17},v_{14})$, $(v_{19},v_{20})$, or $(v_{23}, v_{19})$ results in a nonplanar graph ($\{\{v_1,v_9,v_{24}\},\{v_2,v_{14},v_{17}\}\}$, $\{\{v_1,v_3,v_{11}\},$ $\{v_5,v_7,v_{15}\}\}$, $\{\{v_3,v_8,v_{15}\},$ $\{v_4,v_7,v_{16}\}\}$, $\{\{v_1,v_3,v_{10}\},$ $\{v_6,v_{14},v_{17}\}\}$, $\{\{v_1,v_3,v_{24}\},$ $\{v_2,v_{19},v_{20}\}\}$, and $\{\{v_1,v_3,v_{24}\},$ $\{v_2,v_{19},v_{23}\}\}$ are example vertex partitions forming a subdivison of $K_{3,3}$ in the graph resulting from a swap with $(v_{14},v_{17})$,$(v_{15},v_7)$, $(v_{16},v_{15})$, $(v_{17},v_{14})$, $(v_{19},v_{20})$, and $(v_{23}, v_{19})$, respectively). For 26 of the remaining unlabeled oriented edges it may be computationally verified that a swap with edge $(v_1,v_3)$ results in a graph with a different number of 4-cycles. If edges $(v_{11},v_8)$, or $(v_{24},v_{22})$ are swapped with $(v_1,v_3),$ the resulting graph is isomorphic to $G$ via the bijections induced by $\phi(v_3)=v_8$, and $\phi(v_3)=v_{22}$, respectively. (2) $\lambda(v_1,w)=\lambda(w,v_2)$, where $v_1,v_2$ are non-adjacent vertices contained within the same 4-cycle.

We have shown that a bond-edge type may be used to label at most three edges of $G$, in accordance with the condition $\lambda(v,w)=\lambda(v,u)=\lambda(v,t)$. 

Note that labeling either edge incident with $(v_1,v_3)$ and either edge $(v_{11},v_8)$, $(v_{24},v_{22})$ with the same bond-edge type will result in a labeling allowing for a nonisomorphic graph, as shown previously by the analysis beginning with $(v_1,v_2)$. In addition, only two of the $(v_1,v_3), (v_{11},v_8), (v_{24},v_{22})$ may be simultaneously labeled with the same bond-edge type; otherwise, a swap of edges $(v_{11},v_8)$ and $(v_{24},v_{22})$ will result in a nonplanar graph (in the same fashion described earlier in this proof). 

In all cases, we have shown that a maximum of three edges of $G$ may be labeled with the same bond-edge type.
\end{proof}

\begin{proposition}\label{prop:truncoctB3} Let $G$ be the truncated octahedron. Then, $13 \leq B_3(G) \le 18$.
\end{proposition}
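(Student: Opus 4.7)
The plan is to prove the two bounds separately, with the lower bound being the structurally delicate part.

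For the lower bound, I would count edges and invoke Lemma \ref{truncoctbond}. Since $G$ is 3-regular on 24 vertices, $\#E(G) = 36$. Let $n_i$ denote the number of bond-edge types labeling exactly $i$ edges of $G$. Then $n_1 + 2n_2 + 3n_3 = 36$. Suppose for contradiction that $B_3(G) = 12$, so $n_1 + n_2 + n_3 = 12$. Subtracting gives $n_2 + 2n_3 = 24$ and $n_1 = n_3 - 12 \geq 0$, hence $n_3 \geq 12$. By Lemma \ref{truncoctbond}, each bond-edge type labeling three edges has a common source vertex incident to all three; moreover two distinct sources cannot be adjacent, since then their shared edge would force the two source labels to coincide, producing more than three edges of that label. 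Hence the 12 source vertices form an independent set in $G$. The truncated octahedron is bipartite (each face is a 4- or 6-cycle) with both parts of size 12, so any independent set has at most 12 vertices; thus $n_3 = 12$, $n_2 = n_1 = 0$, and the sources exhaust a maximum independent set. Since the 12 sources cover all 36 edges via their outgoing stars ($12 \cdot 3 = 36$), this independent vertex cover must equal one of the bipartition classes of $G$. I would then invoke the SRPS script of \cite{ashworth} on the resulting pot, exhibiting an $H \in \mathcal{O}(P)$ with $\#V(H) < 24$, contradicting Scenario 3 and yielding $B_3(G) \ge 13$.

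For the upper bound, I would construct an explicit pot $P$ with $\#\Sigma(P) = 18$ and verify via SRPS that every $H \in \mathcal{O}(P)$ satisfies $\#V(H) \geq 24$, with equality only if $H \cong G$. A natural design combines a few 3-source vertices (for label economy) with 2-source and singleton-labeled edges strategically placed to disrupt the spectral symmetries that spoil the $B = 12$ case. The construction would be iterative: propose a labeling, use Algorithm \ref{alg:cycles} to identify pairs of non-incident edges whose swap preserves cycle counts (which must therefore receive distinct labels), and invoke SRPS at each stage to confirm that no smaller-order graph is realized. The tile multiset would be displayed in the write-up to conclude $B_3(G) \leq 18$.

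The main obstacle lies in the spectrum step of the lower bound. Once the labeling is pinned to one bipartition class of 3-sources, the construction matrix has 12 rows and 24 columns, so the spectrum has up to 11 dimensions of freedom after normalization. I must demonstrate that this space contains a nonnegative rational point whose least common denominator is strictly less than 24, corresponding to a realizable smaller graph. Conceptually, the central involution of the truncated octahedron (pairing antipodal vertices) yields a natural degree-2 quotient on 12 vertices, and the corresponding tile-proportion vector should lie in the spectrum; exhibiting this quotient explicitly, or delegating the verification to the SRPS solver, would complete the contradiction. The upper-bound construction is less delicate but requires patient trial-and-verification with both scripts to land on a pot of exactly 18 bond-edge types that realizes $G$ in Scenario 3.
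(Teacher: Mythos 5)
Your proposal follows essentially the same route as the paper: the edge-counting argument forcing twelve independent source vertices, each spending one bond-edge type on all three of its incident edges, is just a more explicit version of the paper's reduction of the $\#\Sigma(P)=12$ case, and the upper bound is likewise handled by an explicit 18-type pot (one type per pair of edges) verified with the SRPS script. The only divergence is at the step you flag as the main obstacle: instead of searching the high-dimensional spectrum or invoking the antipodal quotient (whose realizability by the pot is not obvious, since antipodal sources carry different bond-edge types), the paper simply exhibits an order-6 graph in $\mathcal{O}(P)$ --- three copies of a non-source tile $\{\hat{a},\hat{b},\hat{c}\}$ assembled with one copy each of the adjacent source tiles $\{a^3\},\{b^3\},\{c^3\}$, forming $K_{3,3}$ --- which yields the contradiction immediately and with no computation.
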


\begin{proof}  
As shown in Lemma \ref{truncoctbond}, a single bond-edge type may be used to label a maximum of three edges of $G$. Since $G$ has 36 edges, $B_3(G) \geq 12$. Following the logic of the proof of Lemma \ref{truncoctbond}, a pot $P$ such that $\#\Sigma(P)=12$ and $P$ realizes $G$ may only be obtained by using the same bond-edge type to label all three edges incident with a given vertex. To achieve this labeling, select a maximal independent set of vertices of $G$ (which must consist of two vertices from each 4-cycle) to serve as source vertices. This results in a labeling of $G$ using twelve bond-edge types. However, such a labeling results in a pot $P$ such that there exists $H \in \mathcal{O}(P)$ with $\#V(H)=6$ (consider $H$ realized by three copies of the tile type corresponding to a non-source vertex together with one copy each of the tile types corresponding the three adjacent source vertices). Thus, twelve bond-edge types is insufficient for labeling $G$ in Scenario 3, and $B_3(G) \geq 13.$

The following pot $P$, derived from using each bond-edge type to label two edges of $G$ in accordance with the conditions outlined in the proof of Lemma \ref{truncoctbond}, is such that $\#\Sigma(P)=18$ and $P$ realizes $G$ (see Figure \ref{fig:truncoct}).

\begin{align*} P = \{\{a,b,c\},\{d,e,f\},\{g,h,i\},\{j,k,l\},\{m,n,o\},\{p,q,r\}, \{\hat{a}^2,i\},\{\hat{c},\hat{e}^2\},\{\hat{f},\hat{g}^2\},\{\hat{d}^2,\hat{l}\}, \{\hat{h}^2,\hat{r}\}, \\ \{\hat{i},\hat{j}^2\}, \{\hat{k}^2,\hat{o}\},\{\hat{f},\hat{m}^2\},\{\hat{l},\hat{p}^2\},\{\hat{n}^2,\hat{r}\},\{\hat{c},\hat{q}^2\},\{\hat{b}^2,\hat{o}\}\} \end{align*}

\end{proof}

\begin{proposition}\label{prop:truncoctT3}
Let $G$ be the truncated octahedron. Then, $T_3(G) = 18$. 
\end{proposition}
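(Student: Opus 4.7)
The upper bound $T_3(G) \le 18$ is immediate from the pot exhibited in the proof of Proposition \ref{prop:truncoctB3}, which realizes $G$ in Scenario 3 using exactly 18 distinct tile types. The matching lower bound is where the work lies; the plan is to exploit the fact that the six square faces of $G$ are mutually vertex-disjoint and together account for all 24 vertices, combined with the restrictions of Lemma \ref{truncoctbond} on how a bond-edge type may be repeated.

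First I would show that each square face must carry at least three distinct tile types. Suppose a square $v_1 v_2 v_3 v_4$ used only two tile types, necessarily in the pattern $\lambda(v_1) = \lambda(v_3) = t_1$ and $\lambda(v_2) = \lambda(v_4) = t_2$. Tracing cohesive ends around the cycle and enforcing that the multisets of ends incident to $v_2$ and to $v_4$ coincide forces either a single bond-edge type to label all four square edges, or two bond-edge types to each label two square edges in configurations that share neither a common source nor a common terminal. Either outcome contradicts Lemma \ref{truncoctbond}.

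Second I would show that no tile type can appear in two distinct squares. Suppose tile type $t$ is used at $v \in S$ and $u \in S'$ with $S \ne S'$. If $t$ has three distinct cohesive ends, then applying Lemma \ref{truncoctbond} to each of the three bond-edges present in $t$ forces, for each, the two edges (one at $v$ and one at $u$) to share an endpoint; this forces $v$ and $u$ to have identical three-element neighborhoods, which is impossible in the truncated octahedron since diagonal vertices in a square share exactly two common neighbors while all other pairs share at most one. If instead $t$ has a repeated cohesive end, the corresponding bond-edge type is used at least four times in $G$ (at least twice at each of $v$ and $u$), again contradicting Lemma \ref{truncoctbond}. The main obstacle is the case analysis in this second step, where one must treat all combinations of un-hatted and hatted cohesive ends within $t$ and verify that the forced endpoint-sharing alignments are incompatible with the graph's structure. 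Once both steps are established, the six vertex-disjoint squares each contribute three tile types not shared elsewhere, giving $T_3(G) \ge 6 \cdot 3 = 18$ and matching the upper bound.
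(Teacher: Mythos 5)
Your proposal takes essentially the same route as the paper's proof: the upper bound is the 18-tile pot from Proposition \ref{prop:truncoctB3}, and the lower bound comes from showing, via the repetition restrictions of Lemma \ref{truncoctbond}, that each of the six vertex-disjoint square faces requires three tile types that cannot recur in any other square, giving $6\cdot 3 = 18$. The paper asserts these same two facts more tersely (a tile type can repeat only at the diagonally opposite vertex of its own 4-cycle, and within a 4-cycle both diagonal pairs cannot simultaneously repeat); your added case analysis is consistent with that, with the minor caveat that Lemma \ref{truncoctbond} does allow certain non-incident repetitions on hexagon-only edges, so the assertion that same-type edges ``must share an endpoint'' is slightly stronger than what the lemma actually provides, though the cross-square argument survives because at least one forced coincidence lands on square edges of two disjoint squares.
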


\begin{proof} 
If a tile type is used to label any vertex $v$, then by the logic of the proof of Lemma \ref{truncoctbond}, that tile type may only be used to label the vertex non-adjacent to $v$ within the same 4-cycle; otherwise, at least one bond-edge type would be used to label two non-incident edges in different 4-cycles. Note that within any 4-cycle it is not possible to use only two tile types, as once one pair of non-adjacent vertices of the cycle are labeled with the same tile type, the other pair of non-adjacent vertices cannot possibly be labeled to match. Thus each 4-cycle requires at least three tile types, and $T_3(G) \geq 18$. The pot given in the proof of Proposition \ref{cuboctB3} is such that $\#P=18$.
\end{proof}

        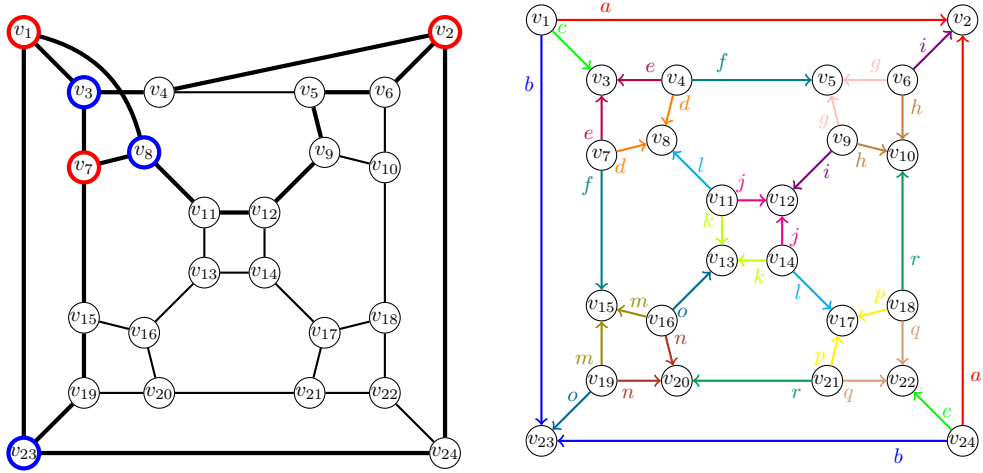
\begin{figure}[hbt!]
        \centering
        \begin{tikzpicture}[transform shape, scale = 0.8]
            \node[main node] (11) at (0,0){$v_{11}$};
            \node[main node] (12) at (1,0){$v_{12}$};
            \node[main node] (13) at (0,-1){$v_{13}$};
            \node[main node] (14) at (1,-1){$v_{14}$};
            \node[main node,draw=blue, ultra thick] (8) at (-1,1){$v_8$};
            \node[main node] (9) at (2,1){$v_9$};
            \node[main node] (16) at (-1,-2){$v_{16}$};
            \node[main node] (17) at (2,-2){$v_{17}$};
            \node[main node,draw=blue, ultra thick] (3) at (-2,2){$v_3$};
            \node[main node] (6) at (3,2){$v_6$};
            \node[main node] (19) at (-2,-3){$v_{19}$};
            \node[main node] (22) at (3,-3){$v_{22}$};
            \node[main node, draw=red, ultra thick] (1) at (-3,3){$v_1$};
            \node[main node, draw=red, ultra thick] (2) at (4,3){$v_2$};
            \node[main node,draw=blue, ultra thick] (23) at (-3,-4){$v_{23}$};
            \node[main node] (24) at (4,-4){$v_{24}$};
            \node[main node] (4) at (-0.75,2){$v_4$};
            \node[main node] (5) at (1.75,2){$v_5$};
            \node[main node] (20) at (-0.75,-3){$v_{20}$};
            \node[main node] (21) at (1.75,-3){$v_{21}$};
            \node[main node, draw=red, ultra thick] (7) at (-2,0.75){$v_7$};
            \node[main node] (15) at (-2,-1.75){$v_{15}$};
            \node[main node] (10) at (3,0.75){$v_{10}$};
            \node[main node] (18) at (3,-1.75){$v_{18}$};

            \path[draw,ultra thick]
            (1) edge node []{} (23)
            (1) edge node []{} (3)
            (4) edge node []{} (3)
            (7) edge node []{} (3)
            (11) edge node []{} (8)
            (11) edge node []{} (12)
            (9) edge node []{} (12)
            (9) edge node []{} (5)
            (6) edge node []{} (5)
            (6) edge node []{} (2)
            (24) edge node []{} (2)
            (24) edge node []{} (23)
            (19) edge node []{} (23)
            (19) edge node []{} (15)
            (7) edge node []{} (15)
            (7) edge node []{} (8)

            (1) edge [bend left] node []{} (8)
            (4) edge node []{} (2);
            
            \path[draw,thick]
            (24) edge node []{} (22)
            (4) edge node []{} (5)
            (6) edge node []{} (10)
            (9) edge node []{} (10)
            (14) edge node []{} (12)
            (14) edge node []{} (13)
            (11) edge node []{} (13)
            (14) edge node []{} (17)
            (16) edge node []{} (15)
            (19) edge node []{} (20)
            (16) edge node []{} (20)
            (16) edge node []{} (13)
            (18) edge node []{} (17)
            (21) edge node []{} (17)
            (18) edge node []{} (22)
            (21) edge node []{} (22)
            (18) edge node []{} (10)
            (21) edge node []{} (20);
        \end{tikzpicture}
        \hspace{0.5cm}
        \begin{tikzpicture}[transform shape, scale = 0.8]
            \node[main node] (11) at (0,0){$v_{11}$};
            \node[main node] (12) at (1,0){$v_{12}$};
            \node[main node] (13) at (0,-1){$v_{13}$};
            \node[main node] (14) at (1,-1){$v_{14}$};
            \node[main node] (8) at (-1,1){$v_8$};
            \node[main node] (9) at (2,1){$v_9$};
            \node[main node] (16) at (-1,-2){$v_{16}$};
            \node[main node] (17) at (2,-2){$v_{17}$};
            \node[main node] (3) at (-2,2){$v_3$};
            \node[main node] (6) at (3,2){$v_6$};
            \node[main node] (19) at (-2,-3){$v_{19}$};
            \node[main node] (22) at (3,-3){$v_{22}$};
            \node[main node] (1) at (-3,3){$v_1$};
            \node[main node] (2) at (4,3){$v_2$};
            \node[main node] (23) at (-3,-4){$v_{23}$};
            \node[main node] (24) at (4,-4){$v_{24}$};
            \node[main node] (4) at (-0.75,2){$v_4$};
            \node[main node] (5) at (1.75,2){$v_5$};
            \node[main node] (20) at (-0.75,-3){$v_{20}$};
            \node[main node] (21) at (1.75,-3){$v_{21}$};
            \node[main node] (7) at (-2,0.75){$v_7$};
            \node[main node] (15) at (-2,-1.75){$v_{15}$};
            \node[main node] (10) at (3,0.75){$v_{10}$};
            \node[main node] (18) at (3,-1.75){$v_{18}$};
            
            \path[draw,thick, color=red, ->]
            (1) edge node [very near start, above]{$a$} (2)
            (24) edge node [very near start, right]{$a$} (2);

            \path[draw,thick, color=blue, ->]
            (1) edge node [very near start, left]{$b$} (23)
            (24) edge node [very near start, below]{$b$} (23);

            \path[draw,thick, color=green, ->]
            (1) edge node [near start, above]{$c$} (3)
            (24) edge node [very near start, above]{$c$} (22);

            \path[draw,thick, color=orange, ->]
            (4) edge node [near start, right]{$d$} (8)
            (7) edge node [very near start, below]{$d$} (8);

            \path[draw,thick, color=purple, ->]
            (4) edge node [near start, above]{$e$} (3)
            (7) edge node [very near start, left]{$e$} (3);

            \path[draw,thick, color=teal, ->]
            (4) edge node [near start, above]{$f$} (5)
            (7) edge node [very near start, left]{$f$} (15);

            \path[draw,thick, color=pink, ->]
            (6) edge node [near start, above]{$g$} (5)
            (9) edge node [very near start, left]{$g$} (5);

            \path[draw,thick, color=brown, ->]
            (6) edge node [near start, right]{$h$} (10)
            (9) edge node [very near start, below]{$h$} (10);

            \path[draw,thick, color=violet, ->]
            (6) edge node [near start, above]{$i$} (2)
            (9) edge node [very near start, below]{$i$} (12);

            \path[draw,thick, color=magenta, ->]
            (14) edge node [near start, right]{$j$} (12)
            (11) edge node [very near start, above]{$j$} (12);

            \path[draw,thick, color=lime, ->]
            (14) edge node [near start, below]{$k$} (13)
            (11) edge node [very near start, left]{$k$} (13);

            \path[draw,thick, color=cyan, ->]
            (11) edge node [near start, above]{$l$} (8)
            (14) edge node [very near start, below]{$l$} (17);
            
            \path[draw,thick, color=olive, ->]
            (16) edge node [near start, above]{$m$} (15)
            (19) edge node [very near start, left]{$m$} (15);

            \path[draw,thick, color=Mahogany, ->]
            (19) edge node [near start, below]{$n$} (20)
            (16) edge node [very near start, right]{$n$} (20);

            \path[draw,thick, color=MidnightBlue, ->]
            (16) edge node [near start, below]{$o$} (13)
            (19) edge node [very near start, left]{$o$} (23);

            \path[draw,thick, color=yellow, ->]
            (18) edge node [near start, above]{$p$} (17)
            (21) edge node [near start, left]{$p$} (17);

            \path[draw,thick, color=Tan, ->]
            (18) edge node [near start, right]{$q$} (22)
            (21) edge node [very near start, below]{$q$} (22);

            \path[draw,thick, color=ForestGreen, ->]
            (18) edge node [near start, right]{$r$} (10)
            (21) edge node [very near start, below]{$r$} (20);
        \end{tikzpicture}
            \caption{Subdivision of $K_{3,3}$ minor formed when $\lambda(v_1,v_2)=\lambda(v_4,v_8)$ (left); Scenario 3 labeling of truncated octahedron (right).}
            \label{fig:truncoct}
    \end{figure}

As we have only determined a range of values for $B_3$ for the truncated octahedron, and the pot $P$ achieving $T_3$ is such that $\#\Sigma(P)$ significantly exceeds the lower bound for $B_3$, we conjecture that the truncated octahedron is another example of a graph for which a biminimal pot does not exist.

\section{Conclusion}
 While all graphs called for the maximum number of tile types in Scenario 3, the minimization of $B_3$, particularly for the order 12 graphs, means that the DNA can be cut with fewer different types of restriction enzymes, which can be notoriously expensive. Thus, even in the strictest conditions, the cost of producing Archimedean solid nanostructures could be lessened. Given the biological applications of polyhedral structures discussed in the introduction, if it is cheaper to assemble them, there is a possibility that related therapies and treatments could become more widely accessible. One direction for future work is finding pots to achieve lower bounds for bond-edge types or prove such pots do not exist for the small rhombicuboctahedron, the truncated cube, and the truncated octahedron. Furthermore, it would be interesting to explore whether a biminimal pot exists for the truncated octahedron. Finally, the Archimedean graphs of order greater than 24 have yet to be considered.

\bibliography{bibliography}

\end{document}